%%%%%%%%%%%%%%%%%%%%%%% file template.tex %%%%%%%%%%%%%%%%%%%%%%%%%
%
% This is a general template file for the LaTeX package SVJour3
% for Springer journals.          Springer Heidelberg 2010/09/16
%
% Copy it to a new file with a new name and use it as the basis
% for your article. Delete % signs as needed.
%
% This template includes a few options for different layouts and
% content for various journals. Please consult a previous issue of
% your journal as needed.
%
%%%%%%%%%%%%%%%%%%%%%%%%%%%%%%%%%%%%%%%%%%%%%%%%%%%%%%%%%%%%%%%%%%%
%
% First comes an example EPS file -- just ignore it and
% proceed on the \documentclass line
% your LaTeX will extract the file if required
% [arxiv_v2: filecontents  stripped, 198 chars]
\RequirePackage{fix-cm}
\documentclass[twocolumn]{svjour3}          % twocolumn
\smartqed  % flush right qed marks, e.g. at end of proof
\usepackage{graphicx}
\usepackage{upgreek}
\usepackage{thmtools,thm-restate}
\usepackage{tikz}
\usepackage{pgfplots}
\usepackage{pgfkeys}
\usepackage{enumitem}   
\usepackage[misc]{ifsym}
\usepackage{footmisc}
\usepackage{subcaption}
\usepackage{amsfonts}
\usepackage[ruled,lined,linesnumbered]{algorithm2e} 
\usepackage{algorithmic}
\usepackage{listings}
\usepackage{etoolbox}
% \usepackage{amsart}

% Variables
\newcommand{\mec}{\mathcal{M}}
\newcommand{\mecld}{M_{LD}}
\newcommand{\mecsld}{M_{SLD}}
\newcommand{\mecsldbar}{\overline{M}_{SLD}}
\newcommand{\mecexp}{M_{EM}}
\newcommand{\mecpf}{M_{PF}}
\newcommand{\domain}{\mathcal{D}^n}
\newcommand{\error}{\mathcal{E}}
\newcommand{\range}{\mathcal{R}}
\newcommand{\real}{\mathbb{R}}
\newcommand{\expected}{\mathbb{E}}
\newcommand{\deltabar}{\bar{\delta}}
\newcommand{\deltau}{\delta^u}

\newcommand{\AAA}{\mathcal{A}}
\newcommand{\TT}{\mathcal{T}}
\newcommand{\ttt}{\tau}
\newcommand{\tj}{\ttt_j^A}
\newcommand{\tjc}{\ttt_{j,c}^A}
\newcommand{\tjx}[1]{\ttt_j^{A,#1}}
\newcommand{\tjcx}[1]{\ttt_{j,c}^{A,#1}}

\newcommand{\umed}{u_{med}}

\makeatletter
\newcommand{\vast}{\bBigg@{3}}
\newcommand{\vvast}{\bBigg@{4}}
\newcommand{\Vast}{\bBigg@{5}}

\usepackage[hidelinks]{hyperref}

\usepackage{mathtools}
\DeclarePairedDelimiter{\ceil}{\lceil}{\rceil}

\allowdisplaybreaks

\makeatletter
\tikzset{
    database/.style={
        path picture={
            \draw (0, 1.5*\database@segmentheight) circle [x radius=\database@radius,y radius=\database@aspectratio*\database@radius];
            \draw (-\database@radius, 0.5*\database@segmentheight) arc [start angle=180,end angle=360,x radius=\database@radius, y radius=\database@aspectratio*\database@radius];
            \draw (-\database@radius,-0.5*\database@segmentheight) arc [start angle=180,end angle=360,x radius=\database@radius, y radius=\database@aspectratio*\database@radius];
            \draw (-\database@radius,1.5*\database@segmentheight) -- ++(0,-3*\database@segmentheight) arc [start angle=180,end angle=360,x radius=\database@radius, y radius=\database@aspectratio*\database@radius] -- ++(0,3*\database@segmentheight);
        },
        minimum width=2*\database@radius + \pgflinewidth,
        minimum height=3*\database@segmentheight + 2*\database@aspectratio*\database@radius + \pgflinewidth,
    },
    database segment height/.store in=\database@segmentheight,
    database radius/.store in=\database@radius,
    database aspect ratio/.store in=\database@aspectratio,
    database segment height=0.1cm,
    database radius=0.25cm,
    database aspect ratio=0.35,
}

\makeatother

\newenvironment{customlegend}[1][]{
\begingroup
% inits/clears the lists (which might be populated from previous
% axes):
\csname pgfplots@init@cleared@structures\endcsname
\pgfplotsset{#1}
}{%
% draws the legend:
\csname pgfplots@createlegend\endcsname
\endgroup
}%
% makes \addlegendimage available (typically only available within an
% axis environment):
\def\addlegendimage{\csname pgfplots@addlegendimage\endcsname}

\tikzstyle{vertex}=[circle,minimum size=15pt,inner sep=0pt, draw=black]
% \tikzstyle{selected vertex}=[circle,fill=black!25,minimum size=15pt,inner sep=0pt, draw=black]
\tikzstyle{edge} = [draw,-]
\tikzstyle{dashed edge} = [draw,-,densely dashed]

\providetoggle{vldb}
\settoggle{vldb}{false}

%
% \usepackage{mathptmx}      % use Times fonts if available on your TeX system
%
% insert here the call for the packages your document requires
%\usepackage{latexsym}
% etc.
%
% please place your own definitions here and don't use \def but
% \newcommand{}{}
%
% Insert the name of "your journal" with
\journalname{VLDB Journal}

\begin{document}

\iftoggle{vldb}{
  {\bfseries\LARGE Cover letter \par}	
\setcounter{page}{0}

% \section{Meta-reviewer}

We would like to thank the Editors-in-Chief of PVLDB Volume 14 for the invitation to submit an extended version of our PVLDB paper ``Local Dampening: Differential Privacy for Non-Numeric Queries via Local Sensitivity'' to the special issue of the VLDB Journal for the best papers from PVLDB volume 14. Please find our paper attached to this letter.

We added significant new technical content/contributions to the PVLDB paper, extending it by $30\%$ as described below.

\begin{itemize}
    \item We coin the term \textit{sensitivity functions} as a generic notion to denote the local sensitivity notions presented in the paper and, also, functions that compute the upper bound on the local sensitivity. Additionally, we create a classification for the sensitivity functions on admissibility, boundedness, monotonicity and stability to support the privacy guarantee analysis and the theoretical accuracy analysis (Section \ref{sec:sensitivity_functions}).
    \item We provide a new theoretical and empirical accuracy analysis where we provide tools to compare two instances of the local dampening mechanism. We show that the exponential mechanism is an instance of the local dampening mechanism and we prove that, for stable sensitivity functions, the exponential mechanism is the worst instance of the local dampening mechanism in terms of accuracy. Moreover, we discuss that the local dampening mechanism can still be accurate for non-stable sensitivity functions (Section \ref{section:accuracy_analysis}).
    \item We tackle a new application, the percentile selection problem, where the task is to release the label of the $p$-th percentile element. Empirical results show that the local dampening mechanism can improve up to $73\%$ on global sensitivity based approaches (Section \ref{section:percentile_selection}).
    \item We add a new related work, the permute-and-flip mechanism, that is also a non-numeric mechanism like the exponential mechanism. We carry out a new experimental comparison of the permute-and-flip mechanism to the local dampening mechanism and the exponential mechanism on the percentile selection problem and on the node influential analysis problem. In the percentile selection problem, the permute-and-flip mechanism had about the same accuracy as the exponential mechanism, the local dampening mechanism could improve by up to $73\%$ on the permute-and-flip. In the node influential node analysis, the permute-and-flip could reduce the use of privacy budget by one order of magnitude compared to the exponential. Nevertheless, the local dampening reduce the use of privacy budget by 2 to 3 orders of magnitude. (Sections \ref{section:percentile_selection} and \ref{section:influential_node_analysis}). 
    \item We rewrote several parts of the paper for better understating, redesigned the paper structure and enhanced the examples.
\end{itemize}

We would be glad to consider the comments/suggestions of the reviewers. 

Please let us know if you need any additional information from us.

Best regards,

}

\sloppy

\title{Local Dampening: Differential Privacy for Non-numeric Queries via Local Sensitivity%\thanks{Grants or other notes
%about the article that should go on the front page should be
%placed here. General acknowledgments should be placed at the end of the article.}
}
% \subtitle{Do you have a subtitle?\\ If so, write it here}

% \titlerunning{Short form of title}        % if too long for running head

\author{Victor A. E. Farias  \and Felipe T. Brito \and Cheryl Flynn \and Javam C. Machado \and Subhabrata Majumdar \and Divesh Srivastava.
}

% \authorrunning{Short form of author list} % if too long for running head

% \institute{\Letter Victor A. E. Farias\footnote{\label{ufc} This is the labeled footnote} \at
%               \email{victor.farias@lsbd.ufc.br}           %  \\
% %             \emph{Present address:} of F. Author  %  if needed
%            \and
%            S. Author 1 \at
%               second address
% }

% Felipe T. Brito \and Cheryl Flynn \and Javam C. Machado \and Subhabrata Majumdar \and Divesh Srivastava

\institute{Victor A. E. Farias  \at
        Universidade Federal do Ceará, Fortaleza, Brazil  \\
              % Tel.: +123-45-678910\\
              % Fax: +123-45-678910\\
              \email{victor.farias@lsbd.ufc.br}           %  \\
%             \emph{Present address:} of F. Author  %  if needed
           \and
           Felipe T. Brito \at
           Universidade Federal do Ceará, Fortaleza, Brazil \\ 
           \email{felipe.timbo@lsbd.ufc.br}         
           \and
           Cheryl Flynn Brooks \at
           AT\&T Chief Data Office, Bedminster, NJ, USA \\ 
           \email{cflynn@research.att.com}
           \and
           Javam C. Machado \at
           Universidade Federal do Ceará, Fortaleza, Brazil \\ 
           \email{javam.machado@lsbd.ufc.br}         
          \and
          Subhabrata Majumdar \at
           Splunk, Seattle, WA, USA \\ 
           \email{majumdar@splunk.com}         
          \and
          Divesh Srivastava \at
           AT\&T Chief Data Office, Bedminster, NJ, USA \\ 
           \email{divesh@research.att.com}
}

\date{Received: date / Accepted: date}
% The correct dates will be entered by the editor

\maketitle

\makeatletter
% Default:
% \def\@makefnmark{\hbox{\@textsuperscript{\normalfont\@thefnmark}}}
% \renewcommand{\@makefnmark}{\makebox{\@thefnmark}}
% \makeatother

% \renewcommand{\thefootnote}{ {\normalsize \Letter} }

% \footnote{Automatically generated footnote markers work fine!}

\begin{abstract}

  \textit{Differential privacy} is the state-of-the-art formal definition for data release under strong privacy guarantees. A variety of mechanisms have been proposed in the literature for releasing the output of numeric queries (e.g., the Laplace mechanism and smooth sensitivity mechanism). Those mechanisms guarantee differential privacy by adding noise to the true query's output. The amount of noise added is calibrated by the notions of global sensitivity and local sensitivity of the query that measure the impact of the addition or removal of an individual on the query's output. Mechanisms that use local sensitivity add less noise and, consequently, have a more accurate answer. However, although there has been some work on generic mechanisms for releasing the output of non-numeric queries using global sensitivity (e.g., the Exponential mechanism), the literature lacks generic mechanisms for releasing the output of non-numeric queries using local sensitivity to reduce the noise in the query's output. In this work, we remedy this shortcoming and present the \textit{local dampening mechanism}. We adapt the notion of local sensitivity for the non-numeric setting and leverage it to design a generic non-numeric mechanism. We provide theoretical comparisons to the exponential mechanism and show under which conditions the local dampening mechanism is more accurate than the exponential mechanism. We illustrate the effectiveness of the local dampening mechanism by applying it to three diverse problems: (i) percentile selection problem. We report the $p$-th element in the database; (ii) Influential node analysis. Given an influence metric, we release the top-k most influential nodes while preserving the privacy of the relationship between nodes in the network; (iii) Decision tree induction. We provide a private adaptation to the ID3 algorithm to build decision trees from a given tabular dataset. Experimental evaluation shows that we can reduce the error for percentile selection application up to $73\%$, reduce the use of privacy budget by $2$ to $4$ orders of magnitude for influential node analysis application and increase accuracy up to $12\%$ for decision tree induction when compared to global sensitivity based approaches.
\keywords{Differential Privacy \and Decision Tree Induction \and Graph Analysis \and Local Sensitivity \and Percentile Selection}
% \PACS{PACS code1 \and PACS code2 \and more}
% \subclass{MSC code1 \and MSC code2 \and more}
\end{abstract}

\section{Introduction}
\label{sec:introduction}
\thispagestyle{empty}

Recent regulations on data privacy, such as \textit{General Data Protection Regulation (GPDR)} \cite{EUdataregulations2018} and \textit{Lei Geral de Proteção de Dados Pessoais (LGPD)} \cite{lgpd}, pose strict privacy requirements when gathering, storing and sharing data. Specifically, they require that an individual's information be rendered anonymous so that the individual is no longer identifiable from the published information. 

\textit{Differential privacy} \cite{dwork2011differential,dwork2006calibrating} is the state-of-the-art formal definition for data release under strong privacy guarantees. It imposes near-indistinguishability on the released information whether an individual belongs to a sensitive database or not. The key intuition is that the output distribution of a differentially private query should not change significantly based on the presence or absence of an individual. It provides statistical guarantees against the inference of private information through the use of auxiliary information. 

Algorithms can achieve differential privacy by employing output perturbation, which releases the true output of a given non-private query $f$ with noise added. The magnitude of the noise should be large enough to cover the identity of the individuals in the input database $x$. 

For a numeric query (i.e., query with numeric output) $f$, the \textit{Laplace mechanism} \cite{dwork2006calibrating} is a well-known output perturbing private method. It adds numeric noise to the output of $f$ and calibrates the noise based only on $f$ and not on $x$. The noise magnitude is proportional to the concept of \textit{global sensitivity}, which measures the worst-case impact on $f$'s output of the addition or removal of an individual over the set of possible input databases. This may result in an unreasonably high amount of noise when $x$ is far from the database with the worst-case impact, which is the case for many realistic databases. To remedy this, Nissim et al. \cite{nissim2007smooth} proposed to add instance-based noise calibrated as a function of $x$. They introduced the notion of \textit{local sensitivity}, which quantifies the impact of addition or removal of an individual for the database instance $x$, resulting in a lower bound to the global sensitivity. Many works use this notion to shrink the amount of noise added to release more useful statistical information \cite{blocki2013differentially,karwa2011private,kasiviswanathan2013analyzing,lu2014exponential,zhang2015private}.

For the class of non-numeric queries $f$, i.e. $f$ has a non-numeric range $\mathcal{R}$, the \textit{exponential mechanism} \cite{mcsherry2007mechanism} ensures differential privacy by sampling elements from $\mathcal{R}$ using the exponential distribution. This requires a utility function $u(x,r)$ that takes as input a database $x$ and an element $r \in \mathcal{R}$ and outputs a numeric score that measures the utility of $r$. The larger $u(x,r)$, the higher the probability of the exponential mechanism outputting $r$. The exponential mechanism uses a similar notion of global sensitivity to that found in \cite{dwork2006calibrating} where it measures the worst-case impact on the utility  $u(x,r)$ for all elements $r \in \mathcal{R}$ by adding or removing an individual from all databases. However, to the best of our knowledge, the literature lacks generic mechanisms that apply local sensitivity to the non-numeric setting. Example \ref{example:ebc_global} introduces the running example of this paper and discusses the computation of global sensitivity. 

\begin{example}
    Consider an application where, given a graph $G$, the query should report the node with the largest \textit{Egocentric Betweenness Centrality} (EBC) \cite{freeman1978centrality,marsden2002egocentric,everett2005ego}. The EBC metric measures the degree to which nodes stand between each other, defined as 
    $$
        EBC(c) = \sum_{u,v \in N_c | u \neq v}  \frac{g_{uv}(c)}{g_{uv}},
    $$    
    where $g_{uv}$ is the number of geodesic paths connecting $u \neq c$ and $v \neq c$ on the subgraph composed by $c$ and its neighbors $N_c$ and $g_{uv}(c)$ is the number of those paths that include $c$.    
    
    For instance, let $G$ be the graph illustrated in Figure \ref{fig:grapha}. Node $a$ has EBC score equal to $7.5$ since there are ${{6}\choose{2}}=15$ pairs of neighbors of the form $\{v_i,v_j\}$, for $0 \leq i < j \leq 5$, that each contributes with $0.5$ as they have two geodesic paths of length $2$ from $v_i$ to $v_j$, where only one contains $a$. Pairs of the form $\{ b,v_i \}$, for $0 \leq i \leq 5$ do not contribute to the score of $a$ since there is only one geodesic path (length $1$) from $b$ to $v_i$ and it does not contain $a$. Verify that for the graph illustrated in \ref{fig:graphb}, node $a$ has EBC score equal to $6.5$.

\tikzstyle{vertex}=[circle,minimum size=15pt,inner sep=0pt, draw=black]
% \tikzstyle{selected vertex}=[circle,fill=black!25,minimum size=15pt,inner sep=0pt, draw=black]
\tikzstyle{edge} = [draw,-]
\tikzstyle{dashed edge} = [draw,-,densely dashed]

\begin{figure}[ht]
    
    \begin{subfigure}[b]{0.23\textwidth}
        \centering
        \begin{tikzpicture}[]
            \foreach \pos/\name in {{(1.11,0)/a}, {(2.22,0)/b}, {(3.33,-1.5)/v_5}, {(2.67,-1.5)/v_4}, {(2,-1.5)/v_3}, {(1.34,-1.5)/v_2}, {(0.67,-1.5)/v_1}, {(0,-1.5)/v_0}}
            \node[vertex] (\name) at \pos {$\name$};
            
            \foreach \source/ \dest in {a/v_0, a/v_1, a/v_2, a/v_3, a/v_4, a/v_5, b/v_0, b/v_1, b/v_2, b/v_3, b/v_4, b/v_5}
            \path[edge] (\source) -- (\dest);    
            
            \path[dashed edge] (a) -- (b);
            
        \end{tikzpicture}
        \caption{EBC: Worst Case}
        \label{fig:grapha}
    \end{subfigure}
    \begin{subfigure}[b]{0.23\textwidth}
        \centering
        \begin{tikzpicture}[]
            \foreach \pos/\name in {{(1.11,0)/a}, {(2.22,0)/b}, {(3.33,-1.5)/v_5}, {(2.67,-1.5)/v_4}, {(2,-1.5)/v_3}, {(1.34,-1.5)/v_2}, {(0.67,-1.5)/v_1}, {(0,-1.5)/v_0}}
            \node[vertex] (\name) at \pos {$\name$};
            
            \foreach \source/ \dest in {a/b, a/v_1, a/v_2, a/v_3, b/v_2, b/v_3, b/v_4, b/v_5, v_0/v_1, v_4/v_5}
            \path[edge] (\source) -- (\dest);    
            
            \path[dashed edge] (a) -- (v_0);
        \end{tikzpicture}
        
        \caption{EBC: Usual case}
        \label{fig:graphb}
    \end{subfigure}
    
    \caption{Sensitivity of EBC}
    \label{example:ebc_global}
\end{figure}
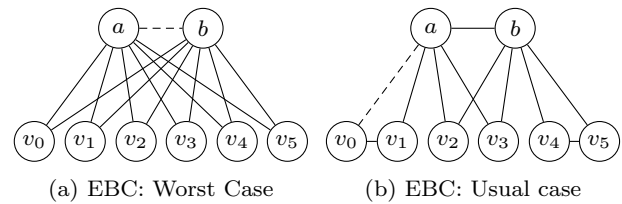

We use edge differential privacy where the sensitivity is measured by adding or removing one edge from the input graph. The global sensitivity for EBC is obtained from a graph of the form displayed in Figure \ref{fig:grapha}. Removing the edge $(a,b)$ is the worst case. The new EBC score of $a$ is $15$, $1$ point for each one of the $15$ pairs $\{ v_i, v_j \}$, $0 \leq i \leq 5$, as now there is only one geodesic path from $v_i$ to $v_j$ which includes $a$ (path $<v_i, a ,v_j>$). The paths of the form $<v_i, b, v_j>$ are not counted since $b$ no longer belongs to $N_a$.

Note this gadget that sets the global sensitivity is formed by two nodes with high degree that share all neighbors and those neighbors do not have an edge to each other. Thus, this gadget is very unlikely to be found in real graphs. Figure \ref{fig:graphb} illustrates a more typical example. In this instance, the worst local measurement of the sensitivity is given by the removal of the edge $(a,v_0)$ that shrinks the EBC score of $a$ by only $3$ ($1$ for each pair $\{v_0, b\}$, $\{v_0, v_2\}$ and $\{v_0, v_3\}$ since $v_0$ is no longer a neighbor of $a$).

\label{example:example1}
\end{example}

Additionally, the global sensitivity of $EBC$ grows quadratically with respect to the maximum degree as we discuss in Section \ref{section:influential_node_analysis}. Hence a non-numeric mechanism applying local sensitivity could add less noise to the output compared to a global sensitivity based approach.

In this paper, we propose the \textit{local dampening mechanism}, which adapts the notion of local sensitivity to the non-numeric setting and uses it to dampen the utility function $u$ in order to increase the signal-to-noise ratio. Local dampening also employs the exponential distribution as the exponential mechanism \cite{mcsherry2007mechanism}. Applications in which local sensitivity is significantly smaller than global sensitivity can benefit from our approach. For the scenario where local sensitivity is near the global sensitivity, the local dampening mechanism reverts to the exponential mechanism, so that the maximum addition of noise across all values of $r$ is bounded by the exponential mechanism.

To this end, we present a new version of the local sensitivity, called {\it element local sensitivity}. Traditional local sensitivity measures the largest impact of the addition or deletion of an individual to the input database over all outputs $r \in \mathcal{R}$. Element local sensitivity computes this impact, but only for some given element $r \in \mathcal{R}$. This allows us to explore local measurements of the sensitivity of $f$ even if traditional local sensitivity is near the global sensitivity, but, for most elements in $\mathcal{R}$, the element local sensitivity is low. 

\begin{example}
Consider the graph in Figure \ref{fig:grapha}. The removal of edge $(a,b)$ sets the traditional local sensitivity to $7.5$ which is also the case for global sensitivity. But measurements of sensitivity per node (element) are much smaller. For instance, the sensitivity for a node $v_i$ ($0 \leq i \leq 5$) is $1$ which is set by the removal of edge $(a,b)$ where $EBC(v_i)$ increases from $0$ to $1$ (path $<a,v_i,b>$). We explore this to improve the accuracy further.
\end{example} 

We illustrate the effectiveness of the local dampening mechanism by applying it to three diverse problems: (i) Percentile selection problem, where the task is to release the label of the $p$-th percentile element; (ii) Influential Node analysis, which searches for central nodes in a graph database. Given a centrality/influence metric, we release the label of the top-k most central nodes while preserving the privacy of the relationships between nodes in the graph; (iii) We also provide an application on tabular data that is a private adaptation to the ID3 algorithm to build a decision tree from a given tabular dataset based on the information gain for each attribute.

The contributions of this work are summarized as follows:
\begin{itemize}
    \item We adapt the local sensitivity definition to the non-numeric setting (Section \ref{section:differential_privacy}) and we introduce a new definition of local sensitivity that measures sensitivity per element (Section \ref{section:local_dampening}).
    \item We introduce the Local Dampening mechanism, a novel differentially private mechanism to answer non-numeric queries that applies local sensitivity to attenuate the utility function to increase the signal-to-sensitivity ratio to reduce noise (Section \ref{section:local_dampening}).
    \item We present a second version of our approach which we call the Shifted Local Dampening mechanism, which can effectively use the element local sensitivity to improve accuracy (Section \ref{section:shifted_local_dampening}).
    \item We develop a theoretical and empirical accuracy analysis where we enumerate some conditions under which the local dampening mechanism benefits from the local sensitivity notions. Under those conditions, we show that the exponential mechanism is an instance of the local dampening mechanism, and it is the worst instance of the local dampening mechanism in terms of accuracy. Also, we discuss the scenario where those conditions are not met and how we can still have good accuracy (Section \ref{section:accuracy_analysis}).
    \item We tackle the Percentile Selection problem where a private mechanism should report the label of the $p$-th percentile element. Empirical results show that the local dampening mechanism can improve up to $73\%$ over global sensitivity based approaches (Section \ref{section:percentile_selection}).
    \item We apply the local dampening mechanism to construct differentially private algorithms for a graph problem called Influential Node Analysis using Egocentric Betweenness Centrality as the influence metric, and we show how to compute local sensitivity for this application. Experimental results show that our approach could be as accurate as global sensitivity-based mechanisms using $2$ to $4$ orders of magnitude less privacy budget than global sensitivity-based approaches (Section \ref{section:influential_node_analysis}). 
    \item We address the application of building private algorithms for decision tree induction as an example data-mining application for tabular data. We present a differentially private adaptation of the entropy-based ID3 algorithm using the local dampening mechanism, and we provide a way to compute the local sensitivity efficiently. We can improve accuracy up to $12\%$ compared to previous works (Section \ref{section:decision_tree_induction}).
\end{itemize}
Section \ref{section:differential_privacy} presents the relevant formal definitions for Differential Privacy. Related work is presented in Section \ref{sec:related_work}, \ref{sec:related_work_influential} and \ref{section:related_work_decision_tree_induction}. Finally, Section 8 concludes the paper. \iftoggle{vldb}{We defer the proofs of the lemmas and theorems to our technical report \cite{ourtechnicalreport}, except Theorem \ref{theorem:priv_local_dampening} where its proof is provided in this paper.}{We defer all the proofs to the appendix, except Theorem \ref{theorem:priv_local_dampening} where its proof is provided in this paper.}

% \textbf{Outline}. This paper is organized as follows. Section \ref{sec:background} presents the formal definitions for Differential Privacy, exponential mechanism and the notions of sensitivity for the non-numeric setting. The Local Dampening Mechanism and the element local sensitivity are introduced in Section \ref{sec:local_dampening}. In Section \ref{sec:tweaking}, we present how to tweak our approach to make a better use of element local sensitivity. We introduce private algorithms for Influential Node Analysis and Decision Tree induction in Section \ref{sec:applications}. We report the results of our experimental evaluation on Section \ref{sec:experiments}. Section \ref{sec:related_work} surveys related work, and Section 8 concludes.

\section{Differential Privacy}
\label{section:differential_privacy}

Let $x$ be a sensitive database and $f$ a function to be evaluated on $x$. The database is represented as vector $x \in \mathcal{D}^n$ where each entry represents an individual tuple. The output $f(x)$ must be released without leaking much information about the individuals. For that, we need to design a randomized algorithm $\mec(x)$ that adds noise to $f(x)$ such that it satisfies the definition of differential privacy stated below.

\begin{definition}
    ($\epsilon$-Differential Privacy \cite{dwork2006our,dwork2006calibrating}). A randomized algorithm $\mathcal{M}$ satisfies $\epsilon$-differential privacy, if for any
    two databases $x$ and $y$ satisfying $d(x,y) \leq 1$ and for any
    possible output $O$ of $\mathcal{M}$, we have
    $$Pr [\mec(x) = O] \leq \exp(\epsilon) Pr [\mec(y) = O],$$
    where $Pr[\cdot]$ denotes the probability of an event and $d$ denotes the hamming distance between the two databases, i.e. the number of tuples of individuals that changed value, i.e., $d(x,y)=|\{ i \ | \ x_i \neq y_i \}|$. We refer to $d$ as the distance between two given databases.
    \label{def:dp}
\end{definition}

The parameter $\epsilon$ dictates how close the distribution of the outputs differs between the databases $x$ and $y$. Small values of $\epsilon$ means that those two distributions must be really close which hurts accuracy but provides a better indistinguishability, i.e., a better privacy level. For large $\epsilon$, the opposite happens, the two distributions can differ more which means a better accuracy and a lower level of privacy. The data querier may want to submit multiple queries to the database. The parameter $\epsilon$ is also called as the \textit{privacy budget} since each query submitted to the database consumes a part of the budget (refer to Section \ref{sec:composition}).

\subsection{The Non-Numeric Setting}
\label{sec:non_numeric_setting}

We consider two settings when building a private mechanism: the numeric setting and the non-numeric setting. The numeric setting is when one needs to construct a private mechanism for a function $f: \domain \rightarrow \range$ where $f$ outputs a vector of numeric values, i.e., $\range = \real^d$. In this case, the Laplace Mechanism applies \cite{dwork2006calibrating}.

In this work, we address the non-numeric setting. We aim to build a private mechanism for a function $f: \domain \rightarrow \range$ where $f$ outputs non-numeric values, i.e., $\range$ denotes a discrete set of outputs $\range = \{ r_1, r_2, r_3, \dots \}$.

In the non-numeric setting, the data querier needs to provide a utility function $u: \mathcal{D}^n \times \mathcal{R} \rightarrow \mathbb{R}$ that takes a database $x$ and an output $r \in \mathcal{R}$ and produces an \textit{utility score} $u(x,r)$. The utility function is application-specific and each application requires its own utility function.

The utility score represents how good an output $r$ is for the dataset $x$. This means that, for a given input database $x$, the analyst prefers that the mechanism outputs the elements with high utility score. Thus a mechanism for answering $f$ needs to output a high utility output with higher probability.

\subsection{Sensitivity in the Non-Numeric Setting}
\label{sec:sensitivity_non_numeric_setting}

Differentially private mechanisms usually perturb the true output with noise. The amount of noise added to the true output of a non-numeric function $f: \domain \rightarrow \range$ is proportional to the \textit{sensitivity} of the utility function $u: \mathcal{D}^n \times \mathcal{R} \rightarrow \mathbb{R}$. The various notions of sensitivity used in this work are presented in this section.

% \subsubsection{Global Sensitivity}

\textbf{Global sensitivity}. The global sensitivity of $u$ is defined as the maximum possible difference of utility scores at all possible pairs of database inputs $x, y$ and all possible elements $r \in \range$:

\begin{definition}
    (Global Sensitivity $\Delta u$ \cite{mcsherry2007mechanism}).
    Given a utility function $u: \mathcal{D}^n \times \mathcal{R} \rightarrow \mathbb{R}$ that takes as input a database $x \in \domain$ and an element $r \in \range$ and outputs a numeric score for $r$ in $x$. The global sensitivity of $u$ is defined as:
    $$ \Delta u = \max_{r \in \mathcal{R}} \max_{x,y | d(x,y) \leq 1} |u(x,r) - u(y,r)|.$$ 
    \label{def:global_sensitivity}
\end{definition}

Intuitively, the global sensitivity measures the maximum change in the utility score over all $r \in \range$ between any two neighboring databases in the universe of all databases.

% \subsection{Local Sensitivity}

\textbf{Local Sensitivity}. In non-numeric mechanims, a larger global sensitivity $\Delta u$ implies a lower accuracy. Thus, when constructing private solutions, we seek mechanisms with low sensitivity. The concept of local sensitivity $LS(x)$ \cite{nissim2007smooth} captures the sensitivity locally on the input database $x$ instead of searching for the sensitivity in the universe of databases $\mathcal{D}^n$. 

Local sensitivity tends to be smaller than global sensitivity for many problems \cite{blocki2013differentially,karwa2011private,kasiviswanathan2013analyzing,lu2014exponential,nissim2007smooth,zhang2015private}. In those problems, the real-world databases are very different from the worst-case scenario of the global sensitivity and they have a low observed local sensitivity. We present an adapted version of the definition of local sensitivity for the non-numeric setting, as given in \cite{nissim2007smooth}.

\begin{definition}
    (Local Sensitivity, adapted from \cite{nissim2007smooth}). Given a utility function $u(x,r)$ that takes as input a database $x$ and an element $r$ and outputs a numeric score, the local sensitivity of $u$ is defined as
    $$ LS^{u}(x) = \max_{r \in \mathcal{R}} \max_{y| d(x,y) \leq 1}|u(x,r)-u(y,r)|$$
\end{definition}

Observe that the global sensitivity is the maximum local sensitivity over the set of all databases, $\Delta u = \max_{x} LS^u(x)$. 

However, using solely the local sensitivity to build a mechanism is not enough to satisfy differential privacy. Thus, like the smooth sensitivity framework \cite{nissim2007smooth}, a part of our solution is based on local sensitivity at distance $t$. We adapt the notion of local sensitivity at distance $t$ \cite{nissim2007smooth} to the non-numeric setting for use in this work:

\begin{definition}\label{def:locsen_dist_t}
	(Local Sensitivity at distance $t$, adapted from \cite{nissim2007smooth}). Given a utility function $u: \mathcal{D}^n \times \mathcal{R} \rightarrow \mathbb{R}$ that takes as input a database $x \in \domain$ and an element $r \in \range$ and outputs a numeric score for $r$ in $x$, the local sensitivity at distance $t$ of $u$ is defined as
	$$LS^{u}(x,t) = \max_{y| d(x,y) \leq t} LS^{u}(y) $$
\end{definition}

Local sensitivity at distance $t$, $LS^u(x,t)$, measures the maximum local sensitivity $LS^u(y)$ over all databases $y$ at maximum distance $t$, i.e., we allow $t$ modifications on the database before computing its local sensitivity. Note that $LS^{u}(x,0) = LS^{u}(x)$. An example of the local sensitivity at distance $1$ is given in Example \ref{example:example_local_sensitivity_dist_1}.

\begin{example}

(Local sensitivity at distance $1$) Consider the graph $G$ of Figure \ref{fig:graphc}. The local sensitivity at distance $t$ allows $t$ extra modifications before measuring local sensitivity. As discussed in Example \ref{example:example1}, the local sensitivity of $G$ is $3$ (at distance $0$): $LS^{EBC}(G,0)=3$.

To compute local sensitivity at distance $1$, we need to find which edge to add or remove in order to compute the maximum local sensitivity at distance $1$. This case is found by removing edge $(a,v_0)$ as shown in Figure \ref{fig:graphd} obtaining $G'$. Then the local sensitivity of $G'$ is $5$ where node $b$ increases by $5$ units when adding edge $(b,v_0)$ ($1$ for each pair $\{v_0, v_2\}$, $\{v_0,v_3\}$, $\{v_0,v_4\}$, $\{v_0,v_5\}$ and $\{v_0,a\}$). This means that $LS^{EBC}(G,1)=5$

\tikzstyle{vertex}=[circle,minimum size=15pt,inner sep=0pt, draw=black]
% \tikzstyle{selected vertex}=[circle,fill=black!25,minimum size=15pt,inner sep=0pt, draw=black]
\tikzstyle{edge} = [draw,-]
\tikzstyle{dashed edge} = [draw,-,densely dashed]

\begin{figure}[ht]

    \begin{subfigure}[b]{0.23\textwidth}
        \centering
        \begin{tikzpicture}[]
            \foreach \pos/\name in {{(1.11,0)/a}, {(2.22,0)/b}, {(3.33,-1.5)/v_5}, {(2.67,-1.5)/v_4}, {(2,-1.5)/v_3}, {(1.34,-1.5)/v_2}, {(0.67,-1.5)/v_1}, {(0,-1.5)/v_0}}
                \node[vertex] (\name) at \pos {$\name$};
    
            \foreach \source/ \dest in {a/b, a/v_0, a/v_1, a/v_2, a/v_3, b/v_2, b/v_3, b/v_4, b/v_5, v_0/v_1, v_4/v_5}
                \path[edge] (\source) -- (\dest);

        \end{tikzpicture}
        \caption{Original Graph $G$}
        \label{fig:graphc}
    \end{subfigure}
    \begin{subfigure}[b]{0.23\textwidth}
        \centering
        \begin{tikzpicture}[]
            \foreach \pos/\name in {{(1.11,0)/a}, {(2.22,0)/b}, {(3.33,-1.5)/v_5}, {(2.67,-1.5)/v_4}, {(2,-1.5)/v_3}, {(1.34,-1.5)/v_2}, {(0.67,-1.5)/v_1}, {(0,-1.5)/v_0}}
                \node[vertex] (\name) at \pos {$\name$};
    
            \foreach \source/ \dest in {a/b, a/v_1, a/v_2, a/v_3, b/v_2, b/v_3, b/v_4, b/v_5, v_0/v_1, v_4/v_5}
                \path[edge] (\source) -- (\dest);    
                
            \path[dashed edge] (v_0) -- (b);
        \end{tikzpicture}
        
    \caption{$G'$ at distance 1 from $G$}
    \label{fig:graphd}
    \end{subfigure}        

    \caption{Local Sensitivity at distance $1$}

\end{figure}
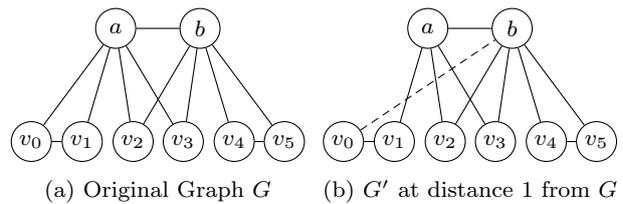
\label{example:example_local_sensitivity_dist_1}
\end{example}

\subsection{Composition}
\label{sec:composition}

The data querier can pose several queries to the database to compose complex differentially private algorithms. In this work, we focus on two types of composition: sequential composition and parallel composition. 

The sequential composition applies when a set of mechanisms is executed against a dataset. This implies that the privacy budget used on each computation sums up:

\begin{theorem}
    (Sequential composition \cite{mcsherry2007mechanism,mcsherry2009privacy}) Let $\mec_i : \domain \rightarrow \range_i$ be an $\epsilon_i$-differentially private algorithm for $i \in [k]$. Then $\mec(x)=(\mec_1(x),\cdots,\mec_k(x))$ is ($\sum_{i=1}^{k}$)-differentially private.
    \label{theorem:sequential_composition}
\end{theorem}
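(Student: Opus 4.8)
The plan is to reduce the claim directly to Definition~\ref{def:dp} by exploiting that the $k$ sub-mechanisms are run on the same input with independent internal randomness, so that the joint output distribution factorizes. Fix two databases $x,y$ with $d(x,y)\le 1$ and an arbitrary output $O=(o_1,\dots,o_k)\in\range_1\times\cdots\times\range_k$ of $\mec$. The goal is the inequality $\prob[\mec(x)=O]\le\exp\!\big(\sum_{i=1}^{k}\epsilon_i\big)\,\prob[\mec(y)=O]$, after which applying the same bound with the roles of $x$ and $y$ swapped (or simply noting the definition is one-sided for all $O$) finishes the argument.

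First I would write the joint probability as a product. Since $\mec_1,\dots,\mec_k$ draw their random coins independently, the event $\{\mec(x)=O\}$ is the intersection of the independent events $\{\mec_i(x)=o_i\}$, hence
\[
  \prob[\mec(x)=O]=\prod_{i=1}^{k}\prob[\mec_i(x)=o_i],
\]
and likewise for $y$. Next I would apply the $\epsilon_i$-differential privacy of each $\mec_i$ to the neighboring pair $x,y$: for every $i$ we have $\prob[\mec_i(x)=o_i]\le\exp(\epsilon_i)\,\prob[\mec_i(y)=o_i]$. Multiplying these $k$ inequalities and pulling out the exponential factors yields
\[
  \prob[\mec(x)=O]\le\prod_{i=1}^{k}\exp(\epsilon_i)\,\prob[\mec_i(y)=o_i]
  =\exp\!\Big(\textstyle\sum_{i=1}^{k}\epsilon_i\Big)\,\prob[\mec(y)=O],
\]
which is exactly the statement that $\mec$ is $\big(\sum_{i=1}^{k}\epsilon_i\big)$-differentially private. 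For continuous output spaces the identical argument runs with density functions in place of point masses, integrating the pointwise bound over any measurable set $O$.

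The only genuine subtlety — and the step I would be most careful to state explicitly — is the factorization of the joint distribution, which silently relies on the mechanisms sharing no randomness and on $\mec_i$ not adaptively depending on the realized outputs $o_1,\dots,o_{i-1}$. If one wanted the stronger adaptive form (where $\mec_i$ may receive $o_1,\dots,o_{i-1}$ as extra input), the same skeleton goes through with the plain product replaced by a telescoping product of conditional probabilities $\prob[\mec_i(x)=o_i\mid o_1,\dots,o_{i-1}]$, invoking the DP guarantee of $\mec_i$ for each fixed prefix; since the theorem here is stated in the non-adaptive form, the plain product suffices and everything else is routine manipulation of the $\exp$ function.
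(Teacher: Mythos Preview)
Your argument is correct and is precisely the standard proof of sequential composition. The paper does not give its own proof of this theorem; it is quoted as a known result from \cite{mcsherry2007mechanism,mcsherry2009privacy}, so there is nothing further to compare against.
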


% The Theorem \ref{theorem:sequential_composition} implies that if an analyst is given a privacy budget $\epsilon$, he/she can execute any number of private queries as long as the sum of the budget used in each execution accumulates to $\epsilon$.

The parallel composition applies if queries are applied to disjoint subsets of the database. Suppose that the data querier issues many $\epsilon_i$-differentially private mechanisms over disjoint subsets of the database. It composes to a $\max_i \epsilon_i$-differentially private mechanism. This scenario has a lower privacy cost compared to the sequential composition.

\begin{theorem}
    (Parallel composition \cite{mcsherry2009privacy}) Let $\mec_i : \domain \rightarrow \range_i$ be an $\epsilon_i$-differentially private algorithm for $i \in [k]$ and $x_1,...,x_k$ be pairwise disjoint subsets of $\domain$. Then $\mec(x)=(\mec_1(x_1),\cdots,\mec_k(x_k))$ is ($\max_i \epsilon_i$)-differentially private.
    \label{theorem:parallel_composition}
\end{theorem}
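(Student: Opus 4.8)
The plan is to reduce the statement to the single-block definition of differential privacy by exploiting the disjointness of the $x_i$. First I would fix two neighboring databases $x$ and $x'$ with $d(x,x') \leq 1$, so that they differ in at most one individual tuple. Because $x_1,\dots,x_k$ are pairwise disjoint, the block to which any tuple belongs is unambiguous, so the single modified tuple (if there is one) lies in exactly one block, say block $j$; hence $x_i = x'_i$ for every $i \neq j$ and $d(x_j,x'_j) \leq 1$. If the modified tuple lies in no block at all, take $j$ arbitrary — the argument only gets easier.

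Next I would use the independence of the internal coin tosses of $\mec_1,\dots,\mec_k$ to factorize the joint output distribution. Writing an arbitrary output as $O = (O_1,\dots,O_k) \in \range_1 \times \cdots \times \range_k$, we have $Pr[\mec(x) = O] = \prod_{i=1}^{k} Pr[\mec_i(x_i) = O_i]$, and likewise for $x'$. Comparing term by term: for $i \neq j$ the factors coincide because $x_i = x'_i$; for $i = j$, the $\epsilon_j$-differential privacy of $\mec_j$ applied to the neighboring inputs $x_j,x'_j$ gives $Pr[\mec_j(x_j) = O_j] \leq \exp(\epsilon_j)\, Pr[\mec_j(x'_j) = O_j]$. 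Multiplying the factors then yields $Pr[\mec(x) = O] \leq \exp(\epsilon_j)\, Pr[\mec(x') = O] \leq \exp(\max_i \epsilon_i)\, Pr[\mec(x') = O]$, which is exactly $(\max_i \epsilon_i)$-differential privacy since $O$ was arbitrary. For a general output event rather than a single atom, the same bound follows by summing over the atoms it contains; in the discrete-range setting of this paper summation suffices and no measure-theoretic care is needed.

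The one subtle point — and the step I would be most careful about — is the factorization: it relies on the mechanisms using independent randomness and on each $\mec_i$ reading only its own block $x_i$, so that changing a tuple outside block $i$ genuinely leaves $\mec_i$'s output distribution untouched. Disjointness of the blocks is precisely what guarantees that a distance-one change to $x$ is a distance-$\leq 1$ change to at most one block and a distance-zero change to all the others; without it a single tuple could lie in several blocks and perturb several factors simultaneously, and the clean $\max_i \epsilon_i$ bound would degrade toward the sequential-composition bound $\sum_i \epsilon_i$ of Theorem~\ref{theorem:sequential_composition}.
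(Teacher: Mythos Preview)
The paper does not supply its own proof of this theorem: it is quoted as a background result from \cite{mcsherry2009privacy}, and the appendix contains proofs only for the lemmas and theorems original to this work. So there is no in-paper argument to compare against.

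That said, your argument is the standard one and is correct. You correctly use disjointness to localize a single-tuple change to at most one block $j$, factorize the output probability via independence of the mechanisms' randomness, cancel the identical factors for $i\neq j$, and apply $\epsilon_j$-differential privacy to the remaining factor before relaxing $\epsilon_j$ to $\max_i \epsilon_i$. Your closing paragraph also isolates exactly the hypotheses that make the proof work: independent coins, each $\mec_i$ depending only on $x_i$, and disjointness ensuring at most one block is perturbed. This is precisely the proof in McSherry's original paper, so nothing is missing.
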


\section{Local Dampening Mechanism}
\label{section:local_dampening}

We present the \textit{local dampening mechanism}, an output perturbing differentially private mechanism for the non-numeric setting that uses local sensitivity to reduce the noise injected to the true answer. Our approach uses the local sensitivity notions described in the Section \ref{sec:sensitivity_non_numeric_setting}. Additionally, we introduce a new notion of sensitivity called \textit{element local sensitivity}. It measures the worst impact on the sensitivity for a given element $r \in \mathcal{R}$ when adding or removing an individual from the input database $x$, i.e., the largest difference $|u(x,r)-u(y,r)|$ for all neighbors $y$ of $x$.

More broadly, we coin the notion of \textit{sensitivity function} that generalizes the all local sensitivity definitions. This concept is specially useful when computing the sensitivity is not possible or efficient but computing an upper bound is simpler, as it can be NP-hard \cite{nissim2007smooth,zhang2015private}.

The local dampening mechanism employs a sensitivity function to dampen the utility function $u$ and construct its dampened version, referred to $D_{u, \delta^u}$. Specifically, we attenuate $u$ such that the signal-to-sensitivity ratio (i.e. u/sensitivity) is larger which results in higher accuracy. A sensitivity function is a function that computes one of the notions of sensitivity or an upper bound on the sensitivity.

\subsection{Element Local Sensitivity}
\label{sec:els}

The local sensitivity at distance $t$, $LS^u(x, t)$, quantifies the maximum sensitivity of $u$ over all elements $r \in \mathcal{R}$ for an input database $x$ with $t$ modifications (Definition \ref{def:locsen_dist_t}). That gives a high-level description of the variation of $u$ in neighboring databases. However, if just one element in $\mathcal{R}$ has a high value of sensitivity (close to $\Delta u$), $LS^u(x, t)$ will be equally large. That is ineffective in a scenario where most of the elements have low sensitivity and just few have high sensitivity, which makes $LS^u(x, t)$ large and consequently hurts accuracy.

We introduce a more specialized definition of local sensitivity named element local sensitivity, denoted as $LS^u(x, t, r)$, which measures the sensitivity of $u$ for a given $r \in \mathcal{R}$ for an input database $x$ at distance $t$ (definition \ref{def:item_local}). This allows us to grasp the sensitivity of $u$ for a single element. 

\begin{definition}
	(Element Local Sensitivity at distance $t$). Given a utility function $u(x,r)$ that takes as input a database $x$ and an element $r$ and outputs a numeric score for $x$, the element local sensitivity at distance $t$ of $u$ is defined as 
	$$ LS^{u}(x,t,r) = \max_{y \in \domain| d(x,y) \leq t, z \in \domain | d(y,z) \leq 1}|u(y,r)-u(z,r)|,$$	
	where $d(x,y)$ denotes the distance between two databases.
	\label{def:item_local}
\end{definition}

Note that we can obtain $LS^u(x, t)$ from this definition: $ LS^{u}(x,t) = \max_{r \in \mathcal{R}} LS^{u}(y,t,r)$ as $LS^{u}(x,t,r) = \max_{y| d(x,y) \leq t} LS^{u}(y,0,r)$.

% Intuitively, to compute element local sensitivity, one needs to identify which addition or removal of an individual on the input database $x$ causes the most impact on the utility score of a given element $r$, i.e., the largest difference $|u(x,r)-u(y,r)|$ for all neighbors $y$ of $x$. 

\begin{example}
(Element local sensitivity) We illustrate this definition with the same setup from previous examples. Let $G$ be the graph from Figure \ref{fig:graphc}. Suppose we want to compute the element local sensitivity for $v_4$, $LS^{u}(G,0,v_4)$. We measure only the worst impact of the addition or removal of an edge on the value of the EBC score for $v_4$. This is obtained by adding the edge $(v_0, v_4)$ (Figure \ref{fig:graphe}). The EBC score increases by $2$ ($1$ for each pair $\{b, v_0\}$ and $\{ v_0, v_5 \}$). Thus $LS^{u}(G,0,v_4) = 2$ which is smaller than local sensitivity $LS^{u}(G,0) = 3$  and $\Delta u = 7.5$ as discussed in Example \ref{example:example1}.

\tikzstyle{vertex}=[circle,minimum size=15pt,inner sep=0pt, draw=black]
% \tikzstyle{selected vertex}=[circle,fill=black!25,minimum size=15pt,inner sep=0pt, draw=black]
\tikzstyle{edge} = [draw,-]
% \tikzstyle{curved edge} = [draw,-,bend right=80]
\tikzstyle{dashed edge} = [draw,-,densely dashed]

\begin{figure}[ht]

\centering
\caption{Element Local Sensitivity for $v_4$}

	\begin{tikzpicture}[]
		\foreach \pos/\name in {{(1.11,0)/a}, {(2.22,0)/b}, {(3.33,-1.5)/v_5}, {(2.67,-1.5)/v_4}, {(2,-1.5)/v_3}, {(1.34,-1.5)/v_2}, {(0.67,-1.5)/v_1}, {(0,-1.5)/v_0}}
			\node[vertex] (\name) at \pos {$\name$};

		\foreach \source/ \dest in {a/b, a/v_0, a/v_1, a/v_2, a/v_3, b/v_2, b/v_3, b/v_4, b/v_5, v_0/v_1, v_4/v_5}
			\path[edge] (\source) -- (\dest);    
			
			\draw [-, dashed] (v_4) to [bend left=40]  (v_0);

	\end{tikzpicture}

\label{fig:graphe}
\end{figure}
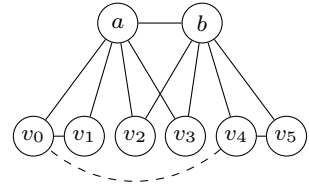

\label{example:example3}
\end{example}

\subsection{Sensitivity Functions}
\label{sec:sensitivity_functions}

Computing local sensitivity $LS^{u}(x,t)$ or element local sensitivity $LS^{u}(x,t,r)$ is not always feasible, as it can be NP-hard \cite{nissim2007smooth,zhang2015private}. To navigate this problem, we can relax the need for the computation of $LS^{u}(x,t)$ or $LS^{u}(x,t,r)$ and build a computationally efficient function $\delta^u(x,t,r)$ that computes an upper bound for $LS^{u}(x,t)$ or $LS^{u}(x,t,r)$ that is still smaller than $\Delta u$. We refer to  $\delta^u$ as a sensitivity function that has the following signature $\delta^u: \domain \times \mathbb{N}^0 \times \range \rightarrow \real$. Note that $\delta^u(x,t,r)=\Delta u$, $\delta^u(x,t,r)=LS^{u}(x,t)$ or $\delta^u(x,t,r)=LS^{u}(x,t,r)$ are sensitivity functions.

We define a classification of sensitivity functions based on four properties: admissibility, boundedness, monotonicity and stability.

\subsubsection{Admissibility}
\label{sec:admissibility}

The sensitivity function $\delta^u$ needs to have some properties to be admissible in the local dampening mechanism to guarantee differential privacy:

\begin{definition}
	(Admissibility). A sensitivity function $\delta^{u}(x, t, r)$ is \textit{admissible} if:
	\begin{enumerate}
		\item $\delta^{u}(x, 0, r) \geq LS^{u}(x, 0, r)$, for all $x \in \domain$ and all $r \in \range$
		\item $\delta^{u}(x, t+1, r) \geq \delta^{u}(y, t, r)$, for all $x,y$ such that $d(x,y) \leq 1$ and all $t \geq 0$		
	\end{enumerate}
	\label{def:admissible_function}
\end{definition}

The global sensitivity $\Delta u$ is admissible as $\Delta u \geq LS^{u}(x, 0, r)$, for all $x$ and a constant value would satisfy the second requirement of Definition \ref{def:admissible_function}. We also show that the function $LS^u(x, t, r)$ itself is admissible.

\begin{restatable}{lemma}{lemmalocaladmissible}
    \sloppy The element local sensitivity $LS^u(x, t, r)$ is admissible.
    \label{lemma:admissiblels}
\end{restatable}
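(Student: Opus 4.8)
The plan is to verify the two clauses of Definition \ref{def:admissible_function} directly from the definition of $LS^u(x,t,r)$ (Definition \ref{def:item_local}), using only that the Hamming distance $d$ is a metric (in particular, that it satisfies the triangle inequality).

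The first clause is immediate: taking $\delta^u = LS^u$, we need $LS^u(x,0,r) \geq LS^u(x,0,r)$, which holds with equality, so there is nothing to prove. All the work is in the second clause, namely showing
$$ LS^u(x,t+1,r) \;\geq\; LS^u(y,t,r) \quad \text{whenever } d(x,y) \leq 1 \text{ and } t \geq 0. $$
Here I would unfold both sides with Definition \ref{def:item_local}: the right-hand side is the maximum of $|u(w,r) - u(z,r)|$ over all $w$ with $d(y,w) \leq t$ and all $z$ with $d(w,z) \leq 1$, and the left-hand side is the maximum of $|u(w',r) - u(z',r)|$ over all $w'$ with $d(x,w') \leq t+1$ and all $z'$ with $d(w',z') \leq 1$. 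The key observation is that the feasible region of the right-hand maximization is contained in that of the left-hand maximization: if $d(y,w) \leq t$, then by the triangle inequality $d(x,w) \leq d(x,y) + d(y,w) \leq 1 + t$, so the pair $(w,z)$ that is feasible (and, say, optimal) for $LS^u(y,t,r)$ is also feasible for $LS^u(x,t+1,r)$. Maximizing the same quantity over a larger set can only increase it, which gives the desired inequality. (Alternatively, one can use the identity $LS^u(x,t,r) = \max_{y : d(x,y)\leq t} LS^u(y,0,r)$ noted after Definition \ref{def:item_local} and make the same triangle-inequality argument in one line.)

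I do not expect any real obstacle here; the only thing to be careful about is that the maximization in Definition \ref{def:item_local} is genuinely over pairs $(y,z)$ with the nested constraints $d(x,y)\leq t$ and $d(y,z)\leq 1$, so the containment of feasible regions must be checked at the level of these pairs rather than treating the two constraints separately. Once that is stated cleanly, both clauses of admissibility follow, completing the proof of Lemma \ref{lemma:admissiblels}.
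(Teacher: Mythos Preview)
Your proposal is correct and follows essentially the same approach as the paper's proof: the first clause is trivial, and the second clause is obtained by observing that $\{w : d(y,w)\leq t\} \subseteq \{w : d(x,w)\leq t+1\}$ for neighboring $x,y$, so the feasible region of one maximization is contained in the other. The paper states this set containment directly without naming the triangle inequality, but your argument and theirs are the same.
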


% \begin{lemma}
%     \sloppy $LS^u(x, t, r)$ is minimum admissible, i.e. $LS^u(x,t, r) \leq \delta^u(x, t, r)$ for all admissible functions $\delta^u$.
%     \label{lemma:admissiblels}
% \end{lemma}

A intermediate result shows (Lemma \ref{lemma:max_admissible}) that $LS^u(x,t)$ is an admissible function once $LS^u(x,t)=max_{r \in \mathcal{R}} LS^u(x,t,r)$ and $LS^u(x,t,r)$ is an admissible function (Theorem \ref{lemma:admissiblels}).

\begin{lemma}
	\sloppy Let $\delta_1(x,t,r), \ldots, \delta_p(x,t,r)$ be admissible functions. Then $\delta(x,t,r)$ defined as $\delta(x,t,r)=\max(\delta_1(x,t,r), \ldots, \delta_p(x,t,r))$ is an admissible sensitivity function.
	\label{lemma:max_admissible}
\end{lemma}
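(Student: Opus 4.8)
The plan is to verify directly the two defining conditions of admissibility (Definition \ref{def:admissible_function}) for $\delta(x,t,r) = \max(\delta_1(x,t,r),\ldots,\delta_p(x,t,r))$, using that each $\delta_i$ already satisfies them. Both conditions follow from the elementary fact that a pointwise maximum dominates each of its constituents, combined with choosing, for the database where we need a lower bound, the index that achieves the maximum.

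First I would handle condition (1). Fix any $x \in \domain$ and $r \in \range$. Since $\delta_1$ is admissible, $\delta_1(x,0,r) \geq LS^u(x,0,r)$, and since $\delta(x,0,r) = \max_i \delta_i(x,0,r) \geq \delta_1(x,0,r)$, we conclude $\delta(x,0,r) \geq LS^u(x,0,r)$. (Any single index works here; the maximum only helps.)

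Next I would handle condition (2), which is the only place the argument needs a small idea. Fix $x,y$ with $d(x,y)\leq 1$ and $t \geq 0$. Let $j \in \{1,\ldots,p\}$ be an index attaining $\delta(y,t,r) = \delta_j(y,t,r)$. Because $\delta_j$ is admissible, its second property gives $\delta_j(x,t+1,r) \geq \delta_j(y,t,r)$. Then
\[
\delta(x,t+1,r) \;=\; \max_{1\leq i \leq p} \delta_i(x,t+1,r) \;\geq\; \delta_j(x,t+1,r) \;\geq\; \delta_j(y,t,r) \;=\; \delta(y,t,r),
\]
which is exactly the required inequality. Finally I would note that $\delta$ has the correct signature $\domain \times \mathbb{N}^0 \times \range \rightarrow \real$, so it is indeed a sensitivity function, and hence an admissible one.

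There is no real obstacle here: the statement is a formal closure property, and the proof is a two-line verification once one picks the maximizing index $j$ in the second step. The only thing to be careful about is that the index $j$ is chosen with respect to the database $y$ and distance $t$ (the side where we have the value we must exceed), and then admissibility of that particular $\delta_j$ transports the bound to $(x,t+1)$; picking the maximizing index on the wrong side would not close the argument.
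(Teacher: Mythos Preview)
Your proof is correct and matches the paper's approach: the paper simply states that the result is immediate from the admissibility of $\delta_1,\ldots,\delta_p$, and your argument is exactly the straightforward verification of Definition~\ref{def:admissible_function} that makes this immediacy explicit. The key step of picking the maximizing index $j$ on the $(y,t)$ side for condition (2) is correct and is the only detail worth noting.
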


The proof of Lemma \ref{lemma:max_admissible} is immediate given by the admissibility of $\delta_1(x,t,r), \ldots, \delta_p(x,t,r)$.

\subsubsection{Boundedness}
\label{sec:bounded_functions}

Some sensitivity functions, such as $LS^u(x,t)$ and $LS^u(x,t,r)$, converge to $\Delta u$, by design, as $t$ grows. This follows from the fact that the maximum distance of two databases is at most $n$ by the hamming distance definition. Thus when $t=n$, $LS^u(x,t)$ and $LS^u(x,t,r)$ measure sensitivity in all possible databases. We refer to those functions as \textit{bounded functions}. 

% This is desirable since it bounds the worst-case for the sensitivity in terms of accuracy and it used on the privacy proof of the shifted local dampening.

Note that one can force a given function $\delta^u(x,t,r)$ to be bounded by simply replacing it by its bounded version $\min(\delta^u(x,t,r), \Delta u)$. We show that $\min(\delta^u(x,t,r), \Delta u)$ is admissible and bounded.

\begin{restatable}{lemma}{lemmadeltaconvergence}
	\sloppy If $\delta^u(x, t, r)$ is admissible, then $min(\delta^u(x, t, r), \Delta u)$ is admissible and bounded.
    \label{lemma:delta_convergence}
\end{restatable}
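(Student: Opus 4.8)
Write $\delta'(x,t,r) := \min(\delta^u(x,t,r),\Delta u)$. The plan is to verify directly that $\delta'$ is bounded (i.e.\ $\delta'(x,t,r)\le\Delta u$ everywhere, as in the discussion preceding the lemma, of which $\min(\cdot,\Delta u)$ is literally the "bounded version") and that it satisfies both clauses of admissibility (Definition~\ref{def:admissible_function}), invoking only the admissibility of $\delta^u$ together with elementary monotonicity of the $\min$ operation. Boundedness is immediate: for all $x,t,r$ we have $\delta'(x,t,r)=\min(\delta^u(x,t,r),\Delta u)\le\Delta u$ by definition of $\min$.

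For admissibility I would check the two clauses in turn. For clause~1 I need $\delta'(x,0,r)\ge LS^{u}(x,0,r)$ for all $x\in\domain$, $r\in\range$. Admissibility of $\delta^u$ gives $\delta^u(x,0,r)\ge LS^{u}(x,0,r)$, and the remark that $\Delta u=\max_x LS^u(x)$ (Section~\ref{sec:sensitivity_non_numeric_setting}) together with $LS^{u}(x,0,r)\le LS^{u}(x,0)=LS^u(x)$ gives $\Delta u\ge LS^{u}(x,0,r)$; since both arguments of the $\min$ are therefore at least $LS^{u}(x,0,r)$, so is the $\min$. For clause~2 I need $\delta'(x,t+1,r)\ge\delta'(y,t,r)$ whenever $d(x,y)\le1$ and $t\ge0$. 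Admissibility of $\delta^u$ gives $\delta^u(x,t+1,r)\ge\delta^u(y,t,r)$, and since $a\mapsto\min(a,c)$ is nondecreasing (if $a\ge b$ then $\min(a,c)\ge\min(b,c)$), applying this with $c=\Delta u$ to both sides yields $\delta'(x,t+1,r)\ge\delta'(y,t,r)$.

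There is essentially no obstacle here; the proof is a short verification. The only points requiring care are (i) making sure clause~1 uses the inequality $LS^{u}(x,0,r)\le\Delta u$, which is not part of admissibility of $\delta^u$ but follows from the definitions of $LS^u(x)$ and $\Delta u$, and (ii) noting that the monotonicity of $\min$ used in clause~2 preserves non-strict inequalities, so no strictness is ever needed.
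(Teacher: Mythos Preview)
Your admissibility argument is correct and matches the paper's: both arguments of the $\min$ dominate $LS^u(x,0,r)$, and monotonicity of $\min$ carries clause~2 through.

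However, your treatment of boundedness has a gap. You take ``bounded'' to mean $\delta'(x,t,r)\le\Delta u$ everywhere, but that is not the paper's definition. A sensitivity function is \emph{bounded} (Definition following the lemma) when $\delta'(x,t,r)=\Delta u$ for all $t\ge n$; the informal discussion preceding the lemma likewise describes bounded functions as those that \emph{converge to} $\Delta u$, not merely stay below it. Showing $\min(\cdot,\Delta u)\le\Delta u$ gives only the upper bound; you still need $\min(\delta^u(x,t,r),\Delta u)\ge\Delta u$ whenever $t\ge n$, i.e.\ $\delta^u(x,t,r)\ge\Delta u$ for such $t$.

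The paper fills this in by observing that once $t\ge n$ the set $\{y:d(x,y)\le t\}$ is all of $\mathcal{D}^n$, so $LS^u(x,t,r)=\max_{y\in\mathcal{D}^n}LS^u(y,0,r)=\Delta u$, and then uses that any admissible $\delta^u$ satisfies $\delta^u(x,t,r)\ge LS^u(x,t,r)$ for every $t$ (this is the content of Lemma~\ref{lemma:minimumls}, provable by a short induction on $t$ from the two admissibility clauses). Combining these gives $\delta^u(x,t,r)\ge\Delta u$ for $t\ge n$, hence $\min(\delta^u(x,t,r),\Delta u)=\Delta u$. Your proposal omits this step entirely, so as written it does not establish boundedness in the required sense.
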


\begin{definition}
	(Boundedness) A sensitivity function $\delta^u(x,t,r)$ is said to be bounded if $\delta^u(x,t,r)=\Delta u$ for all $t \geq n$.
\end{definition}

\begin{figure}[!ht]
	\centering	
	
	\begin{tikzpicture}
		\draw [fill] (0.5,0.5) circle [radius=0.08];
		\draw [fill] (1,1) circle [radius=0.08];
		\draw [fill] (1.5,1.5) circle [radius=0.08];
		\draw [fill] (2,2) circle [radius=0.08];
		\draw [fill] (2.5,2.5) circle [radius=0.08];
		\draw [fill] (3,3) circle [radius=0.08];
		\draw [fill] (3.5,3) circle [radius=0.08];
		\draw [fill] (4,3) circle [radius=0.08];
		\draw [fill] (4.5,3) circle [radius=0.08];
		\draw [fill] (5,3) circle [radius=0.08];
		\draw [fill] (5.5,3) circle [radius=0.08];
		\draw [fill] (6,3) circle [radius=0.08];
		% \draw [fill] (6.5,3) circle [radius=0.08];
		\draw [dashed] (3,0) -- (3,3);
		\draw [dashed] (0,3) -- (3,3);

		\node at (3, -0.3) { $n$};
		\node at (-0.4, 3) { $\Delta u$};
		% \node at (5.1, -0.3) { $b(x, 3, r)$};
		
		\draw [->, thick] (0,0) -- (0,4.5);
		\draw [->, thick] (0,0) -- (6.4,0);
		\node at (0,4.8) {$\delta^u(x,t,r)$};
		\node at (6,.3) {$t$};
	\end{tikzpicture}    	
    \caption{Boundedness - $\delta^u (x,t,r)$ converges to $\Delta u$ when $t \geq n$ where $n$ is the size of the database.}
\end{figure}
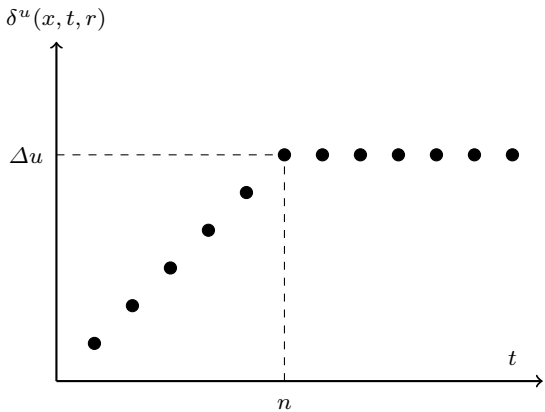

Thus, we can replace $\delta^u(x,t,r)$ with $\min(\delta^u(x,t,r), \Delta u)$ since its admissible. In terms of accuracy, this replacement is beneficial. We have that $\delta^u(x,t,r) \geq \min(\delta^u(x,t,r), \Delta u)$ for all database $x$, $t \geq 0$ and $r \in \range$. Thus $\delta^u(x,t,r)$ is always larger than $\min(\delta^u(x,t,r), \Delta u)$ meaning that local dampening injects less noise as sensitivity is proportional to the noise. This means that we can impose boundedness for any function and, beyond that, we have gains in accuracy as it injects less noise.

\subsubsection{Monotonicity}
\label{sec:monotonic}

We introduce the notion of monotonicity in our context. When the utility score $u(x,r)$ is a monotonic function of $\delta^u(x,t,r)$ over $r \in \range$, we say that $\delta^u(x,t,r)$ is monotonic. We have three classifications for monotonicity.

\begin{definition}
	(Non-decreasing Monotonicity) Let $u(x,r)$ be a utility function and $\delta^u(x,t,r)$ be a sensitivity function. $\delta^u(x,t,r)$ is said to be monotonically non-decreasing if $\delta^u(x,t,r) \geq \delta^u(x,t,r')$ for all $x \in \domain$, $r,r' \in \range$, $t \geq 0$ such that $u(x,r) \geq u(x,r')$.
	\label{def:non_decreasing_monotonicity}
\end{definition}

And its symmetric definition is:

\begin{definition}
	(Non-increasing Monotonicity) Let $u(x,r)$ be a utility function and $\delta^u(x,t,r)$ be a sensitivity function. $\delta^u(x,t,r)$ is said to be monotonically non-increasing if $\delta^u(x,t,r) \geq \delta^u(x,t,r')$ for all $x \in \domain$, $r,r' \in \range$, $t \geq 0$ such that $u(x,r) \leq u(x,r')$.
	\label{def:non_increasing_monotonicity}
\end{definition}

Also, a sensitivity can be \textit{flat}:

\begin{definition}
	(Flat Monotonicity) Let $u(x,r)$ be a utility function and $\delta^u(x,t,r)$ be a sensitivity function. $\delta^u(x,t,r)$ is said to be flat if $\delta^u(x,t,r) = \delta^u(x,t,r')$ for all $x \in \domain$, $r,r' \in \range$, $t \geq 0$.
	\label{def:flat_monotonicity}
\end{definition}

We refer to a \textit{monotonic function} as a function that is either flat, monotonically non-decreasing or monotonically non-increasing.

Note that flat sensitivity functions are independent on $r$ and they are both monotonic non-increasing and monotonic non-decreasing. The global sensitivity $\Delta u$ and the local sensitivity $LS^u(x,t)$ are flat sensitivity functions since they do not depend on $r$.

Additionally, given a utility function $u$ and an sensitivity function $\delta^u(x,t,r)$, one can build a function $\hat{\delta}^u(x,t,r)$ from $\delta^u(x,t,r)$ such that $\hat{\delta}^u(x,t,r)$ is flat.

$$
	\hat{\delta}^u(x,t,r) = \max_{r' \in \range} \delta^u(x,t,r')
$$.

% Basically, $\hat{\delta}^u(x,t,r)$ increases the value for a given $r \in \mathcal{R}$ and $t \geq 0$ to the maximum value for $\delta(x,t,r')$ among all $r' \in \mathcal{R}$. This results in the same value $\hat{\delta}^u(x,t,r)$ for any given $r$. 

A drawback of using $\hat{\delta}^u(x,t,r)$ is that $\hat{\delta}^u(x,t,r) \geq \delta^u(x,t,r)$, for all $x$, $t \geq 0$ and $r \in \range$ meaning that $\hat{\delta}^u(x,t,r)$ returns a large upper bound for sensitivity and, consequently, hurts accuracy. Lemma \ref{lemma:max_admissible} ensures that $\hat{\delta}^u(x,t,r)$ is admissible if $\delta^u$ is admissible. 

\subsubsection{Stability}

\begin{definition}
	(Stability) A sensitivity function $\delta^u(x,t,r)$ is stable if $\delta^u$ is admissible, bounded and monotonic.
	\label{def:regular_function}
\end{definition}

Satisfying all three requirements (admissibility, boundedness and monotonicity) for designing a stable function may sound very restrictive. However, for all  definitions of sensitivity, two of them are naturally stable: global sensitivity $\Delta u$ and local sensitivity $LS^u(x,t)$. Only the element local sensitivity $LS^u(x,t,r)$ can be non-monotonic and, consequently, non-stable. Nevertheless, in Section \ref{sec:relaxing_monotonicity}, we argue that the requirement of strict monotonicity can be relaxed and an admissible bounded function with ``weak'' monotonicity can perform well in the local dampening mechanism. 

Besides, for any function, the requirement of boundedness can be easily imposed as shown in Section \ref{sec:bounded_functions} while still providing lower sensitivity.

\subsection{Dampening Function}
\label{sec:damp}

A crucial part of our mechanism is the \textit{dampening function}. We now define the dampening function $D_{u, \delta^u}(x,r)$, which uses an admissible sensitivity function $\delta^u(x, t, r)$ to return a dampened and scaled version of the original utility function.

\begin{figure}
	\centering
	\begin{tikzpicture}
		\draw [->, thick] (0,-2.5) -- (0,2.5);
		\draw [->, thick] (-4,0) -- (4,0);

		\draw [thick] (0,0) -- (0.6, 0.8) -- (1.8, 1.6) -- (3.7, 2.1) -- (4, 2.15);
		\draw [thick] (0,0) -- (-0.6,-0.8) -- (-1.8, -1.6) -- (-3.7, -2.1) -- (-4, -2.15);
		
		\foreach \x/\y [count=\xi from 1] in {0.8/0.6, 1.6/1.8, 2.1/3.7}{
			\draw [ultra thick] (-.1,\x) -- (.1,\x);
			\draw [ultra thick] (-.1,-\x) -- (.1,-\x);
			\node at (-.3,\x) {\xi};
			\node at (.4,-\x) {$-\xi$};
			\draw [fill] (\y,\x) circle [radius=0.08];
			\draw [fill] (-\y,-\x) circle [radius=0.08];
			\draw [dashed] (0,\x) -- (\y,\x);
			\draw [dashed] (0,-\x) -- (-\y,-\x);
			\draw [dashed] (\y, 0) -- (\y,\x);
			\draw [dashed] (-\y, 0) -- (-\y,-\x);
			\draw [-, thick] (\y,0.1) -- (\y,-0.1);
			\draw [-, thick] (-\y,0.1) -- (-\y,-0.1);

			% \pgfmathtruncatemacro{\xii}{\xi - 1}            
			% \node at (-\y, 0.3) { $-\sum_{i=0}^{\xii}\delta(i,r)$};
		}
		\node at (0.7, -0.35) { $b(x,1,r)$};
		\node at (2.1, -0.35) { $b(x,2,r)$};
		% \node at (3.5, -0.35) { $b(x,3,r)$};
		\node at (3.2, -0.35) { $\cdots$};

		\node at (-0.75, 0.35) { $b(x,-1,r)$};
		\node at (-2.25, 0.35) { $b(x,-2,r)$};
		% \node at (-3.4, 0.35) { $b(x,-3,r)$};
		\node at (-3.4, 0.35) { $\cdots$};

		\node at (3.9, -0.35) {$u$};

		\node at (.55,2.5) {$D_{u, \delta^u}$};
		
	\end{tikzpicture}    
    \caption{Dampening function $D_{u, \delta^u}$}
    \label{fig:dampening}
\end{figure}
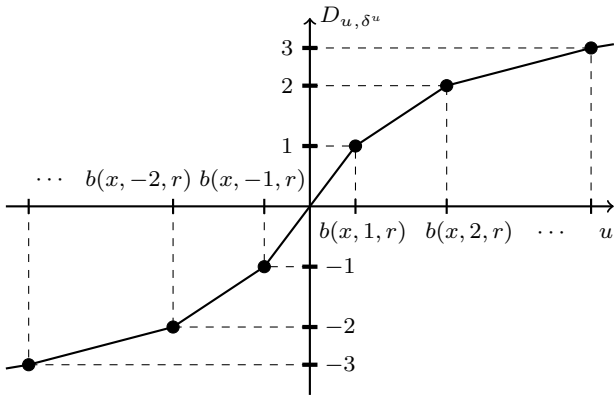

\begin{definition}
	(Dampening function). Given a utility function $u(x, r)$ and an admissible function $\delta^u(x, t, r)$, the dampening function $D_{u,\delta^u}(x,r)$ is defined as a piecewise linear interpolation over the points:	    
	$$< \ldots,(b(x,-1,r), -1),(b(x,0,r), 0),(b(x,1,r), 1), \ldots >$$
	where $b(x, i, r)$ is given by:		    	    
	\begin{equation*}	
		b(x, i, r) \coloneqq
		\begin{cases}
			\sum^{i-1}_{j=0} \delta^u(x, j, r) & \text{if} \ i > 0 \\
			0 & \text{if} \ i = 0 \\
			- b(x, -i, r)                    & \text{otherwise}     
		\end{cases}
	\end{equation*}
	Therefore,	    
	\begin{equation*}
		D_{u,\delta^u}(x, r) =  \frac{u(x, r)-b(x,i,r)}{b(x, i+1, r)- b(x, i, r)} + i
	\end{equation*}					
	where $i$ is defined as the smallest integer such that $u(x, r) \in  \left[ b(x,i,r), b(x,i+1,r)  \right)$.
\end{definition}

Figure \ref{fig:dampening} shows the general scheme of $D_{u, \delta^u}$. A crucial property of $D_{u, \delta^u}$ is that it scales $u$ so that the sensitivity of $D_{u, \delta^u}$ is bounded to $1$ (Lemma \ref{lemma:principal}).

\begin{restatable}{lemma}{lemmaprincipal}
	$|D_{u,\delta^u}(x,r) - D_{u,\delta^u}(y,r)| \leq 1$ for all $x,y$ such that $d(x,y) \leq 1$ and all $r \in R$ if $\delta^u$ is admissible.
	\label{lemma:principal}
\end{restatable}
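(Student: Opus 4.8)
The plan is to show that changing $x$ to a neighbour $y$ moves the value of $D_{u,\delta^u}$ by at most one "step" in the piecewise-linear scale, and that within any one step the function is $1$-Lipschitz in a suitable sense. Fix $x,y$ with $d(x,y)\le 1$ and fix $r\in\range$. Let $i$ be the index with $u(x,r)\in[b(x,i,r),b(x,i+1,r))$ and $k$ the index with $u(y,r)\in[b(y,k,r),b(y,k+1,r))$, so that $D_{u,\delta^u}(x,r)\in[i,i+1)$ and $D_{u,\delta^u}(y,r)\in[k,k+1)$. The first step is to reduce to the case $u(x,r)\ge u(y,r)$ (the other case is symmetric by swapping the roles of $x$ and $y$), and, using that $b(\cdot,\cdot,r)$ is an odd function of its integer argument and nondecreasing in it, to reduce further to the situation where the relevant indices are nonnegative — handling the sign cases by the symmetry $b(x,-i,r)=-b(x,i,r)$.

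The heart of the argument is the following claim about the breakpoints: for every integer $i$,
\begin{equation*}
  b(y,i,r)\ \le\ b(x,i+1,r)\qquad\text{and}\qquad b(x,i,r)\ \le\ b(y,i+1,r).
\end{equation*}
For $i\ge 0$ this is proved by induction on $i$: the base case $i=0$ is $b(y,0,r)=0\le b(x,1,r)=\delta^u(x,0,r)$, which holds since $\delta^u\ge LS^u\ge 0$; and the inductive step uses the telescoping definition $b(\cdot,i+1,r)-b(\cdot,i,r)=\delta^u(\cdot,i,r)$ together with admissibility clause (2), $\delta^u(x,j+1,r)\ge\delta^u(y,j,r)$ for $d(x,y)\le 1$, summed over $j=0,\dots,i-1$, plus the base case to absorb the remaining term. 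The cases where indices are negative follow from the odd symmetry of $b$. From this claim, one deduces $k\le i+1$ and $i\le k+1$, i.e. $|i-k|\le 1$.

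With $|i-k|\le 1$ established, I would finish by a direct estimate of $|D_{u,\delta^u}(x,r)-D_{u,\delta^u}(y,r)|$. If $i=k$, both values lie on the same linear piece $[i,i+1)$ whose $u$-width is $\delta^u(x,i,r)$ resp.\ $\delta^u(y,i,r)$; here one needs that on a common interval the slopes are comparable and that $|u(x,r)-u(y,r)|\le LS^u(x,0,r)\le\delta^u(x,0,r)\le\delta^u(x,i,r)$ (using admissibility to bound $\delta^u(x,0,r)$ by $\delta^u(x,i,r)$ for $i\ge0$), which pins the difference below $1$. If $|i-k|=1$, say $k=i+1$, then $D(x,r)<i+1\le D(y,r)$, and one bounds $\big(i+1-D(x,r)\big)+\big(D(y,r)-(i+1)\big)$: the first term is at most the fraction of the $i$-th interval above $u(x,r)$, the second is the fraction of the $(k-1)$-th interval of $y$ up to $u(y,r)$; comparing the breakpoints $b(x,i,r)\le u(x,r)$, $u(y,r)\le b(y,i+2,r)$ and the claim's inequality $b(x,i+1,r)\le b(y,i+2,r)$ with $b(y,i+1,r)\le b(x,i+2,r)$, together with $u(x,r)\ge u(y,r)$, shows the two fractions sum to at most $1$.

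The main obstacle I anticipate is precisely the bookkeeping in this last case-split: getting the inequalities between the four breakpoints $b(x,i,r),b(x,i{+}1,r),b(y,i,r),b(y,i{+}1,r)$ oriented correctly so that the two "partial fractions" provably add to at most $1$, and making sure the sign/symmetry reductions at the start genuinely cover $i,k$ of mixed sign rather than silently assuming both are $\ge 0$. Everything else — the induction establishing the breakpoint interleaving, and the Lipschitz bound on a single piece — is routine once admissibility is invoked in the two forms (clause (1) to start the induction and to bound $|u(x,r)-u(y,r)|$, clause (2) to propagate it).
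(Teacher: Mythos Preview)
Your approach is essentially the paper's: both arguments localize $u(y,r)$ to the three $b(y,\cdot,r)$-intervals adjacent to index $i$ (your $|i-k|\le 1$), then do the same three-case split, using admissibility clause~(2) to compare $\delta^u(x,i,r)$ with $\delta^u(y,i\pm 1,r)$ and clause~(1) to bound $|u(x,r)-u(y,r)|$. Your ``breakpoint interleaving'' claim $b(y,i,r)\le b(x,i+1,r)$ is exactly the inequality the paper derives inline (e.g.\ $b(y,i-1,r)+\delta^u(x,0,r)\le b(x,i,r)$), just packaged as a lemma.

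One step is loose as written: after reducing to $u(x,r)\ge u(y,r)$, interleaving alone gives only $k\le i+1$ (via $u(y,r)\le u(x,r)<b(x,i+1,r)\le b(y,i+2,r)$); it does \emph{not} give $k\ge i-1$, since nothing prevents $u(y,r)$ from being very small. For that direction you must invoke the sensitivity bound you use later, namely $u(y,r)\ge u(x,r)-\delta^u(x,0,r)\ge b(x,i,r)-\delta^u(x,0,r)\ge b(y,i-1,r)$, the last inequality being another instance of your interleaving claim. This is exactly how the paper pins down $u(y,r)\in[b(y,i-1,r),b(y,i+2,r))$. Also note that under your reduction $u(x,r)\ge u(y,r)$ the subcase $k=i-1$ is not symmetric to $k=i+1$ and must be written out separately (it is the paper's Case~(1.1)); and the $i=k$ case is in fact trivial since both dampened values lie in $[i,i+1)$, so no Lipschitz estimate is needed there.
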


\subsection{Local Dampening Mechanism}

We now state the local dampening mechanism a generic non-numeric differentially private mechanism. It takes a database $x$, the privacy budget $\epsilon$, the utility function $u$, an admissible sensitivity function $\delta^u$ and the range $\range$ of the function to be sanitized.

It samples an element from $r \in \range$ based on its dampened utility score $D_{u, \delta^u}(x,r)$. The larger the score, the higher the probability of sampling it.

\begin{definition}
	(Local dampening mechanism). The local dampening mechanism $\mecld(x, \epsilon, u, \delta^{u}, \mathcal{R})$ selects and outputs an element $r \in \mathcal{R}$ with probability proportional to $\exp \big(  \frac{\epsilon \; D_{u,\delta^u}(x, r)}{2}\big) $.
\end{definition}

This version of the local dampening mechanism is specially effective when the sensitivity function is flat. In the following example, we demonstrate the process operation of the local dampening mechanism.

\begin{example}
	(Local dampening mechanism) This example explores the local dampening mechanism using the local sensitivity definition while the element local sensitivity is addressed in Section \ref{section:shifted_local_dampening}. Let $G$ be the graph of Figure \ref{fig:graphc}. As we have discussed in Example \ref{example:example_local_sensitivity_dist_1}, we have that $LS^{EBC}(G,0)=3$ and $LS^{EBC}(G,1)=5$. The EBC scores for the vertices are $EBC(a)=EBC(b)=6.5$ and $EBC(v_i)=0$, for $0 \leq i \leq 5$. Their dampened EBC scores are:
	\begin{gather*}
		D_{EBC, LS^{EBC}}(G, a) = D_{EBC, LS^{EBC}}(G, b) = 1.7	\\	
		D_{EBC, LS^{EBC}}(G, v_i) = 0 \text{, for } 0 \leq i \leq 5
	\end{gather*}
	For instance, assuming $\epsilon = 2.0$, the probability for each node to be selected is:
	\begin{gather*}
		Pr[\text{a is selected}] = Pr[\text{b is selected}] \propto \exp(1.7) = 5.47 \\
		Pr[v_i \text{ is selected}] \propto \exp(0) = 1.0  \text{, for } 0 \leq i \leq 5			
	\end{gather*}
	\sloppy Normalizing, we have that $Pr[\text{a is selected}] = Pr[\text{b is selected}] = 0.32$ and $Pr[v_i \text{ is selected}] = 0.06$. Thus the local dampening mechanism samples a element with those probabilities.

	% We compare those probabilities to those obtained by the exponential mechanism with global sensitivity equal to $7.5$ (See example \ref{example:example1}).
\label{example:dampening}
\end{example}

% TODO - take this to comparison??
% In Example \ref{example:example1}, the exponential mechanism obtained that $Pr[\text{a is selected}] = Pr[\text{b is selected}] = 0.22$ and $Pr[v_i \text{ is selected}] = 0.09$. Thus local dampening yields a higher probability of choosing the node with highest score.

\subsection{Privacy Guarantee}
\label{sec:priv}

We now prove that the local dampening mechanism $\mecld$ ensures $\epsilon$-differential privacy (Theorem \ref{theorem:priv_local_dampening}). The privacy correctness proof follows from the exponential mechanism correctness \cite{mcsherry2007mechanism} and Lemma \ref{lemma:principal}.

\begin{restatable}{theorem}{privlocaldampening}
	$\mecld$ satisfies $\epsilon$-Differential Privacy if $\delta$ is admissible.
	\label{theorem:priv_local_dampening}
\end{restatable}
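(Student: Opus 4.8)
The plan is to reduce the privacy guarantee of $\mecld$ to the privacy guarantee of the exponential mechanism applied to the dampened utility function $D_{u,\delta^u}$. The key observation is that, by its very definition, $\mecld(x,\epsilon,u,\delta^u,\range)$ outputs $r \in \range$ with probability proportional to $\exp\bigl(\tfrac{\epsilon\, D_{u,\delta^u}(x,r)}{2}\bigr)$, which is exactly the exponential mechanism run with utility function $\tilde u(x,r) := D_{u,\delta^u}(x,r)$. So the first step is to make this identification explicit.

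Next, I would invoke Lemma~\ref{lemma:principal}: since $\delta^u$ is admissible, $|D_{u,\delta^u}(x,r) - D_{u,\delta^u}(y,r)| \leq 1$ for all $x,y$ with $d(x,y)\leq 1$ and all $r\in\range$. This means the global sensitivity $\Delta \tilde u$ of the dampened utility function is at most $1$. Then I would essentially replay the standard exponential-mechanism privacy analysis. Fix neighboring databases $x,y$ with $d(x,y)\leq 1$ and an output $O = r$. Write the ratio
\begin{equation*}
\frac{\prob[\mecld(x)=r]}{\prob[\mecld(y)=r]} = \frac{\exp\bigl(\tfrac{\epsilon D_{u,\delta^u}(x,r)}{2}\bigr)}{\sum_{r'}\exp\bigl(\tfrac{\epsilon D_{u,\delta^u}(x,r')}{2}\bigr)} \cdot \frac{\sum_{r'}\exp\bigl(\tfrac{\epsilon D_{u,\delta^u}(y,r')}{2}\bigr)}{\exp\bigl(\tfrac{\epsilon D_{u,\delta^u}(y,r)}{2}\bigr)}.
\end{equation*}
The first factor (numerator over denominator, crossed appropriately) is bounded by $\exp\bigl(\tfrac{\epsilon}{2}(D_{u,\delta^u}(x,r)-D_{u,\delta^u}(y,r))\bigr)\leq \exp(\tfrac{\epsilon}{2})$ using Lemma~\ref{lemma:principal}. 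For the ratio of normalizing sums, each term satisfies $\exp\bigl(\tfrac{\epsilon D_{u,\delta^u}(y,r')}{2}\bigr)\leq \exp\bigl(\tfrac{\epsilon}{2}\bigr)\exp\bigl(\tfrac{\epsilon D_{u,\delta^u}(x,r')}{2}\bigr)$, again by Lemma~\ref{lemma:principal}, so the sum over $y$ is at most $\exp(\tfrac{\epsilon}{2})$ times the sum over $x$, giving another factor of $\exp(\tfrac{\epsilon}{2})$. Multiplying, the total ratio is at most $\exp(\epsilon)$, which is the desired inequality $\prob[\mecld(x)=r]\leq\exp(\epsilon)\,\prob[\mecld(y)=r]$.

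I should also handle the bookkeeping that the ``output $O$'' in Definition~\ref{def:dp} ranges over measurable subsets of $\range$, but since $\range$ is discrete this reduces to the single-element case by summing, so no extra work is needed there. I expect the only genuine obstacle is already discharged by Lemma~\ref{lemma:principal}: the nontrivial content is that dampening bounds the sensitivity of $D_{u,\delta^u}$ to $1$, and once that is in hand the privacy proof is the textbook exponential-mechanism argument with sensitivity $1$ and the standard $\epsilon/2$ scaling in the exponent. I would therefore keep the proof short, citing Lemma~\ref{lemma:principal} and \cite{mcsherry2007mechanism} for the exponential mechanism analysis, and spell out only the two-factor bound above.
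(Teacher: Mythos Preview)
Your proposal is correct and follows essentially the same approach as the paper's own proof: both reduce to the standard exponential-mechanism privacy argument, splitting the probability ratio into the single-term factor and the normalizer factor and bounding each by $\exp(\epsilon/2)$ via Lemma~\ref{lemma:principal}. The paper's write-up is slightly terser but structurally identical, and it likewise cites Lemma~\ref{lemma:principal} and the exponential-mechanism correctness from \cite{mcsherry2007mechanism} as the two ingredients.
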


\begin{proof}    
Given two neighboring databases $x,y \in D^n$ (i.e., $d(x,y)\leq 1$) and an output $r \in \mathcal{R}$. We show that the ratio of the probability of $r$ being produced by local dampening mechanism on database $x$ and $y$ is bounded by $\exp(\epsilon)$.

\begin{align*}	
		\frac{P_x(r)}{P_y(r)} & = \frac{P[\mecld(x,u,\mathcal{R}) = r]}{P[\mecld(y,u,\mathcal{R}) = r]} \\		
        & = \frac{ \left( \frac{\exp(\frac{\epsilon D_{u,\delta}(x,r)}{2})}{\sum_{r' \in \mathcal{R}} \exp(\frac{\epsilon D_{u,\delta}(x,r')}{2})} \right)}
        { \left( \frac{\exp(\frac{\epsilon D_{u,\delta}(y,r)}{2})}{\sum_{r' \in \mathcal{R}} \exp(\frac{\epsilon D_{u,\delta}(y,r')}{2})} \right)} \\
        & = \left( \frac{\exp(\frac{\epsilon D_{u,\delta}(x,r)}{2})}{\exp(\frac{\epsilon D_{u,\delta}(y,r)}{2})} \right) \cdot \left( \frac{\sum_{r' \in \mathcal{R}} \exp(\frac{\epsilon D_{u,\delta}(y,r')}{2})}{\sum_{r' \in \mathcal{R}} \exp(\frac{\epsilon D_{u,\delta}(x,r')}{2})} \right) \\
        & \leq \text{exp}\left( \frac{\epsilon (D_{u,\delta}(x,r') - D_{u,\delta}(y,r'))}{2} \right)\\
        & \quad \cdot \left( \frac{\sum_{r' \in \mathcal{R}} \exp(\frac{\epsilon (D_{u,\delta}(x,r')+1)}{2})}{\sum_{r' \in \mathcal{R}} \exp(\frac{\epsilon D_{u,\delta}(x,r')}{2})} \right) \\
        & \leq \text{exp}\left(\frac{\epsilon}{2}\right) \cdot \text{exp}\left(\frac{\varepsilon}{2}\right) \cdot \left( \frac{\sum_{r' \in \mathcal{R}} \exp(\frac{\varepsilon D_{u,\delta}(x,r')}{2})}{\sum_{r' \in \mathcal{R}} \exp(\frac{\varepsilon D_{u,\delta}(x,r')}{2})} \right) \\
        & = \text{exp}(\varepsilon)                        
\end{align*}
The two inequalities follow from lemma \ref{lemma:principal}. By symmetry, $\frac{P[\mecld(x,u,\mathcal{R}) = r]}{P[\mecld(y,u,\mathcal{R}) = r]} \geq \exp(-\epsilon)$ holds.
\end{proof}

\subsection{Related Work}
\label{sec:related_work}

There is a vast literature on differential privacy for numeric queries, and we refer the interested reader to \cite{DBLP:conf/sigmod/Machanavajjhala17} for a recent survey. Our work was previously published in \cite{farias2020local}. This paper advances on the understanding of the local dampening mechanism and on the related work. We discuss and carry out experiment with more recent related work, the permute-and-flip mechanism. We provide a new application to the local dampening mechanisms and compare to the related work. We give a new theoretical accuracy analysis where we show how to compare two instances of the local dampening mechanism. Besides that, we show that, under some conditions, our approach is never worse than the exponential mechanism.

Given that, in this section, we discuss also the two available differential privacy approaches for the non-numeric, also known as the selection problem, setting in the literature, the exponential mechanism and the permute-and-flip mechanism.

\subsubsection{Exponential Mechanism}

The exponential mechanism $\mecexp$ \cite{mcsherry2007mechanism} is the most used approach for providing differential privacy to the non-numeric setting. It uses a notion of global sensitivity $\Delta u$ (Definition \ref{def:global_sensitivity}).

The exponential mechanism privately answers a function $f: \domain \rightarrow \range$ applied to database $x$ by sampling an element $r \in \range$ with probability proportional to its utility score $u(x,r)$. It uses the exponential distribution to assign probabilities for each $r \in \range$. The exponential mechanism is stated as follows:

\begin{definition}
	(Exponential Mechanism \cite{mcsherry2007mechanism}). The exponential mechanism $\mecexp(x, \epsilon, u, \mathcal{R})$ selects and outputs an element $r \in \mathcal{R}$ with probability proportional to $\exp \big(\frac{\epsilon \; u(x, r)}{2 \Delta u}\big) $.
	\label{def:exponential_mechanism}
\end{definition}

% A larger utility score means that an output is a better answer and, as a consequence, should be output with higher probability.

The exponential mechanism satisfies $\epsilon$-differential privacy \cite{mcsherry2007mechanism}. 

In Section \ref{section:shifted_local_dampening}, we show that, under some conditions, the exponential mechanism is never worse than the local dampening in terms of accuracy. Additionally, we carry out an experimental evaluation with the two applications that we tackle in this work: influential node analysis (Sections \ref{section:influential_node_analysis}) and decision tree induction (Section \ref{section:decision_tree_induction}).

\begin{example}
	(Comparison local dampening mechanism with exponential mechanism)
	We make a simple comparison of the probabilities of the local dampening mechanism  with the exponential mechanism in Example \ref{example:dampening}. 

	In Example \ref{example:dampening}, we have that $Pr[\text{a is selected}] = Pr[\text{b is selected}] = 0.32$ and $Pr[v_i \text{ is selected}] = 0.06$. While, according to Definition \ref{def:exponential_mechanism}, the exponential mechanism obtained that $Pr[\text{a is selected}] = Pr[\text{b is selected}] = 0.22$ and $Pr[v_i \text{ is selected}] = 0.09$. Thus local dampening yields a higher probability of choosing the node with highest score.
	\label{example:exponential_mechanism}
\end{example}

\subsubsection{Permute-and-Flip}

The \textit{permute-and-flip} mechanism,$\mecpf$, \cite{mckenna2020permute} is recent work that also addresses differential privacy for the non-numeric setting. It is defined as an iterative algorithm that employs the exponential distribution to assign probabilities for each element $r$.

Algorithm \ref{algo:permute_and_flip} formally defines permute-and-flip. 

\begin{algorithm}[htp]
    \SetAlgoLined\DontPrintSemicolon
    \SetKwFunction{algo}{$\mecpf$(Database $x$, Privacy Budget $B$, utility function $u$, Range set $\range$)}
    \SetKwProg{myalg}{Procedure}{}{}
    \myalg{\algo}{
	$u^* = \max_{r \in \range} u(x,r)$\;    
    \For{$r \in RandomPermutation(\range)$ }{        
        $p_r = \exp \left( \frac{\epsilon}{2 \Delta u} (u(x,r) - u^*) \right)$ \;
        \If{$Bernoulli(p_r)$}{     	   
        	\KwRet{r}\;
    	}
    } 
	}{}
    \addtocontents{loa}{\protect\addvspace{-9pt}}
    \caption{Permute-and-Flip}
    \label{algo:permute_and_flip}

    % \addtocontents{loa}{\vskip -6pt}
\end{algorithm}

Basically, the algorithm iterates over a random permutation of the elements $r \in \range$ and then flip a biased coin with probability $\frac{\epsilon}{2 \Delta u} (u(x,r) - u^*)$. $u^*$ is the maximum utility observed over all elements in the range set $\range$ given the input database $x$. Thus, the closer $u(x,r) - u^*$, more likely is $r$ to be outputted. The mechanism is guaranteed to terminate with a result because if $u(x,r) = u^*$, then the probability of heads is $1$.

McKenna et al \cite{mckenna2020permute} show that their approach is also never worse than the exponential mechanism in terms of accuracy. We conduct an empirical comparison of permute-and-flip mechanism to local dampening in Sections \ref{section:influential_node_analysis} and \ref{section:decision_tree_induction}.

\section{Shifted Local Dampening Mechanism}
\label{section:shifted_local_dampening}

In this section, we present a second version of the local dampening mechanism name \textit{shifted local dampening} mechanism $\mecsld$. This version is designed for non-flat monotonic sensitivity functions which is the most usual case in our experiments.

We develop an insightful discussion on accuracy of the shifted local dampening mechanism (Section \ref{section:accuracy_analysis}). We provide tools to compare two instances of the shifted local mechanism in terms of accuracy. Also, these tools guide on the design of good sensitivity functions that provide accurate $\mecsld$ instances (Section \ref{sec:acc_stable_function}). We show that, with a stable sensitivity function, the local dampening mechanism is never worse than the exponential mechanism (Section \ref{sec:comparison_exponential_mechanism}). Additionally, even if the stability condition is not met, we discuss how to construct good sensitivity functions (Section \ref{sec:relaxing_monotonicity}).

\subsection {Inversion problem}
\label{sec:inversion_problem}

First, we exemplify an issue that happens when the sensitivity function is not monotonic.

% In the local dampening mechanism, there is a case where the relative order of the utility scores are changed after dampening. This occurs when the sensitivity function 

Consider the scenario where we dampen the utility scores of the elements $r \in \range$ with the sensitivity function $\delta^u$ that is not monotonic. This might be the case when we use $\delta^u(x,t,r)$ as the element local sensitivity, $\delta^u(x,t,r) = LS^u(x,t,r)$.

% , i.e., returning different values for two elements $r, r' \in \range$, $\delta^u(x,t,r) \neq \delta^u(x,t,r')$.

In this situation, Example \ref{example:inversion} illustrates a case where the local dampening change the relative order of the dampened utility scores compared to the original utility scores. We refer to this problem as the \textit{inversion problem}.

\begin{example}
(Inversion problem) Consider the following setup: $\mathcal{R} = \{r_1, r_2\}$, $\delta^u(x,0,r_1)=1$, $\delta^u(x,1,r_1)=2$, $\delta^u(x,0,r_2)=4$, $u(x,r_1)=3$ and $u(x,r_2)=4$. When applying $D_{u, \delta^u}$ to $r_1$ and $r_2$, we obtain $D_{u, \delta^u}(x, r_1)=2$ and $D_{u, \delta^u}(x, r_2)=1$. Originally, $r_2$ is more useful than $r_1$ but after dampening it inverts. This hurts accuracy since the local dampening mechanism will choose $r_1$ with higher probability.
\label{example:inversion}
\end{example}

\subsection{Shifted Local Dampening}
\label{sec:shifted_local_dampening}

The key idea for this extension is the use of shifting in the utility score to take advantage of non-flat monotonic sensitivity functions $\delta^u$. The discussion in this section is focused on non-flat monotonic sensitivity functions. However, we show later that the shifted local dampening also performs well for non strictly monotonic functions.

Example \ref{example:shifting} shows a case where shifting increases the probability of high utility elements to be chosen (i.e. improves accuracy) when $\delta^u$ is monotonically non-decreasing.

\begin{example}
\sloppy (Utility function shifting) Consider the graph $G$ from figure \ref{fig:original_graph}. For nodes $a$ and $b$, their measured element local sensitivities are: $LS^{EBC}(G,0,a)=LS^{EBC}(G,0,b)=3$ and $LS^{EBC}(G,1,a)=LS^{EBC}(G,1,b)=5$. For a node $v_i$, for $0 \leq i \leq 5$, its measured sensitivity is $LS^{EBC}(G,0,v_i)=2$. We observe the non-decreasing monotonicity of $LS^{EBC}$, since the EBC scores are $EBC(a)=EBC(b)=6.5$ and $EBC(v_i)=0$, for $0 \leq i \leq 5$.

\tikzstyle{vertex}=[circle,minimum size=15pt,inner sep=0pt, draw=black]
% \tikzstyle{selected vertex}=[circle,fill=black!25,minimum size=15pt,inner sep=0pt, draw=black]
\tikzstyle{edge} = [draw,-]
\tikzstyle{dashed edge} = [draw,-,densely dashed]

\begin{figure}[ht]
	\caption{Original Graph $G$}
	\centering

    \begin{tikzpicture}[]
        \foreach \pos/\name in {{(1.11,0)/a}, {(2.22,0)/b}, {(3.33,-1.5)/v_5}, {(2.67,-1.5)/v_4}, {(2,-1.5)/v_3}, {(1.34,-1.5)/v_2}, {(0.67,-1.5)/v_1}, {(0,-1.5)/v_0}}
            \node[vertex] (\name) at \pos {$\name$};
    
        \foreach \source/ \dest in {a/b, a/v_0, a/v_1, a/v_2, a/v_3, b/v_2, b/v_3, b/v_4, b/v_5, v_0/v_1, v_4/v_5}
            \path[edge] (\source) -- (\dest);    
                    
    \end{tikzpicture}

\label{fig:original_graph}
\end{figure}
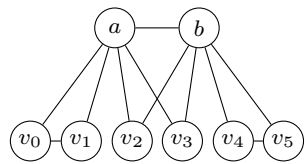

For instance, shifting the $EBC$ scores by $-7$, we get that $EBC'(a)=EBC'(b)=-0.5$ and $EBC'(v_i)=-7$, for $0 \leq i \leq 5$. Then we compute their dampened $EBC'$ scores:
\begin{gather*}
	D_{EBC', LS^{EBC}}(G, a) = D_{EBC', LS^{EBC}}(G, b) = 0.1 \\
	D_{EBC', LS^{EBC}}(G, v_i) = -2 \text{, for } 0 \leq i \leq 5	
\end{gather*}	
Let $\epsilon = 2.0$. The probability for each node to be selected is:
\begin{gather*}
	Pr[\text{a is selected}] = Pr[\text{b is selected}] \propto \exp(0.1) = 0.44 \\
	Pr[v_i \text{ is selected}] \propto \exp(-2) = 0.13  \text{, for } 0 \leq i \leq 5
\end{gather*}
Normalizing, we have that $Pr[\text{a is selected}] = Pr[\text{b is selected}] = 0.472$ and $Pr[v_i \text{ is selected}] = 0.0046$. Recall that, the exponential mechanism obtained that $Pr[\text{a is selected}] = Pr[\text{b is selected}] = 0.22$ and $Pr[v_i \text{ is selected}] = 0.09$ (Example \ref{example:exponential_mechanism}) and, for the unshifted local dampening mechanism, (Example \ref{example:dampening}), we have that $Pr[\text{a is selected}] = Pr[\text{b is selected}] = 0.32$ and $Pr[v_i \text{ is selected}] = 0.06$. The nodes with highest score increase probability compared to the unshifted local dampening and the exponential mechanism.
\label{example:shifting}
\end{example}

For the sake of argument, suppose that $\delta^u(x,t,r)$ is monotonically non-decreasing. We design the shifting in a way that it rearranges the utilities scores such that the distribution of the utility scores is more spread. 

The idea is the following: we shift left enough so that all utility scores are negative. The elements with larger utility score are the elements with smallest absolute value after shifting. Thus, these shifted scores are dampened with large $\delta^u(x,t,r)$ (by assumption of non-decreasing monotonicity). This implies that large utility scores are dampened closer to $0$ and the opposite happens to elements with small utility scores. The elements with small utility scores are dampened with small $\delta^u(x,t,r)$ and, consequently, the scores are less attenuated and far from $0$. This results in more spread distribution of utility scores.

% and small scores are dampened to large negative values.

% in more accentuate shrinkage of the dampened score compared to scores of elements with low original score, high absolute shifted score and low $\delta^u(x,t,r)$ (by assumption of non-decreasing monotonicity). 

Hereby we propose to replace the original utility function $u$ with its shifted version $u^s$ where $s$ is the utility score shift and 
$$
	u^s(x,r) = u(x,r) - s.
$$

One could design a private query, consuming part of the privacy budget, to choose $s$ such that it minimizes some loss function to optimize accuracy. In this work, we set $s$ to a value that does not depend on private data, $s \rightarrow \infty$. In what follows, the shifted local dampening mechanism is stated as follows:

\begin{definition}
	(Shifted Local Dampening Mechanism - non-decreasing sensitivity function). The shifted local dampening mechanism $\mecsld(x, \epsilon, u, \delta^{u}, \mathcal{R})$ outputs an element $r \in \mathcal{R}$ with probability equals to 
	
	$$
		\lim_{s \to \infty} \left( 
			\frac{ 
				\exp \left( \frac{\epsilon \, D_{u^{s}, \delta^u}(x,r)}{2} \right) 
			}{ 
				\sum_{r' \in \mathcal{R}} \exp \left( \frac{\epsilon \, D_{u^{s}, \delta^u}(x,r')}{2} \right) 
			} 
		\right).
	$$
	\label{def:shifted_local_dampening_non_decreasing}
\end{definition}

When $\delta^u$ is monotonically non-increasing the following definition of the shifted local dampening mechanism applies:

\begin{definition}
	(Shifted Local Dampening Mechanism - non-increasing sensitivity function). The shifted local dampening mechanism $\mecsld(x, \epsilon, u, \delta^{u}, \mathcal{R})$ outputs an element $r \in \mathcal{R}$ with probability equals to 
	
	$$
		\lim_{s \to -\infty} \left( 
			\frac{ 
				\exp \left( \frac{\epsilon \, D_{u^{s}, \deltau}(x,r)}{2} \right) 
			}{ 
				\sum_{r' \in \mathcal{R}} \exp \left( \frac{\epsilon \, D_{u^{s}, \deltau}(x,r')}{2} \right)
			} 
		\right).
	$$
	\label{def:shifted_local_dampening_non_increasing}
\end{definition}

For the case of functions that do not depend on $r$, both versions of the shifted local dampening mechanism are applicable.

\subsection{Privacy Guarantee and implementation}

We now prove that the shifted local dampening mechanism $\mecsld$ ensures $\epsilon$-differential privacy. For the privacy guarantee, the sensitivity function $\delta^u$ just needs to be admissable and bounded but not necessarily monotonic. Recall that boundedness can be easily achieved (Section \ref{sec:bounded_functions}).

We first show an intermediate result:

% that $\lim_{s \to -\infty} \left( \frac{ \exp \left( \frac{\epsilon \, D_{u^{s}, \delta}(x,r)}{2} \right) }{ \sum_{r' \in \mathcal{R}} \exp \left( \frac{\epsilon \, D_{u^{s}, \delta}(x,r')}{2} \right)} \right)$ converges in $n$ steps.

\begin{restatable}{lemma}{lemmaconvergence}
	If $\delta^u$ is an admissible and bounded sensitivity function then 
	$ \frac{ 
		\exp(\epsilon \; D_{u^s, \deltau}(x, r)/2)
	}{ 
		\sum_{r' \in \range} \exp( \epsilon \; D_{u^s, \deltau}(x, r')/2)
	} =
	\frac{ \exp(\epsilon \; D_{u^{s_0}, \deltau}(x, r)/2)
	}{ 
		\sum_{r' \in \range} \exp( \epsilon \; D_{u^{s_0}, \deltau}(x, r')/2)
	}$ 
	for $s \geq s_0$ where $ s_0 = n  \Delta u + \max_{r' \in R} u(x,r')$ and $n$ is the size of the input database.
	\label{lemma:convergence}
\end{restatable}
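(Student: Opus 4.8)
The plan is to show that once $s$ is large enough, every element's shifted utility $u^s(x,r') = u(x,r') - s$ has become negative enough that its dampened value $D_{u^s,\delta^u}(x,r')$ stabilizes—i.e., adding still more to $s$ only shifts all the dampened values by the same additive constant, which cancels in the normalized probability. So first I would make precise what ``stabilizes'' means. For $s$ negative-shifting the utility far to the left, each $u^s(x,r')$ lands in some interval $[b(x,i,r'), b(x,i+1,r'))$. The key observation is that by boundedness, $\delta^u(x,j,r') = \Delta u$ for all $j \geq n$, so the breakpoints $b(x,i,r')$ for $|i| \geq n$ form an arithmetic progression with common difference $\Delta u$. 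Concretely, for $i \leq -n$ we have $b(x,i,r') = b(x,-n,r') + (i+n)\Delta u$. On this arithmetic-progression regime, the piecewise-linear interpolation $D_{u^s,\delta^u}$ is just an affine function of $u^s$ with slope $1/\Delta u$; hence $D_{u^s,\delta^u}(x,r') = D_{u^{s_0},\delta^u}(x,r') + (s_0 - s)/\Delta u$ for all $s \geq s_0$, provided $u^{s_0}(x,r')$ already lies in the ``left tail'' where all relevant breakpoints are spaced by $\Delta u$.

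Next I would verify that the stated threshold $s_0 = n\Delta u + \max_{r' \in \range} u(x,r')$ suffices to push every element into that left-tail regime. With $s = s_0$, for any $r'$ we get $u^{s_0}(x,r') = u(x,r') - s_0 \leq u(x,r') - n\Delta u - u(x,r') = -n\Delta u \leq b(x,-n,r')$ (using $|b(x,-n,r')| \leq n\Delta u$, which itself follows from admissibility giving $\delta^u(x,j,r') \leq \Delta u$ for all $j$—this is where I would invoke, or re-derive from the chain of admissibility inequalities, that $\delta^u(x,j,r') \leq \delta^u(x,j+1,r') \leq \cdots$ with base case $LS^u(x,0,r') \leq \Delta u$; actually admissibility alone doesn't bound $\delta^u$ above by $\Delta u$ unless we've replaced it by its bounded version, so I would lean on the hypothesis that $\delta^u$ is bounded, combined with monotone-type structure, or simply use $b(x,-n,r') \geq -n\Delta u$ directly from boundedness which forces the tail spacing to be exactly $\Delta u$ and caps the total drop over the first $n$ steps). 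Thus for every $r'$, $u^{s_0}(x,r')$ and \emph{a fortiori} $u^s(x,r')$ for $s \geq s_0$ sit in the region where breakpoints are uniformly spaced by $\Delta u$, so the affine formula above holds simultaneously for all $r'$ with the \emph{same} slope $1/\Delta u$ and the \emph{same} offset $(s_0 - s)/\Delta u$.

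Finally I would plug this into the normalized probability: each exponent becomes $\frac{\epsilon}{2}\big(D_{u^{s_0},\delta^u}(x,r) + (s_0-s)/\Delta u\big)$, so the factor $\exp\!\big(\frac{\epsilon(s_0-s)}{2\Delta u}\big)$ appears in both numerator and denominator and cancels, yielding exactly the claimed equality for all $s \geq s_0$. I expect the main obstacle to be the bookkeeping around the breakpoints $b(x,i,r')$: being careful about the sign conventions (negative $i$ versus positive $i$, the reflection $b(x,-i,r') = -b(x,i,r')$), confirming that the ``smallest integer $i$ with $u^s(x,r') \in [b(x,i,r'),b(x,i+1,r'))$'' moves downward by exactly the right amount as $s$ increases, and making sure that the monotonicity/boundedness hypotheses actually guarantee the left-tail breakpoints are spaced by precisely $\Delta u$ rather than something $\leq \Delta u$. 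Everything else is routine algebra on the piecewise-linear interpolation.
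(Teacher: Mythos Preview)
Your proposal is correct and follows essentially the same route as the paper's proof: show that for $s\ge s_0$ every shifted utility $u^s(x,r')$ lands in the left tail where the breakpoints $b(x,i,r')$ are spaced exactly $\Delta u$ apart (by boundedness), deduce that $D_{u^s,\delta^u}(x,r')-D_{u^{s_0},\delta^u}(x,r')=(s_0-s)/\Delta u$ is the same constant for every $r'$, and cancel the common factor $\exp(\epsilon(s_0-s)/2\Delta u)$ in numerator and denominator. Your caveat about needing $\delta^u(x,j,r')\le \Delta u$ for $j<n$ is well spotted; the paper uses it implicitly (consistent with working with the clipped version $\min(\delta^u,\Delta u)$ from Lemma~\ref{lemma:delta_convergence}), so you should state that assumption explicitly or invoke the clipping when you write it out.
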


Lemma also \ref{lemma:convergence} gives hint about the implementation. It suffices to shift by $n  \Delta u + \max_{r' \in R} u(x,r')$ to meet the definition of the shifted local dampening. Also, from Lemma \ref{lemma:convergence}, it follows directly (Corollary \ref{cor:existence}).

% that $\lim_{s \to -\infty} \left( \frac{ \exp \left( \frac{\epsilon \, D_{u^{s}, \delta}(x,r)}{2} \right) }{ \sum_{r' \in \mathcal{R}} \exp \left( \frac{\epsilon \, D_{u^{s}, \delta}(x,r')}{2} \right)} \right)$ exists (Corollary \ref{cor:existence}).

\begin{corollary} 
	$ \lim_{s \to \infty} \left( 
		\frac{ 
			\exp \left( \frac{\epsilon \, D_{u^{s}, \deltau}(x,r)}{2} \right) 
		}{ 
			\sum_{r' \in R} \exp \left( 
				\frac{\epsilon \, D_{u^{s}, \deltau}(x,r')}{2} 
			\right)
		} 
	\right)  $ exists and is equal to $\frac{ 
		\exp(\epsilon \; D_{u^s, \deltau}(x, r)/2)
	}{ 
		\sum_{r' \in \range} \exp( \epsilon \; D_{u^s, \deltau}(x, r')/2)
	}$ 
	for $s \geq s_0$ where $ s_0 = n  \Delta u + \max_{r' \in R} u(x,r')$ and $n$ is the size of the input database.
	\label{cor:existence}
\end{corollary}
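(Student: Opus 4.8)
The plan is to derive the statement as an immediate consequence of Lemma~\ref{lemma:convergence}. Fix a database $x$, an element $r \in \range$, and set $s_0 = n\Delta u + \max_{r' \in \range} u(x,r')$ as in the lemma. Define the real-valued function
$$
	g(s) = \frac{\exp(\epsilon\, D_{u^s,\deltau}(x,r)/2)}{\sum_{r' \in \range} \exp(\epsilon\, D_{u^s,\deltau}(x,r')/2)},
$$
which is exactly the probability that $\mecsld$ (in the form of Definition~\ref{def:shifted_local_dampening_non_decreasing}, before the limit is taken) assigns to $r$ when the shift is $s$. Since $\deltau$ is admissible and bounded by hypothesis, Lemma~\ref{lemma:convergence} applies, and it says precisely that $g(s) = g(s_0)$ for every $s \geq s_0$.

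Next I would invoke the elementary fact that a function which is constant on a half-line $[s_0,\infty)$ converges as its argument tends to $+\infty$, with limit equal to that constant value. Hence $\lim_{s \to \infty} g(s)$ exists and equals $g(s_0)$. Because the corollary writes its right-hand side as the value of $g$ at an arbitrary $s \geq s_0$, and all such values coincide with $g(s_0)$ by Lemma~\ref{lemma:convergence}, this is exactly the claimed identity. The $s \to -\infty$ analogue for non-increasing $\deltau$ (Definition~\ref{def:shifted_local_dampening_non_increasing}) is not needed for this statement, but would follow from the mirror-image version of Lemma~\ref{lemma:convergence} in the same way.

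There is essentially no obstacle to overcome here: all of the analytic content — namely that the dampened utilities $D_{u^s,\deltau}(x,\cdot)$, and therefore the normalized exponential weights, stabilize once the shift exceeds $s_0$ — is established in Lemma~\ref{lemma:convergence}. The only point requiring a moment's care is that $s_0$ is finite, which holds under the same standing assumptions used for the lemma (finiteness of $\Delta u$ and existence of $\max_{r' \in \range} u(x,r')$); granting that, the corollary is just the observation that an eventually constant function is convergent to its eventual value.
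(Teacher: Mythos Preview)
Your proposal is correct and takes essentially the same approach as the paper, which simply states that the corollary ``follows directly'' from Lemma~\ref{lemma:convergence}. You have merely spelled out the elementary observation that an eventually constant function converges to its eventual value, which is exactly the intended one-line deduction.
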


The privacy correctness proof follows from the exponential mechanism correctness \cite{mcsherry2007mechanism}, Lemma \ref{lemma:principal} and Corollary \ref{cor:existence}. In this proof we use the non-decreasing admissable function version of the local dampening (Definition \ref{def:shifted_local_dampening_non_decreasing}). The non-increasing version (Definition \ref{def:shifted_local_dampening_non_increasing}) privacy guarantee proof is symmetric.

\begin{theorem}
	$\mecsld$ satisfies $\epsilon$-Differential Privacy if $\delta^u$ is admissible and bounded.
	\label{theorem:shifted_priv_local_dampening}
\end{theorem}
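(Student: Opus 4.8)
The plan is to reduce the privacy guarantee for $\mecsld$ to that of the exponential mechanism applied to the dampened utility $D_{u^s, \delta^u}$, exactly as in the proof of Theorem~\ref{theorem:priv_local_dampening}, and then pass to the limit $s \to \infty$ using Corollary~\ref{cor:existence} to make the limiting probabilities well-defined. First I would fix two neighboring databases $x, y$ with $d(x,y) \le 1$ and an output $r \in \range$, and by Corollary~\ref{cor:existence} choose a finite shift $s \ge \max(s_0(x), s_0(y))$ where $s_0(z) = n\Delta u + \max_{r' \in \range} u(z, r')$; since the probabilities stabilize for all shifts beyond $s_0$, the limiting output distribution of $\mecsld(z, \epsilon, u, \delta^u, \range)$ equals the distribution of the exponential mechanism run with utility $D_{u^s, \delta^u}(z, \cdot)$ and sensitivity bound $1$, for $z \in \{x, y\}$. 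The key point here is that a \emph{single} finite $s$ works simultaneously for both $x$ and $y$, so the two distributions being compared are genuine (non-limiting) exponential-mechanism distributions on the same shifted scale.

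Next I would invoke Lemma~\ref{lemma:principal}, which requires only admissibility of $\delta^u$: since $u^s(x, r) = u(x, r) - s$ is just a translate of $u$, it has the same local sensitivities, so $\delta^u$ is an admissible sensitivity function for $u^s$ as well, and hence $|D_{u^s, \delta^u}(x, r) - D_{u^s, \delta^u}(y, r)| \le 1$ for all $r$. With this bounded-sensitivity property in hand, the ratio computation is verbatim the one in the proof of Theorem~\ref{theorem:priv_local_dampening}: write
$$
\frac{P_x(r)}{P_y(r)} = \left( \frac{\exp(\tfrac{\epsilon D_{u^s,\delta^u}(x,r)}{2})}{\exp(\tfrac{\epsilon D_{u^s,\delta^u}(y,r)}{2})} \right) \cdot \left( \frac{\sum_{r'} \exp(\tfrac{\epsilon D_{u^s,\delta^u}(y,r')}{2})}{\sum_{r'} \exp(\tfrac{\epsilon D_{u^s,\delta^u}(x,r')}{2})} \right),
$$
bound the first factor by $\exp(\epsilon/2)$ and each term of the numerator sum in the second factor by $\exp(\epsilon/2)$ times the corresponding term over $x$, giving $\frac{P_x(r)}{P_y(r)} \le \exp(\epsilon)$; the reverse bound follows by symmetry.

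The main obstacle, and the reason boundedness is needed in the hypothesis (whereas Theorem~\ref{theorem:priv_local_dampening} needs only admissibility), is the interchange of limits and the well-definedness of $\mecsld$'s output distribution in the first place: one must be sure the limit defining the probabilities exists and that the normalizing sum behaves, which is precisely what Lemma~\ref{lemma:convergence} and Corollary~\ref{cor:existence} deliver — but those rely on $\delta^u$ being bounded so that $D_{u^s, \delta^u}$ eventually becomes an affine function of $u^s$ (slope $1/\Delta u$) for large $|i|$, making the shifted dampened scores stabilize up to an additive constant that cancels in the ratio. I would also need to handle the non-increasing case (Definition~\ref{def:shifted_local_dampening_non_increasing}) with $s \to -\infty$, which is symmetric and which I would dispatch in one sentence; and note that a flat $\delta^u$ falls under both and is covered either way. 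A minor point to state carefully is that monotonicity of $\delta^u$ plays no role in this proof — it is only relevant to accuracy, not privacy — so the theorem's hypotheses (admissible and bounded) are exactly what the argument consumes.
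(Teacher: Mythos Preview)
Your proposal is correct and follows essentially the same approach as the paper, which simply states that the result follows from the exponential mechanism's privacy proof, Lemma~\ref{lemma:principal}, and Corollary~\ref{cor:existence}, with the non-increasing case being symmetric. Your elaboration---in particular choosing a single finite shift $s \ge \max(s_0(x), s_0(y))$ so that both distributions are genuine exponential-mechanism distributions on the same scale, and verifying that $\delta^u$ remains admissible for the translated utility $u^s$---fills in exactly the details the paper leaves implicit.
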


\subsection{Accuracy Analysis}
\label{section:accuracy_analysis}

In this section, we provide theoretical analysis on the accuracy. We aim to answer to the following questions: i) How to compare two instances of the local dampening with two different admissible functions?; ii) Under which conditions does the local dampening performs more accurately than the exponential mechanism?; iii) If those conditions are not met, how to build good admissible functions? and iv) How does local dampening compare to the exponential mechanism in terms of accuracy?.

We evaluate the accuracy of a given mechanism $\mec$ by studying the error random variable $\error$. $\error$ gives how much the element sampled by $\mec$ differ from the optimal element in terms of utility.

$$ \error(\mec, x) = u^* - u(x, \mec(x))$$
where $u^*$ is the optimal utility score, $u^* = \max_{r \in \range} u(x,r)$.

To compare two instances of the local dampening for the same problem, we need to analyse the features of the function $\delta^u$. We develop a discussion on accuracy guarantees for stable functions where we show how to compare two stable functions and show that, using a stable function, the local dampening mechanism is never worse than the exponential mechanism in terms of accuracy.

\subsubsection{Accuracy Analysis for Stable Sensitivity Functions}
\label{sec:acc_stable_function}

Two instances of the local dampening mechanism can be compared by their stable sensitivity functions. As lower sensitivity means higher accuracy, a stable sensitivity function that produces lower values implies in higher accuracy. For that analysis we establish a relation of dominance between two stable sensitivity functions:

% When a stable sensitivity function $\delta^u(x,t,r)$ is an lower bound on a second stable sensitivity function $\deltabar^u(x,t,r)$, we say that $\delta^u(x,t,r)$ \textit{dominates} $\deltabar^u(x,t,r)$.

\begin{definition}
	(Dominance) Let $\delta^u(x,t,r)$ and $\bar{\delta}^u(x,t,r)$ be two stable sensitivity functions and $x$ be a database. Let $\alpha(x,t,r)$ refer to the gap between $\delta^u(x,t,r)$ and $\bar{\delta}^u(x,t,r)$: $\alpha(x,t,r)  = \bar{\delta}^u(x,t,r) - \delta^u(x,t,r)$. Assume that $\range = \{r_1,...,r_q\}$ is ordered such that $u(x,r_1) \geq \cdots \geq u(x,r_q)$. If $\alpha(x,t,r_1) \geq \alpha(x,t,r_2) \geq \cdots \geq \alpha(x,t,r_q) \geq 0$ for all $t \geq 0$, then $\delta^u(x,t,r)$ dominates $\bar{\delta}^u(x,t,r)$.
	\label{definition:dominance}
\end{definition}

Given that, we can affirm that an instance of the local dampening mechanism using $\delta^u(x,t,r)$ is never worse than an instance using the dominated $\deltabar^u(x,t,r)$:

\begin{restatable}{lemma}{lemmaLocalDampeningAccuracy}
	(Shifted Local Dampening Accuracy) Let $\delta^u(x,t,r)$ and $\bar{\delta}^u(x,t,r)$ be two stable functions and $x$ be a database. If $\delta^u(x,t,r)$ dominates $\bar{\delta}^u(x,t,r)$ then:
	\begin{enumerate}
		\item $Pr[\error(\mecsld,x) \geq \theta] \leq Pr[\error(\mecsldbar,x) \geq t]$ for all $\theta \geq 0$,
		\item $\expected[\error(\mecsld, x)] \leq \expected[\error(\mecsldbar, x)]$,
	\end{enumerate}
	where $\mecsld$ represents an instance of the shifted local dampening mechanism using $\delta^u$ as the sensitivity function while $\overline{\mecsld}$ is an instance using $\bar{\delta}^u$.
	\label{lemma:local_dampening_accuracy}
\end{restatable}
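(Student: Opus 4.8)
The plan is to prove the distributional statement~(1) first and then read off~(2) for free. Since $\error(\mec,x)=u^{*}-u(x,\mec(x))$ is a nonnegative discrete random variable, $\expected[\error(\mec,x)]=\int_{0}^{\infty}\prob[\error(\mec,x)\geq\theta]\,d\theta$, so integrating~(1) over $\theta\in[0,\infty)$ yields~(2). For~(1), order $\range=\{r_{1},\dots,r_{q}\}$ by decreasing utility as in Definition~\ref{definition:dominance}. For any $\theta\geq 0$ the event $\{\error(\mec,x)\geq\theta\}$ equals $\{\mec(x)\in\{r_{k},\dots,r_{q}\}\}$, where $r_{k}$ is the first element with $u(x,r_{k})\leq u^{*}-\theta$; so~(1) asserts precisely that for every lower tail $\{r_{k},\dots,r_{q}\}$ of the utility order, $\mecsld$ assigns it no more probability than $\mecsldbar$. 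I would get this through the classical route: establish a monotone likelihood ratio, i.e.\ that $i\mapsto\prob[\mecsld(x)=r_{i}]/\prob[\mecsldbar(x)=r_{i}]$ is monotone, which is well known to imply first-order stochastic dominance and hence~(1).

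The crux is a closed form for the shifted dampened utility. Both $\delta^{u}$ and $\bar\delta^{u}$ are stable, hence monotone; I treat the non-decreasing case (shift $s\to\infty$), the non-increasing case being symmetric. By Corollary~\ref{cor:existence} the output probabilities coincide with those computed at any fixed $s\geq s_{0}=n\Delta u+\max_{r'}u(x,r')$, so fix such an $s$ (slightly above $s_{0}$). Since admissibility makes $t\mapsto\delta^{u}(x,t,r)$ nondecreasing and boundedness makes it equal $\Delta u$ for $t\geq n$, we have $\sum_{j=0}^{n-1}\delta^{u}(x,j,r)\leq n\Delta u$, so $u^{s}(x,r)=u(x,r)-s$ lies below the breakpoint $b(x,-n,r)$ for every $r\in\range$. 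Identifying which linear piece of the dampening interpolation contains $u^{s}(x,r)$ and telescoping, with the saturated tail of the $\delta^{u}$-sequence collapsed to $\Delta u$, gives
\[
D_{u^{s},\delta^{u}}(x,r)=\frac{u(x,r)+C_{\delta}(x,r)}{\Delta u}-\frac{s}{\Delta u}-n,\qquad C_{\delta}(x,r)\coloneqq\sum_{j=0}^{n-1}\delta^{u}(x,j,r),
\]
and likewise for $\bar\delta^{u}$ with $C_{\bar\delta}(x,r)=\sum_{j=0}^{n-1}\bar\delta^{u}(x,j,r)$. In other words $\mecsld$ is the exponential mechanism run on the ``effective utility'' $u(x,\cdot)+C_{\delta}(x,\cdot)$ — which incidentally recovers that $\delta^{u}\equiv\Delta u$ gives back $\mecexp$.

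Substituting, the likelihood ratio equals, up to an $i$-independent positive factor,
\[
\exp\!\Big(\tfrac{\epsilon}{2\Delta u}\big(C_{\delta}(x,r_{i})-C_{\bar\delta}(x,r_{i})\big)\Big)=\exp\!\Big(-\tfrac{\epsilon}{2\Delta u}\,{\textstyle\sum_{j=0}^{n-1}}\alpha(x,j,r_{i})\Big).
\]
Since each $\alpha(x,j,\cdot)$ is monotone along the utility order by the dominance hypothesis, so is $i\mapsto\sum_{j=0}^{n-1}\alpha(x,j,r_{i})$, hence the likelihood ratio is monotone in $i$, in the direction that makes $\mecsld$ stochastically favour the high-utility elements over $\mecsldbar$. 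The monotone-likelihood-ratio property then gives $\prob[\mecsld(x)\in\{r_{k},\dots,r_{q}\}]\leq\prob[\mecsldbar(x)\in\{r_{k},\dots,r_{q}\}]$ for every $k$, which is~(1); integrating over $\theta\geq 0$ gives~(2). The non-increasing case is identical after replacing $C_{\delta}$ by $-C_{\delta}$ and sending $s\to-\infty$.

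The step I expect to be the real work is the closed-form evaluation of $D_{u^{s},\delta^{u}}(x,r)$: one must handle the piecewise-linear interpolation over the negatively indexed breakpoints $b(x,i,r)$ carefully, confirm that $s_{0}$ is large enough that $u^{s}(x,r)$ has cleared all $n$ unsaturated steps simultaneously for every $r\in\range$, and telescope the arithmetic correctly. Once that identity is in place, the reduction to a monotone-likelihood-ratio statement and the passage from stochastic dominance to the expectation bound are routine.
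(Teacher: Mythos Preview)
Your proposal is correct and follows essentially the same approach as the paper: both hinge on the closed form $D_{u^{s_0},\delta^{u}}(x,r)=\big(u(x,r)-s_0+\sum_{j=0}^{n-1}\delta^{u}(x,j,r)\big)/\Delta u - n$ once the shift has pushed every $u^{s_0}(x,r)$ past the saturated breakpoints, and both then use dominance to compare the $\alpha$-sums $\sum_{j}\alpha(x,j,r)$ along the utility order. The only cosmetic difference is packaging: the paper expands $\prob[\error(\mecsld,x)\geq\theta]-\prob[\error(\mecsldbar,x)\geq\theta]$ into cross terms and bounds them directly via the inequality $D_{u^{s},\delta}(x,r')-D_{u^{s},\delta}(x,r)\geq D_{u^{s},\bar\delta}(x,r')-D_{u^{s},\bar\delta}(x,r)$ for $u(x,r')>u(x,r)$, whereas you phrase the same inequality as a monotone likelihood ratio and invoke the standard MLR $\Rightarrow$ stochastic dominance implication, which is a cleaner abstraction of the identical computation.
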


% The proof of Lemma \ref{lemma:local_dampening_accuracy} is deferred to the Appendix \ref{ap:proofs}. 

We can use Lemma \ref{lemma:local_dampening_accuracy} as a tool understand the accuracy of the local dampening mechanism. It suggests that a sensitivity function should be as inclined as possible, i.e., a higher difference between two gaps $\alpha(x,t,r_i)$ and $\alpha(x,t,r_{i+1})$ implies higher accuracy. Also, the gaps $\alpha(x,t,r_i)$ should be as large as possible. To illustrate this, we provide an example:

\begin{example} (Shifted Local Dampening Accuracy) Let $\delta^{u}_{\beta}(x,t,r)=(u(x,r).t)/\beta, \Delta u$ be a sensitivity function where $\beta$ is a parameter to control magnitude of $\delta^{u}$ and, consequently, to control the accuracy of the shifted local dampening mechanism based on Definition \ref{definition:dominance} and Lemma \ref{lemma:local_dampening_accuracy}. 
	
Note that $\delta^{u}_{\beta}$ is monotonically non-decreasing and can be bounded by the process depicted in Section \ref{sec:bounded_functions}. Also, assume that $\delta^{u}_{\beta}$ is admissable for $u$, thus it is a stable sensitivity function.
	
Without loss of generality, suppose that  $\range = \{r_1,...,r_q\}$ is ordered such that $u(x,r_1) \geq \cdots \geq u(x,r_q)$. The gaps $\alpha(x,t,r_i)$ (Definition \ref{definition:dominance}) between two sensitivity functions $\delta^{u}_{\beta_1}$ and $\delta^{u}_{\beta_2}$ are given by $\alpha(x,t,r_i) = \delta^{u}_{\beta_1}(x,t,r_i) - \delta^{u}_{\beta_2}(x,t,r_i) = (u(x,r_i).t)/\beta_1 - (u(x,r_i).t)/\beta_2 =  u(x,r_i).t/(1/\beta_1-1/\beta_2)$, for $1 \leq i \leq q$ and $t \geq 0$. In what follows, when $\beta_1 \geq \beta_2$, $\delta^{u}_{\beta_1}$ dominates $\delta^{u}_{\beta_2}$ as $\alpha(x,t,r_i) = u(x,r_i).t/(1/\beta_1-1/\beta_2) \geq u(x,r_{i+1}).t/(1/\beta_1-1/\beta_2) \geq \alpha(x,t,r_i)$ because $u(x,r_{i}) \geq u(x,r_{i+1})$ and $(1/\beta_1-1/\beta_2) < 0$ for $t \geq 0$.

This way, the parameter $\beta$ also control the gaps $\alpha(x,t,r_i)$ since it is proportional to the term $(1/\beta_1-1/\beta_2)$. This way, we illustrate this by varying the parameter $\beta$ while fixing $\epsilon = 1.0$, $\Delta u = 100$, $\range = \{ r_1, r_2, r_3\}$ $u(x,r_1)=10$, $u(x,r_2)=20$ and $u(x,r_3)=30$. We report the expected error $\expected[]\error(\mec, x)]$.

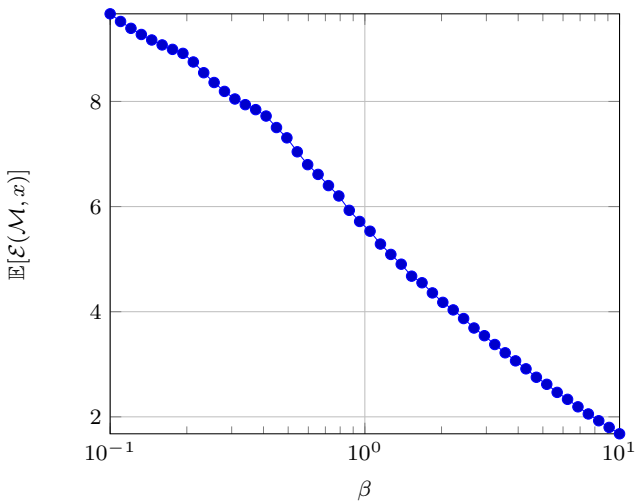
\begin{figure}[h!]
    
\center

\begin{subfigure}[b]{0.5\textwidth}
    \resizebox{\linewidth}{!}{
        \begin{tikzpicture}
            \begin{semilogxaxis}[grid=major,
                enlargelimits=false,
                xlabel={$\beta$},
                ylabel={$\expected[\error(\mec, x)]$},
                xticklabel style={
                    /pgf/number format/fixed,
                    /pgf/number format/precision=5,
                    % /pgf/number format/fixed zerofill
                }]

                \addplot table[x=beta,y=error] {./data/dummy/dummy.dat};

            \end{semilogxaxis}
        \end{tikzpicture}            
    }
    
\end{subfigure}
\caption{Expected error for shifted local dampening while varying the paramenter $\beta$.}
\end{figure}
\end{example}

\subsubsection{Comparison to the Exponential Mechanism}
\label{sec:comparison_exponential_mechanism}

A very useful property of both versions of the local dampening mechanism is that the exponential mechanism is an instance of the local dampening mechanism. The exponential mechanism is obtained by setting $\delta^u(x,t,r)=\Delta u$ in an instance of the shifted local dampening (Lemma \ref{lemma:local_dampening_exponential}). 

\begin{restatable}{lemma}{lemmalocaldampeningexponential}
	$Pr[\mecexp(x, \epsilon, u, \mathcal{R})=r] =Pr[\mecld(x, \epsilon, u, \delta^{u}, \mathcal{R})= r]= Pr[\mecsld(x, \epsilon, u, \delta^{u}, \mathcal{R}) = r]$ for $\delta^u(x,t,r)=\Delta u$, for all $x \in \domain$, for all $\epsilon \in \real$ and for all  $r \in \range$.
	\label{lemma:local_dampening_exponential}
\end{restatable}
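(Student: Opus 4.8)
The plan is to show that the single substitution $\deltau(x,t,r)=\Delta u$ collapses the dampening function to the linear rescaling $u(x,r)/\Delta u$, after which all three probability expressions become syntactically identical, up to an $r$-independent factor that cancels under normalization.

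First I would compute the breakpoints $b(x,i,r)$ for the constant sensitivity function. Since $\deltau(x,j,r)=\Delta u$ for every $j$, the recurrence defining $b$ telescopes: $b(x,i,r)=\sum_{j=0}^{i-1}\Delta u=i\,\Delta u$ for $i>0$, $b(x,0,r)=0$, and $b(x,i,r)=-b(x,-i,r)=i\,\Delta u$ for $i<0$, so $b(x,i,r)=i\,\Delta u$ for all integers $i$. (This uses $\Delta u>0$; in the degenerate case $\Delta u=0$ the utility $u$ is constant and all three mechanisms reduce to the uniform distribution on $\range$, so the claim is immediate.) For each $r$ the index selected in the definition of $D_{u,\deltau}$ is then the unique integer $i=\lfloor u(x,r)/\Delta u\rfloor$ with $u(x,r)\in[i\,\Delta u,(i+1)\Delta u)$, and substituting gives
\[
	D_{u,\deltau}(x,r)=\frac{u(x,r)-i\,\Delta u}{(i+1)\Delta u-i\,\Delta u}+i=\frac{u(x,r)}{\Delta u}.
\]

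Next I would plug this into the mechanisms. For $\mecld$, the sampling weight of $r$ becomes $\exp\!\big(\epsilon D_{u,\deltau}(x,r)/2\big)=\exp\!\big(\epsilon\,u(x,r)/(2\Delta u)\big)$, which is exactly the weight used by $\mecexp$ in Definition \ref{def:exponential_mechanism}; since both mechanisms normalize over the same range $\range$, the resulting probabilities coincide. For $\mecsld$, I would apply the same computation to the shifted utility $u^s=u-s$, obtaining $D_{u^s,\deltau}(x,r)=(u(x,r)-s)/\Delta u$, so the weight factors as $\exp\!\big(\epsilon\,u(x,r)/(2\Delta u)\big)\cdot\exp\!\big(-\epsilon s/(2\Delta u)\big)$. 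The second factor does not depend on $r$, hence cancels between numerator and denominator of the normalized probability in Definition \ref{def:shifted_local_dampening_non_decreasing} (and, symmetrically, Definition \ref{def:shifted_local_dampening_non_increasing}); the bracketed expression is therefore constant in $s$ and equals $\Pr[\mecexp(x,\epsilon,u,\range)=r]$, so its limit as $s\to\infty$ (resp. $s\to-\infty$) is the same value. Chaining the two equalities yields the statement.

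There is no serious obstacle here: the argument is a direct computation. The only points requiring a little care are (i) checking that the constant function $\Delta u$ is admissible — already observed in the paper immediately after Definition \ref{def:admissible_function} — so that $D_{u,\Delta u}$, $\mecld$ and $\mecsld$ are well-defined; (ii) handling the breakpoint index $i=\lfloor u(x,r)/\Delta u\rfloor$ uniformly across negative, zero and positive values of $u(x,r)$; and (iii) dispatching the degenerate $\Delta u=0$ case separately.
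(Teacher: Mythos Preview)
Your proposal is correct and follows essentially the same route as the paper: compute $b(x,i,r)=i\,\Delta u$, deduce $D_{u,\deltau}(x,r)=u(x,r)/\Delta u$, and then observe that the shift $s$ contributes only an $r$-independent factor that cancels under normalization. If anything, your treatment is a bit more careful than the paper's, since you handle all signs of $u(x,r)$ uniformly and dispatch the degenerate case $\Delta u=0$ explicitly.
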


Thus we can use Lemma \ref{lemma:local_dampening_accuracy} to compare any instance of the exponential mechanism using a given stable function $\delta^u(x,t,r)$ against the exponential mechanism. Note by the assumption of boundedness of the stable sensitivity function $\delta^u(x,t,r)$ we have that $\delta^u(x,t,r) \leq \Delta u$, for all $x$, $t \geq 0$ and $r \in \range$. It implies that $\delta^u(x,t,r)$ dominates $\Delta u$. Thus the following corollary holds:

\begin{corollary}
	\sloppy Let $\delta^u$ be a stable function. The shifted local dampening mechanism $\mecsld(x, \epsilon, u, \delta^{u}, \mathcal{R})$ is never worse than the exponential mechanism $\mecexp(x, \epsilon, u, \mathcal{R})$. That is:
	\begin{enumerate}
		\item $Pr[\error(\mecsld,x) \geq t] \leq Pr[\error(\mecexp,x) \geq t]$ for all $t \geq 0$,
		\item $\expected[\error(\mecsld, x)] \leq \expected[\error(\mecexp, x)]$.
	\end{enumerate}
\end{corollary}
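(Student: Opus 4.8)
The plan is to exhibit the exponential mechanism as a particular instance of the shifted local dampening mechanism and then invoke the accuracy comparison of Lemma~\ref{lemma:local_dampening_accuracy}. First I would record that the constant function $\bar\delta^u(x,t,r) \coloneqq \Delta u$ is itself a stable sensitivity function: it is admissible ($\Delta u \ge LS^u(x,0,r)$ by Definition~\ref{def:global_sensitivity}, and the second admissibility clause holds trivially for a constant), it is bounded (it already equals $\Delta u$ for every $t$, in particular for $t \ge n$), and it is flat, hence monotonic. By Lemma~\ref{lemma:local_dampening_exponential}, running the shifted local dampening mechanism with this $\bar\delta^u$ reproduces the exponential mechanism exactly, so $\mecexp(x,\epsilon,u,\range)$ and $\mecsldbar(x,\epsilon,u,\bar\delta^u,\range)$ induce the same output distribution for every $x$, $\epsilon$ and $r$.

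Next I would verify that the given stable function $\delta^u$ dominates $\bar\delta^u = \Delta u$ in the sense of Definition~\ref{definition:dominance}. The key point is that stability forces $\delta^u(x,t,r) \le \Delta u$ for all $x$, all $t \ge 0$ and all $r \in \range$: admissibility (clause~2 with $y = x$) makes $t \mapsto \delta^u(x,t,r)$ non-decreasing, while boundedness pins it to $\Delta u$ at $t = n$, so it never exceeds $\Delta u$. Consequently the gap $\alpha(x,t,r) = \bar\delta^u(x,t,r) - \delta^u(x,t,r) = \Delta u - \delta^u(x,t,r)$ is non-negative everywhere. It then remains to check the monotone ordering $\alpha(x,t,r_1) \ge \cdots \ge \alpha(x,t,r_q)$ along $u(x,r_1) \ge \cdots \ge u(x,r_q)$; since $\bar\delta^u$ is flat, this ordering is controlled entirely by the monotonicity class of $\delta^u$, and one uses the version of the shifted mechanism (Definition~\ref{def:shifted_local_dampening_non_decreasing} or~\ref{def:shifted_local_dampening_non_increasing}) matching that class for both instances, exploiting that $\Delta u$ is simultaneously non-decreasing and non-increasing (in the flat case, e.g.\ $\delta^u = LS^u(x,t)$, $\alpha$ is even constant in $r$, so the ordering is an equality). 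Hence $\delta^u$ dominates $\Delta u$.

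Finally, I would apply Lemma~\ref{lemma:local_dampening_accuracy} with the dominating pair $(\delta^u, \bar\delta^u) = (\delta^u, \Delta u)$: it yields $Pr[\error(\mecsld,x) \ge t] \le Pr[\error(\mecsldbar,x) \ge t]$ for all $t \ge 0$ and $\expected[\error(\mecsld,x)] \le \expected[\error(\mecsldbar,x)]$, and by the distributional identity from Lemma~\ref{lemma:local_dampening_exponential} the right-hand sides equal $Pr[\error(\mecexp,x) \ge t]$ and $\expected[\error(\mecexp,x)]$ respectively, which is exactly the claim.

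I expect the delicate step to be the second one: confirming that a flat $\bar\delta^u$ slots cleanly into the dominance relation no matter which monotonicity class $\delta^u$ belongs to, and that the sign of the gap ordering is consistent with the direction of the infinite shift ($s \to \infty$ versus $s \to -\infty$) used in the matching version of the shifted local dampening mechanism. Once that bookkeeping is settled, everything else is a direct appeal to the two lemmas, with no further computation needed.
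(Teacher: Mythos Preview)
Your approach matches the paper's: identify the exponential mechanism as the shifted local dampening instance with $\bar\delta^u \equiv \Delta u$ via Lemma~\ref{lemma:local_dampening_exponential}, argue that the given stable $\delta^u$ dominates $\Delta u$, and conclude by Lemma~\ref{lemma:local_dampening_accuracy}. The paper's own justification is in fact terser than yours—it invokes only boundedness (``$\delta^u(x,t,r)\le\Delta u$'') to claim dominance—whereas you correctly flag that the ordered-gap condition in Definition~\ref{definition:dominance} also needs the monotonicity class of $\delta^u$ to line up with the shift direction; that extra care is warranted and does not diverge from the intended argument.
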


Note that this result implies that we can use local sensitivity $LS^u(x,t)$ safely since $LS^u(x,t)$ is a stable sensitivity function, i.e., using the shifted local dampening mechanism with $LS^u(x,t)$ as the sensitivity function is never worse than the exponential mechanism. Yet, it suggests that the larger the difference between $LS^u(x,t)$ and $\Delta u$, the more accurate it is in relation to the exponential mechanism.

This result also suggests that using the $\Delta u$ as a sensitivity function is the worst-case stable function. Given that, what would be the best stable function? The element local sensitivity $LS^u(x,t,r)$ function is a good candidate. As shown before, $LS^u(x,t,r)$ is admissable and bounded. However, $LS^u(x,t,r)$ is not necessarily monotonic. We demonstrate that $LS^u(x,t,r)$ is minimum admissable, i.e. it dominates all admissable functions:

\begin{restatable}{lemma}{lemmaminimumls}
    \sloppy $LS^u(x, t, r)$ is minimum admissable, i.e. $LS^u(x,t, r)$ dominates any admissible sensitivity function $\delta^u(x, t, r)$.
    \label{lemma:minimumls}
\end{restatable}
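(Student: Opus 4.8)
The plan is to prove the sharper pointwise statement: every admissible sensitivity function $\delta^u$ satisfies $\delta^u(x,t,r) \ge LS^u(x,t,r)$ for all $x \in \domain$, all integers $t \ge 0$, and all $r \in \range$. Since $LS^u$ is itself admissible (Lemma~\ref{lemma:admissiblels}), this exhibits $LS^u$ as the pointwise minimum of the admissible functions; in particular the gap $\alpha(x,t,r) = \delta^u(x,t,r) - LS^u(x,t,r)$ of Definition~\ref{definition:dominance} is everywhere nonnegative, which is the substantive content of $LS^u$ dominating $\delta^u$, and justifies the name ``minimum admissible''.

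First I would record two easy consequences of admissibility that drive the argument. (i) Iterating the second admissibility condition with $x=y$ shows that $t \mapsto \delta^u(x,t,r)$ is nondecreasing, so $\delta^u(x,0,r) \le \delta^u(x,t,r)$ for every $t$. (ii) The identity $LS^u(x,t,r) = \max_{y : d(x,y)\le t} LS^u(y,0,r)$, noted right after Definition~\ref{def:item_local}. Then I would prove $\delta^u(x,t,r) \ge LS^u(x,t,r)$ by induction on $t$. The base case $t=0$ is exactly the first admissibility condition. For the inductive step, fix any $y$ with $d(x,y) \le t+1$. If $y = x$, then $LS^u(y,0,r) = LS^u(x,0,r) \le \delta^u(x,0,r) \le \delta^u(x,t+1,r)$ by the base case and (i). Otherwise $x$ and $y$ differ in a nonempty set of coordinates; flipping one such coordinate of $x$ to the corresponding value of $y$ produces a database $z \in \domain$ with $d(x,z) = 1$ and $d(z,y) = d(x,y)-1 \le t$. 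Hence $LS^u(y,0,r) \le LS^u(z,t,r) \le \delta^u(z,t,r) \le \delta^u(x,t+1,r)$, using (ii), the inductive hypothesis applied at $z$, and the second admissibility condition on the pair $(x,z)$. Taking the maximum over all such $y$ and invoking (ii) once more yields $LS^u(x,t+1,r) \le \delta^u(x,t+1,r)$, completing the induction.

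Finally I would conclude: applying this inequality to an arbitrary admissible $\delta^u$ gives $\alpha(x,t,r) = \delta^u(x,t,r) - LS^u(x,t,r) \ge 0$ for all $t \ge 0$ and all $r \in \range$, so $LS^u$ dominates $\delta^u$ in the sense of Definition~\ref{definition:dominance}; and since $LS^u$ is itself admissible, no admissible function can be strictly smaller at any point, so $LS^u$ is the minimum admissible sensitivity function.

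I expect the only real obstacle to be the bookkeeping in the inductive step, specifically making precise the ``move one step from $x$ toward $y$'' decomposition $d(x,y) \le t+1 \implies \exists z : d(x,z) \le 1,\ d(z,y) \le t$, and checking that it meshes with the two admissibility conditions correctly (the distance-$1$ condition is applied to $(x,z)$, while the inductive hypothesis supplies the level-$t$ bound at $z$), as well as remembering that the $y=x$ case needs the monotonicity-in-$t$ consequence (i). Everything else is routine.
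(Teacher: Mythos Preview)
Your proposal is correct and follows essentially the same approach as the paper: an induction on $t$ that uses the ball decomposition $\{y:d(x,y)\le t+1\}=\bigcup_{z:d(x,z)\le 1}\{y:d(z,y)\le t\}$, the inductive hypothesis at each neighbor $z$, and the second admissibility condition to pass from level $t$ at $z$ to level $t+1$ at $x$. The paper's version is slightly more compact (it folds your $y=x$ case into the decomposition since $d(x,x)\le 1$), but the logic is the same; you are also right that both arguments establish only the pointwise inequality $LS^u\le\delta^u$, which is what ``minimum admissible'' really means here rather than full dominance in the ordered sense of Definition~\ref{definition:dominance}.
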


Even if $LS^u(x,t,r)$ happens to be non monotonic, we devote the next subsection to discuss that only a ``weak'' monotonicity is enough for our mechanism. Also, we discuss other cases where the local dampening mechanism also performs well.

\subsection{Relaxing Monotonicity}
\label{sec:relaxing_monotonicity}

We have shown theoretical guarantees for the accuracy of the shifted local dampening mechanism using stable functions. For sensitivity functions like the global sensitivity $\Delta u$ and the local sensitivity $LS^u(x,t)$ we have strong accuracy guarantees.

Strict monotonicity may be a complex goal to achieve. In the applications and datasets analysed in our experimental section, none of them satisfy the strict monotonicity requirement. Yet, the shifted local dampening mechanism outperforms the exponential mechanism in our experiments.

For those sensitivity functions that violates monotonicity, we use the results in Section \ref{sec:acc_stable_function} as guide to construct a good sensitivity function. The same analysis also works here, Lemma \ref{lemma:local_dampening_accuracy} suggests that a sensitivity function should be as inclined as possible, i.e., a higher difference between two gaps $\alpha(x,t,r_i)$ and $\alpha(x,t,r_{i+1})$ implies in higher accuracy. Also, the gaps $\alpha(x,t,r_i)$ should be as large as possible.

%  and, along with Lemma \ref{lemma:minimumls}, it should produce lower values.

For the running example of this paper (Example \ref{example:example1}), we designed an admissible function $\delta^{EBC}$ stated in Definition \ref{def:delta_ebc} for the use of the shifted local dampening mechanism. Figure \ref{fig:monotonicity} displays the value of $u(x,r)$ on the x-axis against $\delta^u(x,0,r)$ for the Enron graph database \cite{snapnets}. This example clearly violates strict monotonicity. But it shows a weak monotonicity for the sensitivity function $\delta^EBC$ in the sense that $EBC(x,r)$ is still positively correlated with $\delta^{EBC}(x,t,r)$ with respect to $r$. 

\begin{figure}[h]
	\caption{Correlation between $EBC(x,r)$ and $\delta^{EBC}(x,0,r)$ for EBC metric for Enron Dataset.}
	
    \includegraphics[width=0.5\textwidth]{./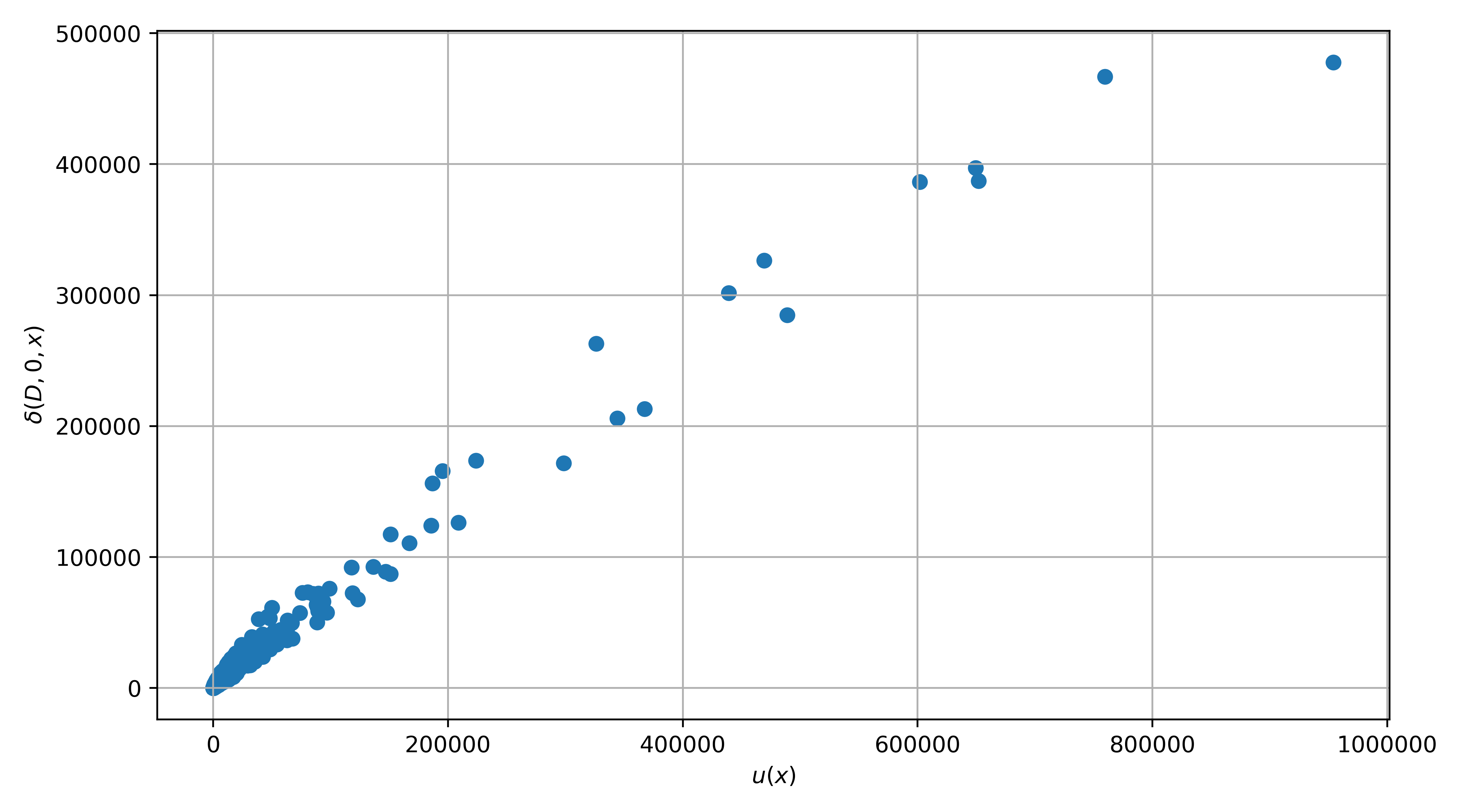}
	
	\label{fig:monotonicity}
\end{figure}

We argue that this kind of behavior is enough for a good performance of the shifted local dampening. Our empirical results corroborates with this argument.

Additionally, Lemma \ref{lemma:local_dampening_accuracy} also suggests that functions that do not exhibit correlation but have lower value than $\Delta u$ also perform well which is the case for local sensitivity $LS^u(x,t)$.

\section{Application 1: Percentile Selection}
\label{section:percentile_selection}

In this Section, we address the percentile selection problem. The goal is to output the label of the element with the closest value to the p-th percentile from a set of real numbers. 

Nissim et al \cite{nissim2007smooth} and McKenna et al \cite{mckenna2020permute} have tackled reduced versions of this problem that work only with medians. Nissim et al \cite{nissim2007smooth} tackled the numeric version of the median selection problem where the task is to return the median value itself and not the label of the median element. McKenna et al \cite{mckenna2020permute} addressed binned version of the median selection problem where the data is binned in $k$ buckets and the goal is to return the median bin. The latter version has a low global sensitivity.

% We provide an experimental comparison against the exponential mechanism \cite{mcsherry2007mechanism} and permute-and-flip \cite{mckenna2020permute}.

\subsection{Problem Statement}

Given a database $x \in \real^n$ represented as a vector of real numbers $<x_1,\cdots,x_n>$. For simplicity's sake, assume that every database $x$ is ordered such that $x_1 \leq \cdots \leq x_n$. Suppose that all the values lies in $[0,\Lambda]$, $0 \leq x_1 \leq \cdots \leq x_n \leq \Lambda$.

The task is to return the index $i$ where its element $x_i$ is as close as possible to the p-th percentile element $x_k$ where $k = \ceil{\frac{p (n+1)}{100}}$. Note that $\range = \{1, \cdots, n\}$. The utility function for a given index $i$ is the distance from $x_i$ to $x_k$ multiplied by $-1$ so that closer elements have higher utility score:
\begin{definition}
    (Utility function for percentile selection problem).
    $u_p(x,i)=-|x_k-x_i|$, where $k = \ceil{\frac{p (n+1)}{100}}$.
\end{definition}

\subsection{Private Mechanism and Sensitivity Analysis}

This problem is solved by a single call to a non-numeric mechanism. Here we use the exponential mechanism and the permute-and-flip mechanism to compare to the local dampening mechanism. The exponential mechanism and the permute-and-flip mechanism require the computation of the global sensitivity $\Delta u_p$ while the local dampening mechanism requires the computation of the element local sensitivity $u_p$. 

\subsubsection{Global Sensitivity}

% In this section, we calculate the global sensitivity $\Delta \umed$ for the use of the exponential mechanism, the permute-and-flip mechanism and the local dampening mechanism. Afterwards, we calculate the element local sensitivity at distance $t$ of $\umed$ for the local dampening mechanism.

The global sensitivity $\Delta u_{p}$ is set by the following scenario: let $p=50$ implying that $k=\ceil{\frac{n+1}{2}}$. Let $x \in \real^n$, for $n>2$, be a database where $x_1=x_2=\cdots=x_{n-1}=0$ and $x_n=\Lambda$. Let $y = <y_1,\cdots,y_n> \in \real^n$ be a neighboring database of $x$ obtained from $x$ by changing the value of $x_{n}$ to $0$, $y_{n}=0$. Thus we have that $y_1,...,y_{n}=0$. In what follows, $u_p(x,n)=\Lambda$ and $u_p(y,n)=0$ as $x_k = y_k = 0$ which implies that $u(x,n) - u(y,n) = \Lambda$.

% \begin{equation}
%     \Delta u = \max_{i \in \{ 1,...,n \}} \max_{x,y | d(x,y) \leq 1} |u(x,i) - u(y,i)|
% \end{equation}

This happens to be the largest possible $|u_p(x,i) - u_p(y,i)|$ since $|u_p(x,i) - u_p(y,i)| \leq \Lambda$ for any $x,y \in \real^n$ and $i \in [1,n]$. The latter follows from the fact that the distance from $x_i$ to $x_k$ is positive and smaller than $\Lambda$, $0 \leq u_p(x,i) \leq \Lambda$ and $0 \leq u_p(y,i) \leq \Lambda$. 

\begin{lemma}
    (Percentile Selection Global Sensitivity) $\Delta u_p = \Lambda$.
    \label{lemma:percentile_global}
\end{lemma}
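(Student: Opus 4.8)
The plan is a two-sided bound: first establish $\Delta u_p \le \Lambda$ from the range of the utility function, and then establish $\Delta u_p \ge \Lambda$ by producing a single pair of neighboring databases that realizes a utility gap of $\Lambda$ at some index.

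For the upper bound, I would start from the observation that every entry of an admissible database lies in $[0,\Lambda]$, so for any database $x$ and any index $i$ both $x_i$ and the percentile value $x_k$ lie in $[0,\Lambda]$; hence $|x_k - x_i| \le \Lambda$ and $u_p(x,i) = -|x_k - x_i| \in [-\Lambda,0]$. Therefore for \emph{any} two databases $x,y$ and any $i$ we get $|u_p(x,i) - u_p(y,i)| \le \Lambda$, and taking the maximum over $i \in \range$ and over neighboring pairs in Definition~\ref{def:global_sensitivity} gives $\Delta u_p \le \Lambda$.

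For the lower bound, I would exhibit the instance sketched in the text. Fix $n>2$ and take $p = 50$, so that $k = \ceil{(n+1)/2}$ satisfies $2 \le k \le n-1$. Let $x$ be the sorted database with $x_1 = \cdots = x_{n-1} = 0$ and $x_n = \Lambda$, and let $y$ be obtained by changing the last coordinate to $0$, so $y$ is the all-zeros database; then $y$ is still sorted and $d(x,y)=1$. Since $k \le n-1$, the percentile element is unaffected, $x_k = y_k = 0$, so $u_p(x,n) = -|0-\Lambda| = -\Lambda$ while $u_p(y,n) = -|0-0| = 0$, giving $|u_p(x,n) - u_p(y,n)| = \Lambda$ and hence $\Delta u_p \ge \Lambda$. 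Combining the two inequalities closes the proof.

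The calculation is routine; the points that need care are essentially bookkeeping — checking that $y$ is still a sorted database after the single-coordinate change, that $d(x,y)$ is exactly $1$, and that $k$ is a valid index strictly less than $n$ (this is where $n>2$ enters) so that modifying coordinate $n$ leaves the percentile value fixed. The only genuinely nontrivial wrinkle, if one wants the statement for \emph{all} $p$ rather than just the illustrative $p=50$, is the regime where $k=n$: there the displayed pair yields a zero gap, and one should instead start from the all-zeros database, flip one coordinate up to $\Lambda$, and use an index $i \ne k$ as the witness, which again produces a gap of $\Lambda$.
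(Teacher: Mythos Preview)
Your proposal is correct and mirrors the paper's own argument almost exactly: the paper gives the same upper bound from the range constraint $u_p \in [-\Lambda,0]$ and the same lower-bound witness (the $p=50$, $n>2$ database with $x_1=\cdots=x_{n-1}=0$, $x_n=\Lambda$, flipped to all zeros). Your added remark about the $k=n$ regime for general $p$ goes slightly beyond what the paper spells out, but is in the same spirit.
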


\subsubsection{Element local sensitivity}

\textbf{Element local sensitivity at distance 0}. Before calculating the element local sensitivity of $u_p$ at distance $t$, we discuss how to compute the element local sensitivity at distance $0$ $LS^{u_p}(x,0,i)$.

Observe that a naive computation of $LS^{u_p}(x,0,i)$ is infeasible. It needs to iterate over each neighboring database $y$ of $x$ and take $|u(x,i)-u(y,i)|$. The number of neighboring databases is infinite because we can set a given $x_i$ to any real value in $[0,\Lambda]$.

$$ LS^{u_p}(x,0,i) = \max_{y | d(x,y) \leq 1}|u_p(x,i)-u_p(y,i)|$$

Thus we provide a way to efficiently compute $LS^{u_p}(x,0,i)$ in $O(1)$ time complexity (Lemma \ref{lemma:percentile_local_sensitivity_distance_0}).

\begin{restatable}{lemma}{lemmalocalsensitivitydistancezero}
    (Percentile Selection Element Local Sensitivity at distance 0)

    \begin{align*}
        LS^{u_p}(x,0,i)  = & \max(|x_k-x_i|, x_{k+1} - x_{k},  \\
        & x_k - x_{k-1}, p(x,i), q(x,i)), \\
    \end{align*}    
    where
    \begin{equation*}
        p(x,i) = \max
		\begin{cases}
			\Lambda - x_i & \text{if} \ i > k \\
            \Lambda - x_{k+1}& \text{if} \ i = k \\
			\Lambda + x_i - 3 x_k +x_{k+1} & i < k
        \end{cases}, 
    \end{equation*}
    \begin{equation*}
        q(x,i) = \max
        \begin{cases}
            x_i & \text{if} \ i > k \\
            x_{k-1} & \text{if} \ i = k \\
            3 x_k - x_i - x_{k-1} & i < k
        \end{cases},
    \end{equation*}
    $0 \leq x_1 \leq \cdots \leq x_n \leq \Lambda$
    and $k = \ceil{\frac{p (n+1)}{100}}$
    \label{lemma:percentile_local_sensitivity_distance_0}
\end{restatable}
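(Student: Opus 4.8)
The plan is to establish the identity in two directions: first that the displayed expression is an upper bound on $LS^{u_p}(x,0,i)$, i.e.\ that $|u_p(x,i)-u_p(y,i)|$ never exceeds it for any neighbour $y$ of $x$; and second that each of the five quantities inside the outer $\max$ is attained (or dominated by an attained quantity) by an explicit choice of $y$, so that the bound is tight. Throughout I write $x_k$ for the $p$-th percentile value (the $k$-th order statistic, which equals the $k$-th coordinate because $x$ is sorted) and $x_i$ for the value of individual $i$, and likewise $y_k,y_i$ for a neighbour $y$; a neighbour is obtained by resetting one individual's value to some $v\in[0,\Lambda]$.

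For the upper bound, fix such a $y$ and split on which individual $j$ was modified relative to $i$ and to the percentile position $k$. The key structural fact is that a single modification moves the percentile value by at most one rank, so $y_k\in\{x_{k-1},x_k,x_{k+1},v\}$. If $j\notin\{i,k\}$ then $y_i=x_i$ and $y_k\in\{x_{k-1},x_k,x_{k+1}\}$, and expanding $\bigl|\,|x_k-x_i|-|y_k-x_i|\,\bigr|$ with the elementary inequality $\bigl|\,|a-b|-|c-d|\,\bigr|\le|a-c|+|b-d|$ yields exactly the $x_{k+1}-x_k$ and $x_k-x_{k-1}$ terms. If $j=k$ then $y_i=x_i$ but $v$ ranges over $[0,\Lambda]$ and may itself be the new percentile or push it to $x_{k\pm1}$; bounding $|v-x_i|\le\Lambda$ and combining with the rank shift produces the $p(x,i),q(x,i)$ branches, which must be written out separately for $i<k$, $i=k$, $i>k$ because the sign of $x_k-x_i$ and the identity of the element that lands at rank $k$ depend on it. The case $j=i$ is symmetric: $y_k\in\{x_{k-1},x_k,x_{k+1},v\}$, $y_i=v\in[0,\Lambda]$, and the same triangle-type bound together with $0\le x_\ell\le\Lambda$ gives terms all dominated by those already listed. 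Collecting the cases gives ``$\le$''.

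For tightness I would exhibit a concrete feasible neighbour realizing each branch: for $x_{k+1}-x_k$, raise some individual whose value lies just below the percentile up past $x_{k+1}$ while leaving individual $i$ fixed, so $y_k=x_{k+1}$, $y_i=x_i$; for $p(x,i)$ with $i>k$, reset individual $i$ to $\Lambda$, leaving $y_k=x_k$; for $q(x,i)$ with $i=k$, reset the percentile individual to $0$, so $y_k=x_{k-1}$; and analogous moves --- pushing the modified element to $0$ or $\Lambda$ while simultaneously shifting the percentile rank up or down --- for the $i<k$ branches. One must check that every such $y$ stays in $\domain$ and that the asserted value of $y_k$ really results after re-ranking, and the endpoint cases $k=1$ and $k=n$ (where $x_{k-1}$, resp.\ $x_{k+1}$, is replaced by $0$, resp.\ $\Lambda$) need to be handled on their own.

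The main obstacle is the bookkeeping in the tightness direction for $i<k$, where the worst neighbour must simultaneously (i) displace the percentile value by one rank and (ii) move individual $i$ to an extreme of $[0,\Lambda]$, and one has to verify that these two effects compose to give precisely the opaque expressions $\Lambda+x_i-3x_k+x_{k+1}$ and $3x_k-x_i-x_{k-1}$ --- that no feasible neighbour does better and that this particular one is feasible. Getting the interaction between re-ranking and the clamp to $[0,\Lambda]$ exactly right, uniformly over the relative order of $i$ and $k$, is essentially all the work; once the formula is pinned down, the $O(1)$ time bound is immediate since it is a fixed arithmetic expression in $x_i,x_{k-1},x_k,x_{k+1}$ and $\Lambda$.
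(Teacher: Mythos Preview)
Your proposal is correct and follows essentially the same approach as the paper: a two-direction argument with a case analysis on which individual is modified (and where it lands relative to the percentile) for the upper bound, paired with explicit neighbour constructions---pushing the modified entry to $0$ or $\Lambda$ so as to simultaneously shift the percentile rank---for tightness. The paper organises the cases by whether the modified element equals the query element $r$ and then by the sign of $v_x(r)-v_x(p_x)$, whereas you split first on $j\in\{i,k\}$; these are equivalent decompositions and lead to the same concrete neighbours and the same arithmetic identities in each branch.
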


% Due to space constraints, a sketch of the proof of Lemma \ref{lemma:percentile_global} is given in \cite{mythesis}.

% he proof of Lemma \ref{lemma:percentile_local_sensitivity_distance_0} is deferred to the Appendix \ref{ap:proofs}.

\textbf{Element local sensitivity at distance $t$}. Now we proceed to compute $LS^{u_{med}}(x,t,r)$. 

$$ LS^{u_{med}}(x,t,r) = \max_{y| d(x,y) \leq t} LS^{u_{med}}(y,0,r)$$

Given a distance $t$, our task is to compute $LS^{u}(y,0,r)$ over all $y$ such that $d(x,y) \leq t$. A naive brute force approach would be infeasible since there are infinite databases at distance $t$ from $x$ as discussed previously for the computation of element local sensitivity at distance 0. 

However, there is a small subset, referred as $candidates(x,t,r)$, of $\{y| d(x,y) \leq t\}$ where we can evaluate $LS^{u}(y,0,r)$ only on the databases of $candidates(x,t,r)$ to obtain $LS^{u}(x,t,r)$. The databases $\{y| d(x,y) \leq t\} - candidates(x,t,r)$ are safe to discard, i.e., it exists a database $y \in candidates(x,t,r)$ where $LS^{u}(x,t,r) = LS^{u}(y,0,r)$. So we rewrite the element local sensitivity of $\umed$ as:

\begin{restatable}{lemma}{lemmalocalsensitivitydistancetmedian}
    (Element local sensitivity at distance $t$ for percentile selection)
    \begin{equation*}        
        LS^{u_{med}}(x,t,r) = \max_{candidates(x,t,r)} LS^{u_{med}}(y,0,r).
    \end{equation*}
    \label{lemma:local_sensitivity_distance_t_median}
\end{restatable}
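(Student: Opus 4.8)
The plan is to collapse the supremum over the infinite set $\{y : d(x,y)\le t\}$ into a maximum over a finite, polynomially-sized family $candidates(x,t,r)$, exploiting the closed form from Lemma~\ref{lemma:percentile_local_sensitivity_distance_0}. The first observation is that $LS^{\umed}(y,0,r)$ depends on the sorted database $y$ only through the constant-size tuple $(y_{k-1},\,y_k,\,y_{k+1},\,y_r)$ together with the fixed constants $\Lambda$ and $k=\ceil{p(n+1)/100}$: every term appearing in Lemma~\ref{lemma:percentile_local_sensitivity_distance_0} --- including the branches of $p(\cdot)$ and $q(\cdot)$, whose case split on $r<k$, $r=k$, $r>k$ is determined by the \emph{fixed} indices $r$ and $k$ --- is an affine expression in these four order statistics. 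Hence it suffices to search over the tuples $(y_{k-1},y_k,y_{k+1},y_r)$ realizable by some $y$ with $d(x,y)\le t$, and then to argue that only finitely many ``extremal'' such tuples matter.

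Second, I would establish branch-wise monotonicity. The quantity $LS^{\umed}(y,0,r)$ is a maximum of a constant number of affine functions, and each such function is monotone (non-decreasing or non-increasing) in each of $y_{k-1},y_k,y_{k+1},y_r$ separately --- for instance $\Lambda+y_r-3y_k+y_{k+1}$ is increasing in $y_r$ and $y_{k+1}$ and decreasing in $y_k$. Because the four relevant order statistics can only be shifted by spending coordinate changes and all admissible values lie in $[0,\Lambda]$, for any fixed branch one maximizes by pushing each changed coordinate to one of the two extremes $0$ or $\Lambda$. This replaces the continuum of values a modified coordinate may take by the two-element set $\{0,\Lambda\}$.

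Third, a rank-exchange argument removes the remaining combinatorial blow-up. Lowering $y_{k-1},y_k,y_{k+1}$ is best achieved by dropping the currently largest coordinates lying at sorted rank $\le k$ (resp.\ $\le k\pm1$) down to $0$; symmetrically, raising them is best achieved by lifting the currently smallest coordinates at rank $\ge k$ (resp.\ $\ge k\pm1$) up to $\Lambda$; and the value $y_r$ is controlled either indirectly by these same moves or by at most one dedicated move on the element occupying rank $r$. Consequently an optimizer can always be chosen from the family obtained by fixing a split $a+b\le t$ (plus at most one dedicated move), setting the $a$ largest coordinates of a bottom block to $0$ and the $b$ smallest of a top block to $\Lambda$; I would \emph{define} $candidates(x,t,r)$ to be exactly this family, which has size $O(t^2)$ (or $O(t^2 n)$ once the dedicated move is parameterized) and is thus efficiently enumerable. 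Since each reduction only increases or preserves $LS^{\umed}(\cdot,0,r)$, while $candidates(x,t,r)\subseteq\{y:d(x,y)\le t\}$ gives the reverse inequality for free, the two maxima coincide, which is exactly Lemma~\ref{lemma:local_sensitivity_distance_t_median}.

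The main obstacle is the rank-exchange step. Its difficulty is that a single coordinate change re-sorts the database, so a phrase like ``the largest coordinate of rank $\le k$'' refers to a moving target as successive modifications are applied, and one must also account for the fact that altering raw values can permute many sorted positions at once. I would handle this by an explicit exchange lemma: starting from an optimal $y$, whenever it changes a coordinate to a non-extremal value, or changes a coordinate that is not the most boundary-adjacent available one, swap that move for the canonical extremal move and verify --- using the branch-wise monotonicity of step two --- that no term of the maximum decreases; iterating terminates at a member of $candidates(x,t,r)$. Some additional care is needed for ties in $x$ and for the degenerate cases $k=1$ or $k=n$, where $y_{k-1}$ or $y_{k+1}$ is absent.
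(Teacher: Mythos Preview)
The paper does not actually prove this lemma in the text; it explicitly defers the argument to an external reference (\cite{mythesis}). What the paper \emph{does} fix, however, is the meaning of $candidates(x,t,r)$: it is the output of Algorithm~\ref{algo:candidates_median}, a concrete recursive procedure that always returns a tuple of exactly six databases (one for $t=0$). At $t=1$ the six candidates are obtained from $x$ by replacing either index $r$ or index $k$ by $0$ or by $\Lambda$ (with two copies of $x$ itself); for $t>1$ each of the six candidates at level $t-1$ is modified by one further $CopyAndReplace$ on the current index $k$ with value $0$ or $\Lambda$.

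Your proposal treats $candidates$ as something you are free to \emph{define} as part of the proof, and you set it to an $O(t^2)$-sized family indexed by how many coordinates are pushed down to $0$ versus up to $\Lambda$. That is not the object appearing in the lemma, so as written you would be proving a different --- and strictly weaker --- statement. The high-level machinery you propose (dependence only on a handful of order statistics, branch-wise affine monotonicity pushing modified entries to $\{0,\Lambda\}$, an exchange argument) is sound and is almost certainly what underlies the referenced proof as well. The missing piece is a further reduction from your $O(t^2)$ family down to the paper's six trajectories: you must argue that (i) at most one unit of the edit budget is ever profitably spent on index $r$ itself, and (ii) every remaining unit is optimally spent by repeatedly driving the \emph{current} median position $k$ to an extreme, so that any split $a+b\le t$ you enumerate is dominated by one of the six recursive trajectories of Algorithm~\ref{algo:candidates_median}. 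Without that sharpening, your argument establishes a correct but looser bound, not the lemma as stated.
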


Algorithm \ref{algo:candidates_median} depicts how to compute $candidates(x,t,r)$. The $CopyAndReplace(x, i, v)$ function used in this algorithm makes a copy $x'$ of the input database $x$, sets the element $x'_i$ to the value $v$ and returns the ordered dataset $x'$. The algorithm $candidates(x,t,r)$ returns a subset of only $6$ databases of $\{y| d(x,y) \leq t\}$ that maximizes $\max_{y| d(x,y) \leq t} LS^{u_{med}}(y,0,r)$. A sketch of the proof of Lemma \ref{lemma:local_sensitivity_distance_t_median} can be found in \cite{mythesis}.

\begin{algorithm}[htp]
    \SetAlgoLined\DontPrintSemicolon
    \SetKwFunction{algo}{Candidates(Dataset $x$, distance $t$, range element $r$)}
    \SetKwProg{myalg}{Procedure}{}{}
    \myalg{\algo}{

    \If{$t = 0$}{        
        \KwRet{$(x)$}\;
    }    
    $k$=$\ceil{\frac{n+1}{2}}$ \;
    \If{$t = 1$}{        
        % $x_1'$ =  $x$ replacing $x_r$ to $\Lambda$\;
        $x_1'$ = $CopyAndReplace(x,r,\Lambda)$\;
        % $x_2'$ as a copy of $x_1'$ \;
        $x_2'$ = $Copy(x)$\;
        % Obtain $x_3'$ from $x$ by moving $x_r$ to $0$\;
        $x_3'$ = $CopyAndReplace(x,r,0)$\;
        % Obtain $x_4'$ as copy of $x_4'$\;
        $x_4'$ = $Copy(x)$\;
        % Obtain $x_5'$ from $x$ by moving the median element $x_m$ to $\Lambda$\;
        $x_5'$ = $CopyAndReplace(x,k,\Lambda)$\;
        % Obtain $x_6'$ from $x$ by moving the median element $x_m$ to $0$\;
        $x_6'$ = $CopyAndReplace(x,k,0)$\;
        \KwRet{$(x_1', x_2', x_3', x_4', x_5', x_6')$}\;
    }
    $x_1, x_2, x_3, x_4, x_5, x_6 = Candidates(x, t-1, r)$ \;
    % Obtain $x_1'$ from $x_1$ by moving the median element $x_m$ to  to $0$\;
    $x_1'$ = $CopyAndReplace(x_1,k,0)$ \;
    % Obtain $x_2'$ from $x_2$ by moving the median element $x_m$ to  to $\Lambda$\;
    $x_2'$ = $CopyAndReplace(x_2,k,\Lambda)$ \;
    % Obtain $x_3'$ from $x_3$ by moving the median element $x_m$ to  to $0$\;
    $x_3'$ = $CopyAndReplace(x_3,k,0)$ \;
    % Obtain $x_4'$ from $x_4$ by moving the median element $x_m$ to   to $\Lambda$\;
    $x_4'$ = $CopyAndReplace(x_4,k,\Lambda)$ \;
    % Obtain $x_5'$ from $x_5$ by moving the median element $x_m$ to  to $0$\;
    $x_5'$ = $CopyAndReplace(x_5,k,0)$ \;
    % Obtain $x_6'$ from $x_6$ by moving the median element $x_m$ to  to $\Lambda$\;
    $x_6'$ = $CopyAndReplace(x_6,k,\Lambda)$ \;
    \KwRet{$(x_1', x_2', x_3', x_4', x_5', x_6')$}\;}{}
    
    \addtocontents{loa}{\protect\addvspace{-9pt}}
    \caption{Candidates Algorithm}
    % \addtocontents{loa}{\vskip -9pt}
    \label{algo:candidates_median}
\end{algorithm}

\subsection{Experimental Evaluation}

\textbf{Datasets.} We tested most of the datasets from Hay et al \cite{hay2016principled}. Our approach is specially accurate when the local sensitivity is far from the global sensitivity. We report the results for two datasets where the local sensitivity is reasonably smaller from the global sensitivity: PATENT dataset and HEPTH dataset. Also we show one dataset where the local sensitivity is very close to the global sensitivity to show our approach behaves on this scenario: INCOME dataset.

\textbf{Methods.} We test three approaches for the percentile selection problem: i) EMPercentileSelection, the exponential mechanism using global sensitivity; ii) PFPercentileSelection, the permute-and-flip mechanism using the global sensitivity and iii) LDPercentileSelection, the local dampening mechanism using local sensitivity.

\textbf{Evaluation.} We measure the error: $|retrieved\_percentile\_value - true\_percentile\_value|$. For the EMPercentileSelection and LDPercentileSelection methods, we report the expected error and for the PFPercentileSelection we report the mean error over $100,000$ runs. We vary $\epsilon \in [10^{-1},10^2]$ and $p=50,95,99$.

\pgfplotsset{
cycle list={%
{color=black,solid},
{color=blue,dotted, very thick},
{color=red,loosely dashed, very thick},
{color=green,loosely dashed, very thick},
}}
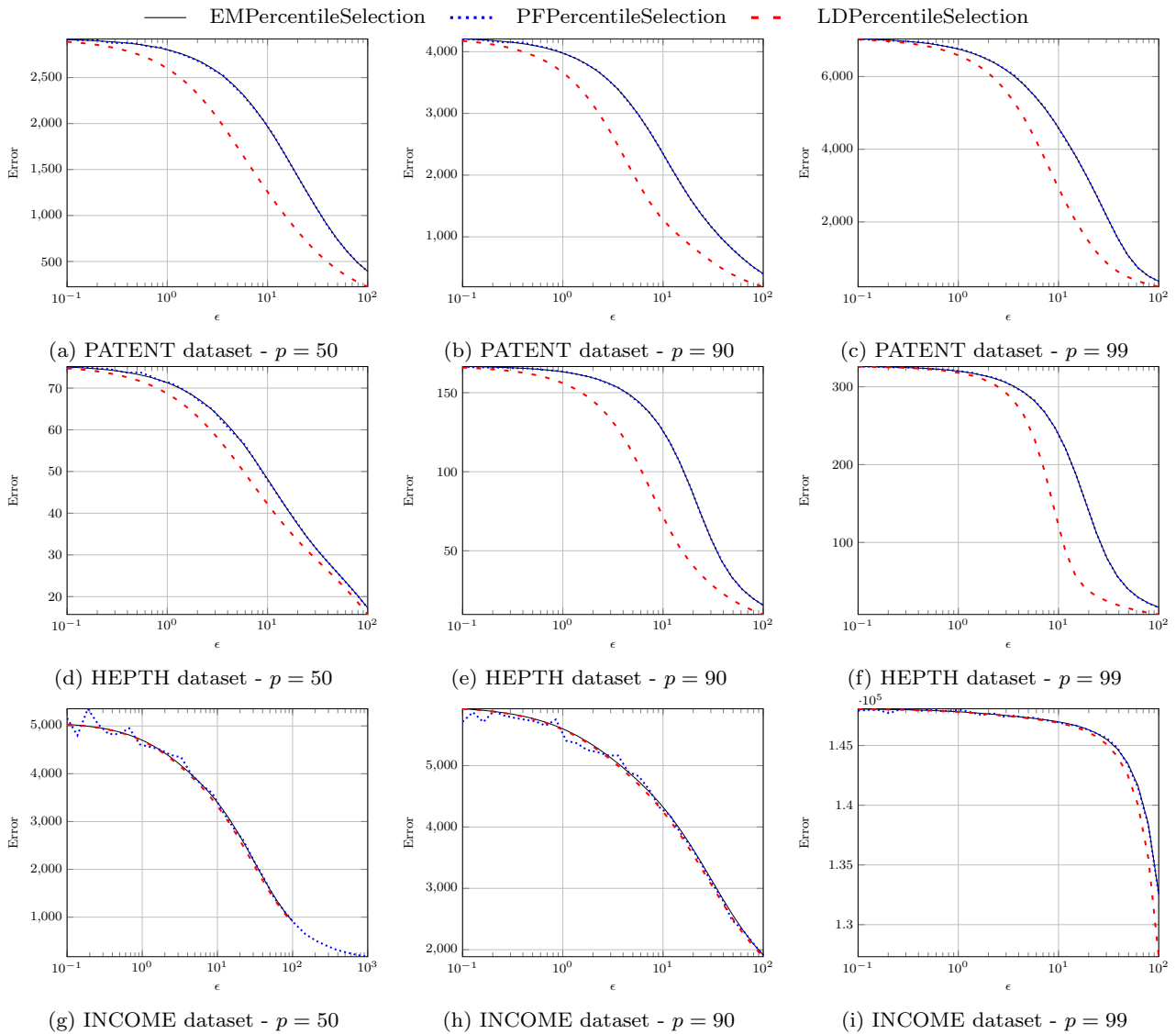
\begin{figure*}[h!]
    \caption{Expected error for the EMPercentileSelection and LDPercentileSelection methods and mean error over $100,000$ runs for PFPercentileSelection where $\epsilon \in [10^{-1},10^2]$. X axis is in log scale.}
    \center
    \begin{tikzpicture}
        \begin{customlegend}[legend columns=3,legend style={align=left,draw=none,column sep=2ex},legend entries={
            EMPercentileSelection, 
            PFPercentileSelection, 
            LDPercentileSelection}]
        \addlegendimage{color=black,solid}
        \addlegendimage{color=blue,solid, dotted, very thick}
        \addlegendimage{color=red,solid, loosely dashed, very thick}        
        \end{customlegend}
    \end{tikzpicture}

%%%%%%%%%%%%%%%%%%%%%%%%%%%%%%%%%

\begin{subfigure}[b]{0.32\textwidth}
    \resizebox{\linewidth}{!}{
        \begin{tikzpicture}
            \begin{semilogxaxis}[grid=major,
                enlargelimits=false,
                xlabel={$\epsilon$},
                ylabel={Error},
                xticklabel style={
                    /pgf/number format/fixed,
                    /pgf/number format/precision=5,
                    % /pgf/number format/fixed zerofill
                }]

                \addplot table[x=eps,y=error] {./data/percentiles/patent_em_50.dat};

                \addplot table[x=eps,y=error] {./data/percentiles/patent_pf_50.dat};

                \addplot table[x=eps,y=error] {./data/percentiles/patent_ld_50.dat};

            \end{semilogxaxis}
        \end{tikzpicture}            
    }
    \caption{PATENT dataset - $p=50$}
\end{subfigure}
%%%%%%%%%%%%%%%%%%%%%%%%%%%%%%%%%
\begin{subfigure}[b]{0.32\textwidth}
    \centering
    \resizebox{\linewidth}{!}{
        \begin{tikzpicture}
            \begin{semilogxaxis}[grid=major,
                enlargelimits=false,
                xlabel={$\epsilon$},
                ylabel={Error},
                xticklabel style={
                    /pgf/number format/fixed,
                    /pgf/number format/precision=5,
                    % /pgf/number format/fixed zerofill
                }]

                \addplot table[x=eps,y=error] {./data/percentiles/patent_em_90.dat};

                \addplot table[x=eps,y=error] {./data/percentiles/patent_pf_90.dat};
    
                \addplot table[x=eps,y=error] {./data/percentiles/patent_ld_90.dat};                   
            
            \end{semilogxaxis}
        \end{tikzpicture}            
    }
    \caption{PATENT dataset - $p=90$}
\end{subfigure}
%%%%%%%%%%%%%%%%%%%%%%%%%%%%%%%%%%%%
\begin{subfigure}[b]{0.32\textwidth}
    \centering
    \resizebox{\linewidth}{!}{
        \begin{tikzpicture}
            \begin{semilogxaxis}[grid=major,
                enlargelimits=false,
                xlabel={$\epsilon$},
                ylabel={Error},
                xticklabel style={
                    /pgf/number format/fixed,
                    /pgf/number format/precision=5,
                    % /pgf/number format/fixed zerofill
                }]

                \addplot table[x=eps,y=error] {./data/percentiles/patent_em_99.dat};

                \addplot table[x=eps,y=error] {./data/percentiles/patent_pf_99.dat};
    
                \addplot table[x=eps,y=error] {./data/percentiles/patent_ld_99.dat};                             

            \end{semilogxaxis}
        \end{tikzpicture}            
    }
    \caption{PATENT dataset - $p=99$}
\end{subfigure}

%%%%%%%%%%%%%%%%%%%%%%%%%%%%%%%%%
\begin{subfigure}[b]{0.32\textwidth}
    \resizebox{\linewidth}{!}{
        \begin{tikzpicture}
            \begin{semilogxaxis}[grid=major,
                enlargelimits=false,
                xlabel={$\epsilon$},
                ylabel={Error},
                xticklabel style={
                    /pgf/number format/fixed,
                    /pgf/number format/precision=5,
                    % /pgf/number format/fixed zerofill
                }]

                \addplot table[x=eps,y=error] {./data/percentiles/hepth_em_50.dat};

                \addplot table[x=eps,y=error] {./data/percentiles/hepth_pf_50.dat};

                \addplot table[x=eps,y=error] {./data/percentiles/hepth_ld_50.dat};                    

            \end{semilogxaxis}
        \end{tikzpicture}            
    }
    \caption{HEPTH dataset - $p=50$}
\end{subfigure}
%%%%%%%%%%%%%%%%%%%%%%%%%%%%%%%%%
\begin{subfigure}[b]{0.32\textwidth}
    \centering
    \resizebox{\linewidth}{!}{
        \begin{tikzpicture}
            \begin{semilogxaxis}[grid=major,
                enlargelimits=false,
                xlabel={$\epsilon$},
                ylabel={Error},
                xticklabel style={
                    /pgf/number format/fixed,
                    /pgf/number format/precision=5,
                    % /pgf/number format/fixed zerofill
                }]

                \addplot table[x=eps,y=error] {./data/percentiles/hepth_em_90.dat};

                \addplot table[x=eps,y=error] {./data/percentiles/hepth_pf_90.dat};
    
                \addplot table[x=eps,y=error] {./data/percentiles/hepth_ld_90.dat};                   
            
            \end{semilogxaxis}
        \end{tikzpicture}            
    }
    \caption{HEPTH dataset - $p=90$}
\end{subfigure}
%%%%%%%%%%%%%%%%%%%%%%%%%%%%%%%%%%%%
\begin{subfigure}[b]{0.32\textwidth}
    \centering
    \resizebox{\linewidth}{!}{
        \begin{tikzpicture}
            \begin{semilogxaxis}[grid=major,
                enlargelimits=false,
                xlabel={$\epsilon$},
                ylabel={Error},
                xticklabel style={
                    /pgf/number format/fixed,
                    /pgf/number format/precision=5,
                    % /pgf/number format/fixed zerofill
                }]

                \addplot table[x=eps,y=error] {./data/percentiles/hepth_em_99.dat};

                \addplot table[x=eps,y=error] {./data/percentiles/hepth_pf_99.dat};
    
                \addplot table[x=eps,y=error] {./data/percentiles/hepth_ld_99.dat};                             

            \end{semilogxaxis}
        \end{tikzpicture}            
    }
    \caption{HEPTH dataset - $p=99$}
\end{subfigure}

%%%%%%%%%%%%%%%%%%%%%%%%%%%%%%%%%
\begin{subfigure}[b]{0.32\textwidth}
    \resizebox{\linewidth}{!}{
        \begin{tikzpicture}
            \begin{semilogxaxis}[grid=major,
                enlargelimits=false,
                xlabel={$\epsilon$},
                ylabel={Error},
                xticklabel style={
                    /pgf/number format/fixed,
                    /pgf/number format/precision=5,
                    % /pgf/number format/fixed zerofill
                }]

                \addplot table[x=eps,y=error] {./data/percentiles/income_em_50.dat};

                \addplot table[x=eps,y=error] {./data/percentiles/income_pf_50.dat};

                \addplot table[x=eps,y=error] {./data/percentiles/income_ld_50.dat};                    

            \end{semilogxaxis}
        \end{tikzpicture}            
    }
    \caption{INCOME dataset - $p=50$}
\end{subfigure}
%%%%%%%%%%%%%%%%%%%%%%%%%%%%%%%%%
\begin{subfigure}[b]{0.32\textwidth}
    \centering
    \resizebox{\linewidth}{!}{
        \begin{tikzpicture}
            \begin{semilogxaxis}[grid=major,
                enlargelimits=false,
                xlabel={$\epsilon$},
                ylabel={Error},
                xticklabel style={
                    /pgf/number format/fixed,
                    /pgf/number format/precision=5,
                    % /pgf/number format/fixed zerofill
                }]

                \addplot table[x=eps,y=error] {./data/percentiles/income_em_90.dat};

                \addplot table[x=eps,y=error] {./data/percentiles/income_pf_90.dat};
    
                \addplot table[x=eps,y=error] {./data/percentiles/income_ld_90.dat};                   
            
            \end{semilogxaxis}
        \end{tikzpicture}            
    }
    \caption{INCOME dataset - $p=90$}
\end{subfigure}
%%%%%%%%%%%%%%%%%%%%%%%%%%%%%%%%%%%%
\begin{subfigure}[b]{0.32\textwidth}
    \centering
    \resizebox{\linewidth}{!}{
        \begin{tikzpicture}
            \begin{semilogxaxis}[grid=major,
                enlargelimits=false,
                xlabel={$\epsilon$},
                ylabel={Error},
                xticklabel style={
                    /pgf/number format/fixed,
                    /pgf/number format/precision=5,
                    % /pgf/number format/fixed zerofill
                }]

                \addplot table[x=eps,y=error] {./data/percentiles/income_em_99.dat};

                \addplot table[x=eps,y=error] {./data/percentiles/income_pf_99.dat};
    
                \addplot table[x=eps,y=error] {./data/percentiles/income_ld_99.dat};                             

            \end{semilogxaxis}
        \end{tikzpicture}            
    }
    \caption{INCOME dataset - $p=99$}
\end{subfigure}

\label{fig:median_experiment}
\end{figure*}

Figure \ref{fig:median_experiment} displays the results. The results show that EMPercentileSelection and PFPercentileSelection mechanisms have very similar error in all the cases. For PATENT dataset, we observe up to $44\%$, $52\%$, $59\%$, for $p=50,90,99$ respectively, of reduction in the error by the LDPercentileSelection in relation to both the EMPercentileSelection and the PFPercentileSelection over the tested values for $\epsilon$. For HEPTH dataset, the error reduction is at most $12\%$, $52\%$, $73\%$, for $p=50,90,99$ respectively.

% Dizer sensibilidades

For the INCOME dataset, we show that the LDPercentileSelection has the same error as the EMPercentileSelection and the PFPercentileSelection in most scenarios. However, we notice a slight reduction of error of at most $3\%$ for $p=99$. This behavior is similar to the other datasets from Hay et al \cite{hay2016principled} not presented here where the local sensitivity is near to the global sensitivity.

% \input{./images/figure_median.tex}
% \input{./images/figure_median1.tex}
% \input{./images/figure_median2.tex}

% \section{Conclusion}

% In this chapter, we presented the non-numeric version of the median selection problem. This application is a commonly addressed in other works as a example of application to differently private mechanisms since it can be solved by a single call to a non-numeric mechanism.

% Our goal is to assess the accuracy of the local dampening mechanism compared to global sensitivity based approaches: the exponential mechanism and permute-and-flip mechanism. For that, we also calculated the global sensitivity for the use of the exponential mechanism and permute-and-flip mechanism and also the element local sensitivity for the local dampening mechanism.

% Experimental results show that the local dampening mechanism outperforms the global sensitivity based approaches on the datasets where the local sensitivity is lower than the global sensitivity. 

\section{Application 2: Influential Node Analysis}
\label{section:influential_node_analysis}

Identifying influential nodes in a network is an important task for social network marketing \cite{ma2008mining}. This analysis has great value for making a more effective marketing campaign since influential nodes have great capacity to diffuse a message through the network. 

\subsection{Problem statement}

The influential node analysis problem is a query over an input graph database $G=(V,E)$ that releases the labels of $k$ nodes that maximize a given influence metric. In this work, we use the Egocentric Betweenness Centrality metric (EBC, Definition \ref{definition:ebc}). Using EBC as an influence measure allows to identify influential nodes that are important in different loosely connected parties.

\begin{definition}(Egocentric Betweenness Centrality (EBC) \cite{everett2005ego,freeman1978centrality,marsden2002egocentric}) 
    $$ EBC(c) = \sum_{u,v \in N_c | u \neq v}  \frac{p_{uv}(c)}{q_{uv}(c)},$$
    \label{def:ebc}
    
    where $N_c = \{ v \in V | \{c,v\} \in E \}$ is the set of neighbors of the central node $c$, $q_{uv}(c)$ is the number of geodesic paths connecting $u$ and $v$ on the induced subgraph $G[N_c \cup \{ c\}]$ and $p_{uv}(c)$ is the number of those paths that include $c$.
    \label{definition:ebc}
\end{definition}

\subsection{Private Mechanism}

We propose \textit{PrivTopk}, a top-k algorithm template that chooses iteratively $k$ nodes that maximize EBC (Algorithm \ref{algo:privtopk}). In each iteration, the algorithm makes a call to a non-numeric mechanism (line 5) that returns a node that maximizes EBC that was not previously chosen.

We experiment with four instances of this algorithm template: i) \textit{EMPrivTopk}, where we replace line $5$ with an exponential mechanism call; ii) \textit{PFPrivTopk}, where we replace line $5$ with a permute-and-flip mechanism call; iii) \textit{LDPrivTopk} where we replace line $5$ with a local dampening call; iv) \textit{SLDPrivTopk} where we replace line $5$ by a shifted local dampening mechanism.

\begin{algorithm}[htp]
    \SetAlgoLined\DontPrintSemicolon
    \SetKwFunction{algo}{PrivTopk(Graph $G = (V,E)$, Privacy Budget $\epsilon$, Integer $k$)}
    \SetKwProg{myalg}{Procedure}{}{}
    \myalg{\algo}{
    $\epsilon' = \epsilon/k$ \;
    $\Omega = \emptyset$ \;
    \For{$j\leftarrow 1$ \KwTo $k$}{        
        $v = MEC(G, \epsilon', EBC, V) $ \tcp{Non-numeric mechanism call}
        $\Omega = \Omega \cup \{ v\}$ \;
    }
    \KwRet{$\Omega$}\;}{}
    
    % \addtocontents{loa}{\vskip -15pt}
    \addtocontents{loa}{\protect\addvspace{-9pt}}
    \caption{PrivTopk}
    \label{algo:privtopk}
\end{algorithm}

% The privacy correctness of the algorithm follows from the sequential composition property of differential privacy \cite{mcsherry2007mechanism}. 

Our algorithm issues $k$ calls to a private mechanism with privacy budget $\epsilon' =\epsilon/k$. By the sequential composition theorem (Theorem \refeq{theorem:sequential_composition}) the total privacy budget consumed in the entire algorithm is $\epsilon' \times k =  (\epsilon / k) \times k = \epsilon$. Thus Algorithm \ref{algo:privtopk} satisfies $\epsilon$-differential privacy.

\subsection{Sensitivity Analysis}

\textbf{Global Sensitivity}. We need to provide the global sensitivity for EBC to the Exponential Mechanism and the permute-and-flip mechanism:

\begin{restatable}{lemma}{lemmaebcglobalsensitivity}
    (EBC global sensitivity). The global sensitivity $\Delta EBC$ for $EBC$ is given by
    $$\Delta EBC = \max{\left( \frac{\Delta(G)(\Delta (G)-1)}{4}, \Delta(G) \right)}$$,
    \label{lemma:global_ebc}
\end{restatable}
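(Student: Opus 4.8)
Since we work under edge differential privacy, a neighbouring database $G'$ of $G$ differs in exactly one edge $e$; by symmetry I will take $G' = G + e$ (removal being the mirror image), and I will bound $|EBC_G(c) - EBC_{G'}(c)|$ for a fixed node $c$, keeping in mind that both $G$ and $G'$ have maximum degree at most $\Delta(G)$. The whole argument rests on one structural observation about the ego-subgraph $H_c := G[N_c \cup \{c\}]$: because $c$ is adjacent to every vertex of $N_c$, any two vertices $u,v \in N_c$ are at distance at most $2$ in $H_c$, the unique length-$2$ path through $c$ is $u\text{-}c\text{-}v$, and hence $p_{uv}(c) \in \{0,1\}$ with
$$
\frac{p_{uv}(c)}{q_{uv}(c)} =
\begin{cases}
0 & u \sim v,\\[2pt]
\dfrac{1}{1 + k_{uv}} & u \not\sim v,
\end{cases}
$$
where $k_{uv}$ is the number of common neighbours of $u$ and $v$ lying in $N_c$. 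In particular every summand of $EBC(c)$ lies in $[0,1]$, and a summand changes only when the edge $e$ either makes/breaks an adjacency among $N_c \cup \{c\}$ or removes/adds a vertex from $N_c$.

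\textbf{Case analysis.} I split on whether $c$ is incident to $e$. \emph{Case A: $c \notin e$.} Then $e = \{u,v\}$, and $EBC_G(c) \neq EBC_{G'}(c)$ forces $u,v \in N_c$ (otherwise $e \notin H_c$ and nothing changes). The only summands that can change are those for pairs containing $u$ or $v$: the pair $\{u,v\}$ itself flips between $0$ and $\tfrac{1}{1+k_{uv}}$ (change $\le 1$), while a pair $\{u,z\}$ or $\{v,z\}$ can only have its denominator shift by one, i.e.\ change by at most $\tfrac12$. There are at most $2(\Delta(G)-1)$ such pairs, giving a total change at most $1 + (\Delta(G)-2) = \Delta(G)-1$. \emph{Case B: $e = \{c,b\}$.} Adding this edge inserts $b$ into $N_c$, so $EBC_{G'}(c) - EBC_G(c)$ splits as a nonnegative \emph{gain} from the new pairs $\{b,z\}$, $z \in N_c^G$, which is at most $|N_c^G| \le \Delta(G)-1$, minus a nonnegative \emph{loss} on the old pairs: an old pair $\{w,z\}$ only changes if $b$ becomes a new common neighbour (i.e.\ $b\sim w$, $b\sim z$), in which case its denominator grows by one and it decreases by at most $\tfrac12$; the number of such pairs is at most $\binom{|N_b \cap N_c^G|}{2} \le \binom{\Delta(G)-1}{2}$, so the loss is at most $\tfrac12\binom{\Delta(G)-1}{2} = \tfrac{(\Delta(G)-1)(\Delta(G)-2)}{4}$. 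Since gain and loss are nonnegative with opposite signs, $|EBC_{G'}(c)-EBC_G(c)| \le \max\!\big(\Delta(G)-1,\ \tfrac{(\Delta(G)-1)(\Delta(G)-2)}{4}\big)$, which is dominated by $\max\!\big(\Delta(G),\ \tfrac{\Delta(G)(\Delta(G)-1)}{4}\big)$. Combining the two cases yields the claimed upper bound.

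\textbf{Tightness.} For the quadratic branch I would exhibit the graph in which $c$ has neighbours $b, w_1,\dots,w_{\Delta(G)-1}$ with $b$ adjacent to every $w_i$ and the $w_i$ pairwise non-adjacent (the generalisation of Figure~\ref{fig:grapha}); here every pair $\{w_i,w_j\}$ has exactly the two connectors $c$ and $b$, so removing $\{c,b\}$ drives each of the $\binom{\Delta(G)-1}{2}$ terms from $\tfrac12$ to $1$, realising the bound up to lower-order terms, and a small example (a path/star on a few vertices) realises the linear branch when $\Delta(G)$ is small. These match the general formula up to the usual slack between $\Delta(G)$ and $\Delta(G)-1$.

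\textbf{Main obstacle.} The delicate point is Case B: one must be certain that no single graph can make the gain and the loss \emph{both} large in a way that their difference beats $\max(\mathrm{gain},\mathrm{loss})$. The clean resolution is the elementary inequality $|a-b|\le\max(a,b)$ for $a,b\ge 0$, but it only applies once one has verified carefully that (i) every new-pair contribution is genuinely $\ge 0$, (ii) inserting a common neighbour can \emph{only} decrease an old-pair term (never increase it), and (iii) no old-pair term changes unless $b$ is adjacent to both endpoints. The second subtlety is the bookkeeping in Case A of exactly which pairs' shortest-path counts move when a single edge is toggled — correctly arguing that the changed edge can only be the final hop of a geodesic between pairs that contain one of its endpoints — together with the boundary check that a denominator may equal $1$ in one of the two graphs, so the per-pair change is bounded by $\tfrac12$ rather than something smaller.
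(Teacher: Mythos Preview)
Your approach mirrors the paper's exactly: split on whether $c\in e$, decompose the change into a nonnegative gain (new pairs) and a nonnegative loss (old pairs, each bounded by a $\tfrac12$ denominator shift), and bound $|\text{gain}-\text{loss}|\le\max(\text{gain},\text{loss})$; the only cosmetic differences are that you bound the number of changing old pairs via $|N_b\cap N_c|$ whereas the paper uses $|N_c|$, and you treat Case~A explicitly whereas the paper dismisses it as dominated. Two small wrinkles: your Case~A arithmetic should read $1+(\Delta(G)-1)=\Delta(G)$ rather than $\Delta(G)-1$, and your tightness construction reaches $\tfrac{(\Delta(G)-1)(\Delta(G)-2)}{4}$ not the stated $\tfrac{\Delta(G)(\Delta(G)-1)}{4}$ --- but the paper's own proof also only establishes the upper bound and is equally informal about the matching lower bound.
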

where $\Delta(G)$ is the maximum degree of the input graph $G$. In this work, we assume the maximum degree is a public information or that we have an upper bound for it.

\textbf{Element Local Sensitivity}. For the local dampening call, we provide an upper bound to the element local sensitivity using the sensitivity function $\delta^{EBC}$:

\begin{definition}(Sensitivity function $\delta^{EBC}(G,t,v)$). The sensitivity function $\delta^{EBC}$ for $EBC$ is defined as
    $${\textstyle \delta^{EBC}(G,t,v) = \max{\left( \frac{(d^{G}(v)+t)(d^{G}(v)+t-1)}{4}, d^{G}(v)+t \right)}}$$
    \label{def:delta_ebc}
\end{definition}
where $d^{G}(v)$ denotes the degree of $v$ in $G$, i.e., $d^{G}(v)=|N^{G}_v|$. We also show that $\delta^{EBC}(G,t,v)$ is admissible (Lemma \ref{lemma:admissible_delta_ebc}).

\begin{restatable}{lemma}{lemmaebclocalsensitivity}
    The sensitivity function $\delta^{EBC}(G,t,v)$ is admissible.
    \label{lemma:admissible_delta_ebc}
\end{restatable}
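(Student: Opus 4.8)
The plan is to verify directly the two requirements of Definition~\ref{def:admissible_function} for $\delta^{EBC}$. Throughout, write $d = d^G(v)$ and note that $\delta^{EBC}(G,t,v) = g(d+t)$ where $g(x) = \max\bigl(\tfrac{x(x-1)}{4},\, x\bigr)$. A preliminary observation used in both parts is that $g$ is non-decreasing on the non-negative integers: both $x$ and $\tfrac{x(x-1)}{4}$ are non-decreasing there (the latter takes value $0$ at $x=0,1$ and increases afterwards), and $g$ is their pointwise maximum.

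For the recursive inequality (requirement~2), fix $t\ge 0$ and $G,G'$ with $d(G,G')\le 1$, i.e.\ $G'$ is obtained from $G$ by inserting or deleting a single edge. The degree of $v$ changes by at most one, and it increases only if the modified edge is incident to $v$; hence $d^{G'}(v) \le d+1$, so $d^{G'}(v)+t \le (d+t)+1$. Since $d+t+1$ and $d^{G'}(v)+t$ are non-negative integers and $g$ is non-decreasing on them, $\delta^{EBC}(G',t,v) = g(d^{G'}(v)+t) \le g(d+t+1) = \delta^{EBC}(G,t+1,v)$, which is exactly requirement~2.

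For the domination requirement~1, I must show that a single edge modification moves $EBC(v)$ by at most $g(d)$, which I would prove by cases on where the modified edge $e$ sits relative to the ego-network $N_v \cup \{v\}$. If $e$ is not incident to $v$ and has no endpoint in $N_v$, the induced subgraph $G[N_v\cup\{v\}]$ is unchanged and $EBC(v)$ does not move. If $e=\{u_1,u_2\}$ with $u_1,u_2\in N_v$, then $N_v$ is unchanged and only pairs containing $u_1$ or $u_2$ can have their geodesic counts altered (an affected length-$2$ path must traverse $u_1$ or $u_2$); each such count changes by at most one common neighbour, so each affected pair's contribution moves by at most $\tfrac12$ except the pair $\{u_1,u_2\}$ itself which moves by at most $1$, and counting the at most $d-1$ affected pairs on each side keeps the total within $d \le g(d)$. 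The principal case is $e$ incident to $v$, say the insertion of $e=\{v,w\}$ (deletion is symmetric): this creates exactly $d$ new pairs $\{w,u\}$, $u\in N_v$, each contributing at most $1$ to $EBC_{G'}(v)$ and $0$ in $G$, while every pre-existing non-adjacent pair $\{u,u'\}\subseteq N_v$ can only acquire the single new common neighbour $w$, so its contribution only \emph{decreases}, by at most $\tfrac{1}{g_{uu'}(g_{uu'}+1)}\le\tfrac12$; the number of such pairs is $\binom{|N_w\cap N_v|}{2}\le\binom{d}{2}$, so the aggregate decrease is at most $\tfrac{d(d-1)}{4}$. Hence $EBC_{G'}(v)-EBC_G(v)\in\bigl[-\tfrac{d(d-1)}{4},\, d\bigr]$, so $|EBC_{G'}(v)-EBC_G(v)|\le g(d)=\delta^{EBC}(G,0,v)$; maximizing over neighbours $G'$ gives $\delta^{EBC}(G,0,v)\ge LS^{EBC}(G,0,v)$.

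The main obstacle is the bookkeeping in this last case: one must simultaneously track an upward contribution of size at most $d$ (from the newly created pairs) and a downward contribution of size at most $\tfrac{d(d-1)}{4}$ (from old pairs losing ground to a new common neighbour), and observe that these act in opposite directions, so the net change is bounded by the \emph{maximum} of the two rather than their sum. This is precisely the counting that underlies the global bound of Lemma~\ref{lemma:global_ebc}, and I would reuse that argument almost verbatim with the global maximum degree $\Delta(G)$ replaced by the local degree $d^G(v)$, being careful that inserting an edge at $v$ raises its degree to $d^G(v)+1$ yet produces only $d^G(v)$ genuinely new pairs, so that the bound that emerges is $g(d^G(v))$ rather than $g(d^G(v)+1)$. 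Combining requirements~1 and~2 establishes that $\delta^{EBC}(G,t,v)$ is admissible.
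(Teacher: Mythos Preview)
Your proof is correct and follows essentially the same route as the paper. For requirement~2 both you and the paper use that $|d^G(v)-d^{G'}(v)|\le 1$ together with monotonicity of $g(x)=\max\bigl(\tfrac{x(x-1)}{4},x\bigr)$. For requirement~1 the paper factors the core edge-case analysis into a separate Lemma~\ref{lemma:ebc} (which it also reuses for the global bound Lemma~\ref{lemma:global_ebc}), whereas you inline that analysis directly; the counting of ``new pairs $\le d$'' versus ``old pairs lose at most $\tfrac{d(d-1)}{4}$'' is the same in both. One cosmetic gap in your case split: an edge $e=\{u_1,u_2\}$ with $u_1\in N_v$ and $u_2\notin N_v\cup\{v\}$ is not covered by your enumeration as stated, but it is trivial since the induced subgraph $G[N_v\cup\{v\}]$ is unchanged.
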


In terms of correlation between $EBC(v)$ and $deg^{G}(v)+t$, note that for a given node $v$ with degree $d^{G}(v)$, there are ${{d^{G}(v)}\choose{2}}=(deg^{G}(v) \cdot (deg^{G}(v)-1))/2$ terms in the $EBC$ equation for $v$ (Definition \ref{def:ebc}), i.e., pairs $(u,z) \in N^{G}_{v}$. As each term contributes at most $1$ to $EBC$, it suggests that there is a correlation between $EBC(v)$ and $deg^{G}(v)+t$ and consequently, between $EBC(v)$ and $\delta^{EBC}(G,t,v)$. Empirical observation of the datasets confirmed that correlation. For this reason, the shifted local dampening mechanism call in SLDPrivTopk is suitable.

\subsection{Related Work}
\label{sec:related_work_influential}

The literature provides some work to release statistics on graphs which are presented in this section. Also, we present some work on releasing linear statistics over relational databases that can be used to release EBC and compare experimentally to our approach.

\subsubsection{Differentially Private Graph Analysis}

Kasiviswanathan et al \cite{kasiviswanathan2013analyzing} observed that the sensitivity of many graph problems is a function of the maximum degree of the input graph $G$, so they proposed a generic projection that truncates the maximum degree of $G$. This projection is built upon smooth sensitivity framework \cite{nissim2007smooth} but the target query is answered using the global sensitivity on the truncated graphs which may still be high. Moreover, it satisfies a weaker definition of privacy: $(\epsilon, \delta)$-differential privacy.

There is a number of works that aims to publish subgraph counting as k-triangles, k-stars and k-cliques. Kasiviswanathan et al  \cite{kasiviswanathan2013analyzing} also addresses this problem applying a linear programming-based approach that release those counts for graphs that satisfies $\alpha$-decay.

A new notion of sensitivity called restricted sensitivity was introduced by Blocki et al \cite{blocki2013differentially} to answer subgraph counts. In this setting, the querier may have some belief about the structure of the input graph, so the restricted sensitivity measures sensitivity only on the subset of graphs which are believed to be inputs to the algorithm. However, this work satisfies only $(\epsilon, \delta)$-differential privacy.

Blocki et al \cite{blocki2013differentially} introduced the \textit{ladder functions} to answer to subgraph counts. A latter function is a structure built upon the local sensitivity of the subgraph count. It rank the possible outputs of the subgraph count query and sample a given output using the exponential mechanism with low sensitivity.

The work presented in \cite{karwa2011private} is a directed application of the smooth sensitivity framework \cite{nissim2007smooth} also for answering to subgraph count queries. The authors provide the bounds of the local sensitivity of k-triangles and k-stars and show that they are more accurate than related work.

About centrality metrics. A recent work, Laeuchli et al \cite{laeuchli2021analysis} analyzes three centrality measures on graphs: eigenvector, laplacian and closeness centrality. The result is that releasing those metrics using either the laplace mechanism (based on global sensitivity) or the smooth sensitivity framework (based on local sensitivity) is infeasible. To show that, they demonstrate that the local sensitivity is unbounded or, even it is bounded, it is too large and it results in overwhelming addition of noise. 

To the best of out knowledge, in the literature, none of the works on graph analysis tackles top-k or EBC release.

\subsubsection{Releasing Linear Statistics over Relational Databases}

A body of work is available in the literature on answer linear queries, i.e. queries that can be answered by linear aggregation, over relational databases using SQL unde differential privacy. 

As will be shown in Section \ref{section:privatesql}, the EBC metric can be computed by issuing a set of count SQL queries with cyclic joins and GROUP BY clauses over a graph stored in relational database. Thus we survey works that can possibly answer to this kind of query.

Local sensitivity has been used to answer full acyclic join queries \cite{tao2020computing}. This approach lacks generality as it cannot compute SQL queries with cyclic joins and GROUP BY clauses which it is required for EBC queries. The recursive mechanism \cite{chen2013recursive} can answer linear queries with unrestricted joins with GROUP BY clauses, however it requires the target function $f$ to be monotonic, i.e., inserting a new individual in the database always causes $f$ to increase (or always decrease). This condition is not satisfied by EBC.

% \subsubsubsection{Privatesql}

PrivateSQL \cite{kotsogiannis2019privatesql} is an approach that can answer linear queries with cyclic joins and correlated subqueries with GROUPBY clauses. 

% The architecture of PrivateSQL is displayed in Figure \ref{fig:privatesql_architecture}.

% \begin{figure}[h]
%     \caption{PrivateSQL's System Architecture \cite{kotsogiannis2019privatesql}.}          
%     \includegraphics[width=\textwidth]{./}
% 	\label{fig:privatesql_architecture}
% \end{figure}

This approach requires a representative workload $Q$ as input, a primary relation in the database. The representative workload is used in the VSelector module to identify a set of views over the base relations that support the analyst queries and then generates a set of views that can answer all the the analyst's queries. The VRewrite module rewrite using truncation operators and semijoin operators to bound the sensitivity. Therefore, the sensitivity for each query is computed in the SensCalc module and a fraction of the budget is given to each query in the BudgetAllocator module. The PrivSynGen generates a synopsis for each view. A synopsis is a compact representation of a view that capture important information from it and can approximate the answer of a given query on the view. The private synopses are released under differential privacy and queries can be executed over them without using privacy budget. The synopsis generation is based on non-negative least squares inference \cite{li2015matrix}. The sensitivity computation for each view is based on Flex \cite{johnson2018towards} augmented with truncation operators.

Then when a new query is issued to the system, this query is rewritten as linear queries over the private views' synopsis in the MapQuery module. Since graph databases can be modeled as a table Node(id) and a table Edge(source, target), we carry out an experimental evaluation in Section \ref{section:privatesql}.

\subsection{Experimental Evaluation}

\textbf{Datasets.} We use three real-world graph datasets: 1) \textit{Enron} is a network of email communication obtained from around half million emails. Each node is an email address and an edge connects a pair of email addresses that exchanges emails ($|V|=36,692$ , $|E|=183,831$ and $\Delta(G) = 1,383$); 2) \textit{DBLP} is a co-authorship network where two authors (nodes) are connected if they published at least one paper together ($|V|=317,080$, $|E|=1,049,866$ and $\Delta(G) = 343$); 3) \textit{Github} is a network of developers with at least $10$ stars on the platform. Developers are represented as nodes and an edge indicates that two developers follow each other ($|V|=37,700$, $|E|=289,003$ and $\Delta(G) = 9,458$). $V$ is the set of vertices, $E$ is the set of edges and $\Delta(G)$ is the maximum degree of a graph $G$. All datasets can be found on Stanford Network Dataset Collection \cite{snapnets}.

\textbf{Methods}. We compare the four versions of PrivTopk (algorithm \ref{algo:privtopk}): 1) EMPrivTopk, using the exponential mechanism, 2) PFPrivTopk, using permute-and-flip mechanism,  3) LDPrivTopk using local dampening mechanism and 4) SLDPrivTopk using shifted local dampening mechanism.

\textbf{Evaluation.} We evaluate the accuracy by the percentage of common nodes to the retrieved top-k set and the true top-k set, i.e., $(|\text{retrieved\_topk} \cap \text{true\_topk}|)/k$. We report the mean accuracy in $100$ simulations. We set $k \in \{ 5,10,20\}$ and a range for privacy budget $\epsilon \in [10^{-3}, 10^{4}]$.

Figure \ref{fig:topk} displays the results. We first note that the global sensitivity based approaches, EMPrivTopk and PFPrivTopk, exhibit low accuracy for low values of $\epsilon$. This is due to the high global sensitivity: $\Delta EBC = 22,361,076.5$ for github dataset, $\Delta EBC = 477,826.5$ for DBLP dataset and $\Delta EBC = 29,326.5$ for Enron dataset. Also, the LDPrivTopk algorithm suffers from the inversion problem (Section \ref{sec:inversion_problem}) while SLDPrivTopk could exploit the correlation between $EBC$ and $\delta^{EBC}$ to fix this problem. 

\input{./images/figure_topk5.tex}

We observe a clear pattern where the methods perform worse as $k$ grows. This is explained by the fact that each call to the non-numeric mechanism uses $\epsilon/k$ of the total privacy budget $\epsilon$ (Algorithm \ref{algo:privtopk}). Thus, larger $k$ implies that less of the privacy budget is used in each non-numeric mechanism call which hurts accuracy. 

% Note that, for a smaller values of $k$, local dampening mechanism is specially prone to suffer from the inversion problem. Consider the case where $k=1$ and the node in the top-$1$ is dampened to a lower score then the local dampening mechanism selects a non top-1 node with high probability which hurts accuracy.

For SLDPrivTopk, we note that we need different level of privacy budget for each dataset reasonable accuracy. This is explained by a number of factors. For Github dataset, the distribution of EBC is heavy tailed thus the nodes with high EBC have a higher probability to be correctly picked with low privacy budget. On the other hand, for the DBLP dataset we need need more privacy budget as it has roughly $10$ times more nodes than the other datasets, which dilute the probability of the nodes with higher EBC.

Our approach SLDPrivTopk achieves the same level of accuracy with privacy values $3$ to $4$ orders of magnitude less than EMPrivTopk and $2$ to $3$ orders of magnitude less than PFPrivTopk.

\subsubsection{Comparison to PrivateSQL}
\label{section:privatesql}

We carry out an experimental comparison of local dampening mechanism to the PrivateSQL on Influential Node Analysis problem. For this application, PrivateSQL is not particularly scalable (as discussed further) so we performed the experiments with smaller datasets and test with fewer values of privacy budgets.

PrivateSQL addresses the Influential Node Analysis problem by computing the counts $q_{uv}(c)$ and $p_{uv}(c)$ for all $u,v \in N_c$ (Definition 11) to compute EBC and takes the top-k nodes with highest EBC score. Note that it considers only for the terms $p_{uv}(c)/q_{uv}(c)$ where the distance from $u$ to $v$ in $G[N_c \cup \{ c\}]$ is $2$ which is the maximum possible distance as $u, v \in N_c$. If their distance is $1$ (i.e. $u$ and $v$ are neighbors), the term $p_{uv}(c)$ is $0$ since the geodesic path of length $1$ from $u$ to $v$ ($u,v \neq c$) cannot contain $c$. Thus, for PrivateSQL, we pose private queries $\mathcal{Q}(u,v,c)$ \nottoggle{vldb}{(see Appendix \ref{ap:privateSQL})} that returns 1) $1$ if $u$ and $v$ are neighbors, 2) $0$ if the distance from $u$ to $v$ is larger than $2$ and 3) $q_{uv}(c)$, otherwise.

Therefore, when $\mathcal{Q}(u,v,c)$ is not equal to $0$, it means that the term $p_{uv}(c)/q_{uv}(c)$ should be accounted in the EBC definition. In that case, we obtain a noisy estimate for $q_{uv}(c)$ from $\mathcal{Q}(u,v,c)$. A noisy estimate for $p_{uv}(c)$ can be derived from noisy $q_{uv}(c)$ by setting $p_{uv}$ to $0$ if $q_{uv}(c)=0$ or to $1$ if $q_{uv}(c) > 0$. The rationale is that exactly one of the paths of length $2$ from $u$ to $v$ contains $c$ as $u, v \in N_c$.

The set $N_c$ is itself private information. Hence, to compute $EBC(c)$ for every $c \in V(G)$ we need to compute $\mathcal{Q}(u,v,c)$ for every $u,v \in V(G)$. This results in a total number of $O(n^3)$ queries which poses a scalability problem. For this reason, we perform experiments with samples $S$ of the graphs which are obtained by choosing a node sample $S_n$ in breadth-first search fashion with a random seed node and then we set $S = G[S_n]$.

Table \ref{table:privatesql} displays the mean accuracy for $10$ runs on $10$ sample graphs with $|S_n|=50$ nodes with $k \in \{1,2,3\}$ for each $\epsilon \in \{0.1,0.5,1.0,5.0,10.0\}$. We compare the best local dampening based algorithm SLDPrivTopk (PTK) with the PrivateSQL based approach (PSQL).

PrivateSQL approach generated one private view for each node in the graph. Thus, the privacy budget needs to be divided by the number of nodes $n$ which implies that accuracy is hurt as $n$ grows. Moreover, the sensitivity for each view is high, e.g, sensitivity is $1448$ when $\Delta(G)=10$. This entails in a poor performance for the PrivateSQL based approach.

\begin{table}[ht]
    \caption{Mean accuracy for SLDPrivTopk (PTK) and PrivateSQL (PSQL) over $10$ runs on $10$ sample graphs with $50$ nodes with $k\in\{1,2,3\}$ for each $\epsilon \in \{0.1,0.5,1.0,5.0,10.0\}$.}

    \begin{tabular}{c|c|c|c|c|c|c|}
        \cline{2-7}
        & \multicolumn{2}{c|}{Enron} & \multicolumn{2}{c|}{DBLP} & \multicolumn{2}{c|}{Github} \\ \hline
        \multicolumn{1}{|c|}{$\epsilon$} & PTK & PSQL & PTK & PSQL & PTK & PSQL \\ \hline
        \multicolumn{1}{|c|}{$0.1$} & 0.06 & 0.01 & 0.05 & 0.02 & 0.07 & 0.02 \\ \hline
        \multicolumn{1}{|c|}{$0.5$} & 0.45 & 0.01 & 0.27 & 0.02 & 0.49 & 0.02 \\ \hline
        \multicolumn{1}{|c|}{$1.0$} & 0.60 & 0.16 & 0.44 & 0.05 & 0.69 & 0.02 \\ \hline
        \multicolumn{1}{|c|}{$5.0$} & 0.84 & 0.20 & 0.87 & 0.11 & 0.86 & 0.03 \\ \hline
        \multicolumn{1}{|c|}{$10.0$} & 0.88 & 0.21 & 0.92 & 0.21 & 0.91 & 0.07 \\ \hline
    \end{tabular}            
    
    \label{table:privatesql}
\end{table}

\section{Application 3: ID3 Decision Tree Induction}
\label{section:decision_tree_induction}

Classification based on decision tree is an important tool for data mining \cite{kotsiantis2007supervised}. Specifically, decision trees are a set of rules that are applied to the input attributes to decide to which class a given instance belongs. 

% Figure \ref{fig:decision_tree} shows an example of a decision (Y or N) that is taken by checking the attributes Outlook and Wind of the input row.

% \tikzset{
%   treenode/.style = {shape=rectangle,
%                      draw, align=center,
%                      top color=white},
%   root/.style     = {treenode},
%   env/.style      = {treenode},
%   dummy/.style    = {circle,draw}
% }
% \begin{figure}[h]
%     \centering
%     \caption{Example of Decision Tree.}         
%         \begin{tikzpicture}
%             [
%               grow                    = right,
%               sibling distance        = 6em,
%               level distance          = 10em,
%               edge from parent/.style = {draw, -latex},
%               every node/.style       = {font=\footnotesize},
%               sloped
%             ]
%             \node [root] {Outlook}
%               child { node [dummy] {Y}
%                 edge from parent node [below] {sun} }
%               child { node [env] {Wind}
%                 child { node [dummy] {Y}
%                         edge from parent node [below] {weak} }
%                 child { node [dummy] {N}
%                         edge from parent node [above, align=center]{strong}}
%                     edge from parent node [above] {rain} };
%         \end{tikzpicture}	
	
%     \label {fig:decision_tree}
% \end{figure}

Creating a decision tree manually is a burden. Thus many approaches for automatically building decision trees were proposed. One of the most known tree induction algorithms is the ID3 algorithm \cite{quinlan1986induction}. A tree induction algorithm receives a dataset and outputs a decision tree. 

% Table \ref{table:dataset_id3} shows an example of a dataset regarding weather.

% \begin{table}[ht]
%     \centering
%     \caption{Sample dataset.}
%         \begin{tabular}{|c|c|c|}
%             \hline
%             \textbf{Outlook} & \textbf{Wind} & \textbf{Decision}\\ \hline
%             rain & strong & Y \\ \hline
%             rain & weak & N \\ \hline
%             sun & strong & Y \\ \hline
%             sun & weak & N \\ \hline
%             sun & strong & N \\ \hline
%         \end{tabular}        
    
%     \label{table:dataset_id3}
% \end{table}

The ID3 algorithm \cite{quinlan1986induction} starts with the root node containing the original set. Then the algorithm greedly chooses an unused attribute to split the set and generate child nodes. The selection criterion is Information Gain (IG), given by the entropy before splitting minus the entropy after splitting. It expresses how much entropy was gained after the split. This process continues recursively for the child node until splitting does not reduce entropy or the maximum depth is reached.

\subsection{Problem Statement}

A decision tree induction algorithm takes as input a dataset $\TT$ with attributes $\AAA = \{A_1, \dots, A_d \}$ and a class attribute $C$ and produces a decision tree. The task is to build a decision tree in a differentially private manner. Specifically, we base our approach in one of the most known tree induction algorithms, the ID3 algorithm. 

The notation for this section is summarized in Table \ref{table:notation_tree}. All logarithms are in base $2$. When it is clear from the context, we drop the superscript $\TT$ from the notations.

\begin{table}[ht]
    \caption{Notation table for private decision tree induction}
    \centering
        
        \begin{tabular}{|l|l|}
            \hline
            \textbf{Variable} & \textbf{Definition}   \\ \hline    
            $IG$ & Information Gain  \\ \hline
            % $Gini$ & Gini Coefficient  \\ \hline
            $\TT$ & Dataset \\ \hline
            $\AAA$ & Attribute set \\ \hline
            $A_i$ & i-th attribute  \\ \hline
            $C$ & Class attribute  \\ \hline
            $\ttt^\TT$ & Cardinality of a dataset $\TT$: $\ttt^\TT = |\TT|$ \\ \hline    
            $r_A$ & Values of an attribute $A$ in a record $r$ \\ \hline
            $r_C$ & Values of the class attribute $C$ in a record $r$ \\ \hline
            $\TT_j^A$ & \begin{tabular}{@{}l@{}}Set of records $r \in \TT$ where attribute $A$ \\ takes value $j$: $\TT_j^A =\{ r \in \TT : r_A = j \}$\end{tabular}\\ \hline
            $\tjx{\TT}$ & Cardinality of $\TT_j^A$: $\tjx{\TT} = |\TT_j^A|$\\ \hline
            $\ttt_c^{\TT}$ & \begin{tabular}{@{}l@{}} Number of records $r \in \TT$ where class \\ attribute $C$  takes value $c$: \\ $\ttt_c^{\TT} = |r \in \TT: r_C = c|$ \end{tabular} \\ \hline
            $\tjcx{\TT}$ & \begin{tabular}{@{}l@{}} Number of records $r \in \TT$ where attribute $A$ \\ takes value $j$ and class attribute $C$ takes \\ value $c$:  $\tjcx{\TT} = |r \in \TT : r_A = j \wedge r_c = c|$ \end{tabular} \\ \hline
        \end{tabular}

    \label{table:notation_tree}
\end{table}

% $r_A$ and $r_C$ refer to the values of attribute $A$ and $C$ for record $r$, $\TT_j^{A,\TT} = \{ r \in \TT : r_A = j \}$, $\tjx{\TT} = |\TT_j^A|$, $\ttt_c^{\TT} = |r \in \TT: r_c = c|$, and $\tjcx{\TT} = |r \in \TT : r_A = j \wedge r_c = c|$.

\subsection{Related Work}
\label{section:related_work_decision_tree_induction}

Blum et al \cite{blum2005practical} presented the SuLQ framework that provides primitives for data mining algorithms. As an application, they introduced a differentially private adaptation for ID3 algorithm where it computes the information gain based on the noisy counts provided by SuLQ. This approach applies two operators from SuLQ: 1) \textbf{NoisyCount} that uses Laplace mechanism to return private estimate of a count query, and 2) \textbf{Partition} that splits the dataset in disjoint subsets so that the privacy budgets for the queries over each subset do not sum up (parallel composition, Theorem \ref{theorem:parallel_composition}) meaning that we can make a more efficient use of the privacy budget. A major drawback of this algorithm is that it requires to query the noisy counts (via \textbf{NoisyCounts}) for each attribute, so the privacy budget needs to be split to those queries. This entails in a small budget per query and in a larger noise magnitude.

To overcome this, Friedman et al \cite{friedman2010data} introduced a variation of the SuLQ algorithm (Algorithm \ref{algo:DiffPID3}). This algorithm replaces the many \textbf{NoisyCount} calls for a single call to the exponential mechanism. Line $12$ (Algorithm \ref{algo:DiffPID3}) is the call for the exponential mechanism that was previously several calls to \textbf{NoisyCount}.

The procedure $Build\_DiffPID3$ in algorithm \ref{algo:DiffPID3} starts by checking for the construction of a leaf node (Line 5-7). It verifies if there the dataset has any remaining attribute and if there is enough instances to make new splits. In lines 8-10, it partitions the data based on the class attribute $C$, it privately queries the count of each partition and creates a leaf with the class label of the largest partition. The Lines 12-15 creates a new decision rule recursively. It starts by privately choosing the attribute with largest IG using the exponential mechanism then it partition the dataset using the chosen attribute and call $Build\_DiffPID3$ with each partition to create the sub-trees.

% A first ID3 based private algorithm is proposed by Blum et al \cite{blum2005practical} based on Laplace mechanism. Our approach is based on Friedman et al \cite{friedman2010data} (Algorithm \ref{algo:DiffPID3}) which proposed an improvement on the Blum et al \cite{blum2005practical} algorithm. It replaces a Laplace mechanism call by an exponential mechanism call. This change corresponds to line 12 in Algorithm \ref{algo:privtopk}. The authors added support for continuous attributes and pruning.

\begin{algorithm}[ht]
    \SetAlgoLined\DontPrintSemicolon
    \SetKwFunction{algo}{GlobalDiffPID3(Dataset $\TT$, Attribute Set $\AAA$, Class attribute $C$, Depth $d$, Privacy Budget $\epsilon$)}
    \SetKwProg{myalg}{Procedure}{}{}
    \myalg{\algo}{
    $\epsilon' = \epsilon/(2(d+1))$ \;
    \KwRet{\upshape Build\_DiffPID3($\TT$, $\AAA$, $C$, $d$, $\epsilon'$)}\;}{}

    \SetKwFunction{algoo}{Build\_DiffPID3($\TT$, $\AAA$, $C$, $d$, $\epsilon$)}
    \myalg{\algoo}{
    $t = \max_{A \in \AAA} |A|$  \;
    $N_{\TT} = \text{NoisyCount}_{\epsilon}(\TT)$ \;
    \If{$\AAA = \emptyset$ or $d=0$ or $\frac{N_{\TT}}{t|C|} < \frac{\sqrt{2}}{2}$}{
        $\TT_c = \text{Partition}(\TT, \forall c \in C : r_c = c)$ \;
        $\forall c \in C : N_c  = \text{NoisyCount}_{\epsilon} (\TT_{c})$ \;
        \KwRet{\upshape a leaf labeled with $\arg\max_c(N_c)$}\;
    }
    $\bar{A} = \mathcal{E}(\TT,\epsilon,IG,\AAA)$ \tcp{Exp. mechanism call}
    $\TT_i = \text{Partition}(\TT, \forall i \in \bar{A} : r_{\bar{A}} = i)$ \;
    $\forall i \in \bar{A} : \text{Subtree}_i = \text{Build\_DiffPID3}(\TT_i, \AAA \setminus \bar{A}, C, d-1, \epsilon) $ \;
    \KwRet{\upshape a tree with a root node labeled $\bar{A}$ and edges labeled $1$ to $\bar{A}$ each going to $\text{Subtree}_i$}\;}{}

    % \addtocontents{loa}{\vskip -15pt}
    \addtocontents{loa}{\protect\addvspace{-9pt}}
    \caption{GlobalDiffPID3}
    \label{algo:DiffPID3}
\end{algorithm} 

To the best of our knowledge, our work is the first to apply local sensitivity to single trees which was an open question pointed out in a recent survey on private decision trees \cite{fletcher2019decision}.

Many other works address the private construction of random forests \cite{fletcher2015bdifferentially,fletcher2015differentially,fletcher2017differentially,jagannathan2009practical,patil2014differential,rana2015differentially}. Interestingly, local sensitivity was used for building random forests \cite{fletcher2015differentially,fletcher2017differentially} using smooth sensitivity. This shows a promising future direction of our work which is applying local dampening to construct random forests. 

\subsection{Private Mechanism}

We use the algorithm GlobalDiffPID3 \cite{friedman2010data} (Algorithm \ref{algo:DiffPID3}) as a template. We aim to adapt it for the use of the local dampening mechanism and to the shifted local dampening mechanism producing the \textit{LocalDiffPID3} and \textit{ShiftedLocalDiffPID3}, respectively. In the following, we make a discussion about the split criterion, the global sensitivity of the split criterion for the exponential mechanism and the element local sensitivity for the local dampening.

\textbf{Split criterion}. In this work, we address one of the most traditional split criterion, \textit{information gain} (IG). It is given by the entropy of the class attribute $C$ in $\TT$ minus the obtained entropy of $C$ splitting the tuples according to an attribute $A \in \AAA$. 

% We need a utility function that (Line $12$, Algorithm \ref{algo:DiffPID3})
%  The exponential mechanism requires a utility function $u$ (Line $12$, Algorithm \ref{algo:DiffPID3}) 

\begin{equation*}
    IG(\TT, A) = H_C(\TT) - H_{C|A}(\TT),
\end{equation*}
where $H_C(\TT)$ is entropy with respect to the classe attribute $C$

\begin{equation*}
    H_C(\TT) = -\sum_{c \in C} \frac{\ttt_c}{\ttt} \log \frac{\ttt_c}{\ttt},
\end{equation*} 
and $H_{C|A}(\TT)$ is the entropy obtained by splitting the instances according to attribute $A$

\begin{equation*}
    H_{C|A}(\TT) = \sum_{j \in A} \frac{\ttt_j^A}{\ttt} . H_C(\TT^A_j).
\end{equation*}

Since $H_C(\TT)$ does not depend on $A$, we can further simplify the utility function $IG$:

\begin{align}
    IG(\TT, A) &= - \ttt . H_{C|A}(\TT) \\
    & = - \sum_{j \in A} \sum_{c \in C} \tjc . \log(\frac{\tjc}{\tj}). \label{line:utility_ig}
\end{align}

\textbf{Global sensitivity}. The exponential mechanism requires the computation of the global sensitivity for $IG$. It is given by $\Delta IG = \log(N+1) + 1/ \ln 2$ \cite{friedman2010data} where $N$ is the size of the dataset $\TT$. The global sensitivity case can be achieved by $\TT$ and $\TT'$ where:
\begin{enumerate}
    \item $\TT$ has all tuples with values for $A$ equal to a single value $j \in A$ and all tuples class attribute $C$ are set to a value different from a given value $c \in C$ (i.e. $\tj = \ttt$ and $\tjc = 0$);
    \item  $\TT'$ is obtained from $\TT$ by adding a tuple $r$ where $r_A = j$ and $r_C = c$.
\end{enumerate}

\textbf{Element local sensitivity}. In our experiments, we observed that this mentioned case for the global sensitivity is not frequent in real datasets. For those datasets, a local measurement of the sensitivity can be about one order of magnitude lower than the global sensitivity. To this matter, we replace line $12$ of algorithm \ref{algo:DiffPID3} for a local dampening mechanism call producing the algorithm LocalDiffPID3. 

\textbf{Element Local Sensitivity at distance 0}. To use the local dampening mechanism, we provide means to efficiently compute the element local sensitivity for $IG$ (Lemma \ref{lemma:local_Infogain}). The element local sensitivity at distance $t$ measures $LS^{IG}(\TT', 0, A)$ for all datasets $\TT'$ such that $d(\TT, \TT') \leq t$. We first show how to obtain $LS^{IG}(\TT', 0, A)$:

\begin{restatable}{lemma}{lemmalocalinfogain}
    (Element local sensitivity at distance 0 for IG). Given a dataset $\TT$ and the attribute set $A$, $LS^{IG}(\TT,0,A)$ produces the element local sensitivity for IG at distance $0$:
    \begin{equation*}
        LS^{IG}(\TT,0,A) = \max_{j \in A, c \in C} h(\tjx{\TT}, \tjc{\TT}),
    \end{equation*}
    where
    \begin{align*}
        h(a,b) &= \max(f(a)-f(b), g(b)-g(a) ),\\
        g(x) &= x.log((x-1)/x) - log(x-1),\\
        f(x) &= x.log((x+1)/x) + log(x+1).
    \end{align*}
    \label{lemma:local_Infogain0}
\end{restatable}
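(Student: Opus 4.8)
The plan is to compute the element local sensitivity $LS^{IG}(\TT,0,A)$ directly from Definition~\ref{def:item_local} (with $t=0$), which asks for the maximum of $|IG(\TT,A)-IG(\TT',A)|$ over all neighbors $\TT'$ of $\TT$. Using the simplified form \eqref{line:utility_ig}, $IG(\TT,A) = -\sum_{j\in A}\sum_{c\in C} \tjc \log(\tjc/\tj)$, so the utility decomposes as a sum of independent per-cell contributions indexed by $(j,c)$, where the $(j,c)$-term depends only on the two counts $\tjc$ and $\tj$. Adding or removing a single record $r$ changes exactly one value $j^\star = r_A$ (incrementing or decrementing $\tj$ by one) and exactly one value $c^\star = r_C$ within that block (incrementing or decrementing $\tjc$ by one). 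Hence only the terms for the single pair $(j^\star,c^\star)$ and, more precisely, the single summand indexed by $(j^\star,c^\star)$ together with the partial sums that share $\tj[j^\star]$ are affected; I would argue that the net change in $IG$ caused by toggling one record in block $j^\star$ is governed by how a single count $\tjc$ and its block total $\tj$ move in lockstep, and the worst case over which record to toggle reduces to maximizing over the choice of $(j,c)$ and over the direction (add vs.\ remove).

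**Next I would** isolate the one-dimensional functions. Define $\phi(a,b) = -b\log(b/a)$ as the contribution of a cell with block-total $a$ and class-count $b$ (with the convention $0\log 0 = 0$). Summing $\phi$ over $c\in C$ with $\sum_c b = a$ recovers $-\tj H_C(\TT^A_j)$ correctly. Adding a record with $r_A=j$, $r_C=c$ sends $(a,b)\mapsto(a+1,b+1)$ in that cell and $(a',b')\mapsto(a'+1,b')$ in each sibling cell $c'\ne c$ of the same block; removing a record does the reverse. Carrying out the telescoping over the siblings, the aggregate change in $\sum_{c'}\phi$ for block $j$ simplifies—because $\sum_{c'} b' = a$—to an expression in just $a=\tj$ and $b=\tjc$. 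Concretely, the increase from an ``add'' in cell $(j,c)$ works out to $f(\tjc) - f(\tj)$ up to sign, where $f(x) = x\log\frac{x+1}{x} + \log(x+1)$ is the discrete-derivative-style function of $x\mapsto x\log x$; symmetrically a ``remove'' yields $g(\tjc)-g(\tj)$ with $g(x) = x\log\frac{x-1}{x} - \log(x-1)$. Taking absolute values and maximizing the add-case and remove-case separately gives $h(\tjx{\TT},\tjcx{\TT}) = \max\bigl(f(\tjx{\TT})-f(\tjcx{\TT}),\, g(\tjcx{\TT})-g(\tjx{\TT})\bigr)$, and maximizing over all cells $(j,c)$ gives the claimed formula.

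**The main obstacle** is the bookkeeping of the telescoping sum over siblings and getting the signs right: when a record is added to block $j$ with class $c$, the cell $(j,c)$ changes both of its arguments while every other cell $(j,c')$ in that block has only its first argument ($\tj$) change, and one must verify that the sum of all these perturbations collapses to a function of only $\tj$ and $\tjc$ (this uses $\sum_{c'}\tjcx{\TT}[j,c'] = \tjx{\TT}[j]$ and the algebraic identity $\sum_{c'} b' \log\frac{a+1}{a} = a\log\frac{a+1}{a}$). A secondary point is handling boundary cases cleanly: $\tj = 0$ (empty block, contributes nothing and cannot be decremented), $\tjc = \tj$ (pure block), and $\tjc = 0$, where the $\log$ conventions must be checked so that $f,g$ remain well-defined. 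I would also note monotonicity/convexity of $x\log x$ to confirm that the per-cell worst case is attained at the extreme toggle rather than at some interior configuration, and that no other neighbor $\TT'$ (e.g.\ one obtained by changing a record's value rather than adding/removing) can do better under the Hamming-distance model, since any single-record change factors as a remove followed by an add but Definition~\ref{def:dp} here counts it as distance one—so I would double-check whether the intended neighbor relation is add/remove or substitution and phrase the case analysis accordingly, as this affects whether one takes $h$ of a single cell or a sum over two cells.
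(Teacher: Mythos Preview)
Your proposal is correct and matches the paper's approach; the paper streamlines your sibling-telescoping step by first rewriting the block-$j$ contribution as $\sum_{c}\tjc\log\tjc-\tj\log\tj$ (using $\sum_c\tjc=\tj$), so that a single add/remove touches exactly two terms and the change is immediately $f(\tjc)-f(\tj)$ or $g(\tjc)-g(\tj)$ with no per-sibling bookkeeping. On your final concern, the paper's proof indeed treats neighbors as differing by one added or removed tuple.
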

Assume that $g(x)=0$ for $x \leq 1$ and $f(x)=0$ for $x \leq 0$. Note that, the expression $g(\tjcx{\TT})-g(\tjx{\TT})$ measures the impact of the removal of a tuple $r$ such that $r_A = j$ and $r_C = c$ and the expression $f(\tjx{\TT})-f(\tjcx{\TT})$ measures the addition of tuple $r$. Thus to obtain $LS^{IG}(\TT,0,A)$, we need to measure, for all $j \in A$ and $c \in C$, the addition or removal of the tuple $r$ where $r_A = j$ and $r_C = c$, i.e. $h(\tjx{\TT},\tjcx{\TT}) = \max(f(\tjx{\TT})-f(\tjcx{\TT}), g(\tjcx{\TT})-g(\tjx{\TT}))$.

\sloppy \textbf{Element local sensitivity at distance t}. We use a similar idea to compute $LS^{IG}(\TT,t,A)$. $LS^{IG}(\TT,t,A)$ searches for the largest $LS^{u}(\TT',0,A)$  over all datasets $\TT'$ where $d(\TT,\TT') \leq t$:

\begin{align*}
    LS^{IG}(\TT,t,A) & = \max_{\TT'| d(\TT,\TT') \leq t} LS^{u}(\TT',0,A) \\
    & =  \max_{c \in C, j \in A} \max_{\TT'| d(\TT,\TT') \leq t} h(\tjx{\TT'}, \tjcx{\TT'}). \\
\end{align*}

Exhaustively iterating over all $\TT'$ to compute $h(\tjx{\TT'}, \tjcx{\TT'})$ is not feasible since the number of datasets $\TT'$ grows exponentially with respect to $t$. However, we can restrict the number of evaluations of $h$ by discarding some of the datasets $\TT'$.
% pairs $(\tjx{\TT'}, \tjcx{\TT'})$ so that it becomes computationally practicable.

To this end, we introduce the algorithm \textit{Candidates}$(\TT, t, j, c)$ (Algorithm $\ref{algo:candidates}$) that produces a subset of the set of the pairs ($\tjx{\TT'}$, $\tjcx{\TT'}$) of all datasets $\TT'$ such that $d(\TT, \TT') = t$, i.e.,  \textit{Candidates}$(\TT, t, j, c) \subseteq \{(\tjx{\TT'}, \tjcx{\TT'}) \; | \; d(\TT,\TT') = t \}$.

\begin{algorithm}[htp]
    \SetAlgoLined\DontPrintSemicolon
    \SetKwFunction{algo}{Candidates(Dataset $\TT$, distance $t$, attribute value $j$, class attribute value $c$)}
    \SetKwProg{myalg}{Procedure}{}{}
    \myalg{\algo}{

    \If{$t = 0$}{        
        \KwRet{$\{ (\tj, \tjc) \}$}\;
    }    
    $candidates = \emptyset$ \;
    \For{each pair $(a, b) \in $Candidates($\TT, t-1, j, c$) }{
        \If{$a > 0$ and $b > 0$}{
            $candidates = candidates \cup \{ (a-1, b-1) \}$ \;
        }    
        \If{$a < \ttt$}{
            $candidates = candidates \cup \{ (a+1, b) \}$ \;
        }
    }
    \KwRet{$candidates$}\;}{}
    
    % \addtocontents{loa}{\vskip -15pt}
    \addtocontents{loa}{\protect\addvspace{1pt}}
    \caption{Candidates Algorithm}
    \label{algo:candidates}
\end{algorithm} 

The Candidates algorithm has two important properties: 
\begin{enumerate}
    \item \textit{Candidates}$(\TT, t, j, c)$ contains the pair $(\tjx{\TT'}$, $\tjcx{\TT'})$ such that $h(\tjx{\TT'}$, $\tjcx{\TT'})$ is maximum, i.e., $h(\tjx{\TT'}$, $\tjcx{\TT'}) = \max_{\TT'| d(\TT,\TT') = t} h(\tjx{\TT'}, \tjcx{\TT'})$ (Lemma \ref{lemma:local_Infogain});
    \item It is cacheable, when computing $D_{IG, \delta^{IG}}$, we evaluate $LS^{IG}(\TT',t,A)$ several times in increasing order of $t$ then one can cache calls to $Candidates(\TT, t-1, j, c)$ to execute $Candidates(\TT, t, j, c)$ (line 6) efficiently. 
\end{enumerate}

Thus $LS^{IG}(\TT,t,A)$ is given by:

% \begin{lemma}
%     \begin{equation*}
%         LS^{IG}(\TT,t,A) = \max_{j \in A, c \in C} \mathop{\max_{0 \leq t' \leq t}}_{(a,b) \in Candidates(\TT, t', j, c)} h(a,b)
%     \end{equation*}
%     \label{lemma:local_Infogain}
% \end{lemma}

\begin{restatable}{lemma}{lemmalocainfogaint}
    (Element local sensitivity at distance t for IG) Given an input table $\TT$, a distance $t$ and an attribute set $A$, $LS^{IG}(\TT,t,A)$ produces the element local sensitivity at distance $t$ for $IG$.
    \begin{equation*}
        LS^{IG}(\TT,t,A) = \mathop{\max_{j \in A, c \in C,}}_{0 \leq t' \leq t} \; \max_{(a,b) \in Candidates(\TT, t', j, c)} h(a,b).
    \end{equation*}
    \label{lemma:local_Infogain}
\end{restatable}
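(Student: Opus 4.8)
The plan is to fix one attribute value $j\in A$ and one class value $c\in C$ and prove
\[
  \max_{\TT'|\,d(\TT,\TT')\le t} h\bigl(\tjx{\TT'},\tjcx{\TT'}\bigr)
  =
  \max_{0\le t'\le t}\ \max_{(a,b)\in Candidates(\TT,t',j,c)} h(a,b);
\]
the claimed formula then follows by maximizing over $j$ and $c$ and using Lemma~\ref{lemma:local_Infogain0} together with $LS^{IG}(\TT,t,A)=\max_{\TT'|\,d(\TT,\TT')\le t}LS^{IG}(\TT',0,A)$. Everything rests on monotonicity of $h$, so I would first record that $g(x)=-f(x-1)$ for $x\ge 1$ (consistent with the conventions $f(x)=0$ for $x\le0$, $g(x)=0$ for $x\le1$), so $g(b)-g(a)=f(a-1)-f(b-1)$; since $f'(x)=\log\frac{x+1}{x}$ is positive and decreasing, $f$ is increasing and concave, and concavity gives $f(a-1)-f(b-1)\ge f(a)-f(b)$ whenever $a\ge b$. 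Hence $h(a,b)=f(a-1)-f(b-1)$ on $1\le b\le a$ (and $h(a,0)=f(a)$), and on the region $0\le b\le a$ one gets: (i) $h$ non-decreasing in $a$; (ii) $h$ non-increasing in $b$; (iii) $h(a+1,b+1)\le h(a,b)$, equivalently $h(a-k,b-k)\ge h(a,b)$ for $0\le k\le b$.

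For the ``$\ge$'' inequality I would show by induction on $t'$ that every pair in $Candidates(\TT,t',j,c)$ equals $(\tjx{\TT'},\tjcx{\TT'})$ for some $\TT'$ with $d(\TT,\TT')\le t'$: the base pair $(\tjx{\TT},\tjcx{\TT})$ is realized by $\TT$, and in the inductive step $(a,b)\mapsto(a-1,b-1)$ is realized by deleting a record with $r_A=j,\,r_C=c$ (available since the guard forces $b>0$) while $(a,b)\mapsto(a+1,b)$ is realized by inserting a record with $r_A=j,\,r_C\ne c$, each changing the distance to $\TT$ by at most one.

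The substance is ``$\le$''. Take $\TT'$ with $d(\TT,\TT')\le t$ and set $(a',b')=(\tjx{\TT'},\tjcx{\TT'})$. Let $k=b'-\tjcx{\TT}$ be the net number of inserted-minus-deleted records with $r_A=j,\,r_C=c$ and $\ell=(a'-b')-(\tjx{\TT}-\tjcx{\TT})$ the net number for records with $r_A=j,\,r_C\ne c$; these act on disjoint record classes, so $|k|+|\ell|\le d(\TT,\TT')\le t$, and $(a,b)$ lies in $Candidates(\TT,t'',j,c)$ for some $t''\le t$ precisely when $b\le\tjcx{\TT}$, $a-b\ge\tjx{\TT}-\tjcx{\TT}$, and $(\tjcx{\TT}-b)+\bigl((a-b)-(\tjx{\TT}-\tjcx{\TT})\bigr)\le t$. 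Now ``repair'' $(a',b')$: if $k>0$ undo those insertions, shifting the pair by $(-k,-k)$, which by (iii) does not decrease $h$; if $\ell<0$ undo those deletions, shifting by $(+|\ell|,0)$, which by (i) does not decrease $h$. The repairs commute and produce $(a^\star,b^\star)$ with $b^\star=\min(b',\tjcx{\TT})\le\tjcx{\TT}$, $a^\star-b^\star=\max(a'-b',\tjx{\TT}-\tjcx{\TT})\ge\tjx{\TT}-\tjcx{\TT}$, and $(\tjcx{\TT}-b^\star)+\bigl((a^\star-b^\star)-(\tjx{\TT}-\tjcx{\TT})\bigr)=\max(0,-k)+\max(0,\ell)\le|k|+|\ell|\le t$; hence $(a^\star,b^\star)$ is one of the enumerated candidates and $h(a^\star,b^\star)\ge h(a',b')$. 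A routine check (using $0\le\tjcx{\TT'}\le\tjx{\TT'}\le|\TT'|$) shows the delete-then-insert sequence realizing $(a^\star,b^\star)$ respects the $b>0$ and $a<\ttt$ guards of Algorithm~\ref{algo:candidates}, so it is actually enumerated.

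I expect the ``$\le$'' step to be the main obstacle: it is where one must show that any modification of $\TT$ outside the two enumerated moves is useless, which is exactly what the monotonicity/concavity analysis of $h$ above is for. The analytic facts about $f$ and $g$, the realizability direction, and the reduction to a fixed $(j,c)$ are comparatively routine, and reconciling the exhibited candidate with the algorithm's feasibility guards is a minor technical point.
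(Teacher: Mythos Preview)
Your proposal is correct and follows essentially the same route as the paper: reduce to a fixed $(j,c)$, show every enumerated candidate is realizable by some $\TT'$ (your ``$\ge$'' direction, the paper's Property~1), and show every realizable pair is dominated by an enumerated candidate via monotonicity of $h$ (your ``$\le$'' direction, the paper's Property~2). The only cosmetic difference is that the paper \emph{replaces} a type-1 or type-4 operation by a type-2 operation (keeping the distance equal to $t$), whereas you \emph{undo} the excess type-1/type-3 and type-4/type-2 operations (landing at some $t'\le t$); both reductions rest on the same monotonicity facts about $f$ and $g$. Your observation that $g(x)=-f(x-1)$, hence $h(a,b)=f(a-1)-f(b-1)$ on $1\le b\le a$, is a clean way to obtain properties (i)--(iii) in one stroke and is not made explicit in the paper's proof.
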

In turn, the computation of $LS^{IG}(\TT,t,A)$ is also cacheable. One can store a previous call to $LS^{IG}(\TT,t-1,A)$ to compute $LS^{IG}(\TT,t,A)$ as:

\begin{align*}
    LS^{IG}(\TT,t,A) = &\max\vast(\mathop{\max_{j \in A, c \in C,}}_{(a,b) \in Candidates(\TT, t, j, c)} h(a,b), \\
    &  \quad \quad \quad LS^{IG}(\TT,t-1,A)\vast)
\end{align*}

In the datasets used in our experiments, $LS^{IG}$ and $IG$ shows correlation. Consequently, we also replace the exponential mechanism call on line $12$ in algorithm \ref{algo:DiffPID3} by a call to the Shifted local dampening with $LS^{IG}$, which produces the ShiftedLocalDiffPID3.

% \textbf{Gini Index}. Another well-known splitting criterion is Gini Index \cite{breiman1984classification}. The global sensitivity for Gini Index is $2$ \cite{friedman2010data} which is a relatively low sensitivity. We also calculated the local sensitivity for Gini index but its local sensitivity is not significantly lower than its global sensitivity to improve accuracy. Hence, we omit the definitions and proofs for Gini index but we report the results in the section \ref{sec:experiments}.

\textbf{Continuous attributes support}. An important feature introduced in C4.5 algorithm \cite{salzberg1993c4} is the support for continuous attributes. To support continuous attributes, we use a simpler approach that performed well in our experiments and it is also used in \cite{friedman2010data}. We discretize the continuous attributes into $b$ evenly spaced bins on the dataset and use them as discrete attributes.

\subsection{Experimental Evaluation}

\textbf{Datasets.} We use of three tabular datasets: 1) \textit{National Long Term Care Survey (NLTCS)} \cite{manton2010national} is a dataset that contains $16$ binary attributes of $21,574$ individuals that participated in the survey, 2) \textit{American Community Surveys (ACS)} dataset \cite{series2015version} includes the information of $47,461$ rows with $23$ binary attributes obtained from 2013 and 2014 ACS sample sets in IPUMS-USA and 3) \textit{Adult} dataset \cite{blake1998uci} contains $45,222$ records (excluding records with missing values) with $12$ attributes where $8$ are discrete and $4$ are continuous.

\textbf{Methods}. We compare the three versions of the DiffPID3 (algorithm \ref{algo:DiffPID3}): 1) \textit{GlobalDiffPID3} using the exponential mechanism, 2) \textit{LocalDiffPID3} using the local dampening mechanism and 3) \textit{ShiftedLocalDiffPID3} using the local dampening mechanism with shifting.

\textbf{Evaluation.} We evaluate the accuracy of the approach by reporting the mean accuracy across the $10$ runs of a $10$-fold cross validation. We set $depth \in \{2,5\}$ and $\epsilon \in \{0.01, 0.05, 0.1, 0.5, 1.0, 2.0\}$.

\input{./images/figure_tree1.tex}

Figure \ref{fig:tree_exp} presents the results. We observe that the LocalDiffPID3 improves on the GlobalDiffPID3 in almost every privacy budget value, up to $5\%$. While ShiftedLocalDiffPID3 improves a little more in relation to the LocalDiffPID3, up to $1\%$.

For the ACS dataset, the inversion problem (Section \ref{sec:inversion_problem}) appears. Specifically, for $depth=2$, the second and the third attributes with largest IG become the third and second attributes, respectively, with larger Dampened IG. As a consequence, as $\epsilon$ grows, LocalDiffPID3 tends to pick the first and the third attributes with largest Information which is sub-optimal. ShiftedLocalDiffPID3 is less prone to suffer from the inversion problem in larger depths, i.e. depth=5, since it can pick, in deeper levels, those attributes that loose rank (see Figure \ref{fig:tree_acs_d5}). ShiftedLocalDiffPID3 improves at most $8\%$ on GlobalDiffPID3.

Fot the Adult Dataset, the LocalDiffPID3 improves a little over the GlobalDiffPID3. However, ShiftedLocalDiffPID3 improves over GlobalDiffPID3 up to $4\%$.

\section{Conclusion}
In this paper, we introduced the Local Dampening mechanism, a novel framework to provide Differential Privacy for non-numeric queries using local sensitivity. We have shown that using local sensitivity on non-numeric queries reduces the magnitude of the noise added to achieve Differential Privacy which makes the answer of those queries more useful. We evaluated our approach on three applications: 1) Percentile Selection problem that our approach outperformed global sensitivity based approaches; 2) Influential node analysis which benefited greatly from the use of local sensitivity; and 3) Decision Tree induction which improves on approaches that use the exponential mechanism for this task based on Information Gain.

Our paper has laid the foundations for providing DP for non-numeric queries using local sensitivity. We have achieving a deeper theoretical understanding of the Local Dampening mechanism to understand the class of problems for which it can provide significant gains over the Exponential mechanism. There are many interesting directions of future work. First, as discussed in Section 7, any problem in the literature that has used the Exponential mechanism for non-numeric queries to guarantee DP is a candidate problem that could potentially benefit from using our local dampening mechanism instead, and worthy of future work. Second, it would be interesting to tackle other graph influence/centrality metrics for Influential Node analysis, such as PageRank. Third, applying the local dampening mechanism for private evolutionary algorithms is a promising future direction.

\bibliographystyle{spmpsci}      % mathematics and physical sciences
\bibliography{references}   % name your BibTeX data base

\nottoggle{vldb}{
  \appendix                                     
\section{Proofs}

%%%%%%%%%%%%%%%%%%%%%%%%%%%%%%%%%%%%%%%%%%%

\subsection{Proof of Lemma \ref{lemma:admissiblels}}

\lemmalocaladmissible*

\begin{proof}
                
	We need to satisfy the two conditions of the admissibility of functions.
	\begin{enumerate}
		\item $LS^u(x, 0, r) \geq LS^u(x, 0, r)$ 
		\item Since $\{ y | d(x,y) \leq t\} \subset \{ y| d(x',y) \leq t+1\}$ for any neighboring databases $x,x'$, we have that
	\end{enumerate}
			
	\begin{align*}
		LS^u(x, t, r) & = \max_{y| d(x,y) \leq t, z| d(y,z) \leq 1}|u(y,r)-u(z,r)|       \\
					& \leq \max_{y| d(x',y) \leq t+1, z| d(y,z) \leq 1}|u(y,r)-u(z,r)| \\
					& = LS^u(x', t+1, r)
		\\
	\end{align*}
				
	Thus $LS^u(x, t, r)$ is an admissible function. 
	
\end{proof}

%%%%%%%%%%%%%%%%%%%%%%%%%%%%%%%%%%%%%%%%%%%

\subsection{Proof of Lemma \ref{lemma:delta_convergence}}

\lemmadeltaconvergence*

\begin{proof}
                        
	\sloppy We show that $\min(\delta^u(x, t, r), \Delta u)$ is admissible. First we show that $\min(\delta^u(x, 0, r), \Delta u) \geq LS^u(x, 0, r)$. Thus, as $\delta^u(x, 0, r) \geq LS^u(x, 0, r)$ and $\Delta u \geq LS^u(x, 0, r)$ we have that $ \min(\delta^u(x, 0, r), \Delta u) \geq LS^u(x, 0, r)$. 
	
	Now, Suppose that $t>0$, let $y$ be a neighboring database of $x$. We have that $\delta^u(x, t+1, r), \Delta u) \geq \delta^u(y, t, r), \Delta u)$ as $\delta^u$ is admissible. This, $\min(\delta^u(x, t+1, r), \Delta u) \geq \min(\delta^u(y, t, r), \Delta u)$ holds. Thus $\min(\delta^u(x, t, r), \Delta u)$ is admissible.                        
					
	We move to show that $\min(\delta^u(x, t+1, r), \Delta u)$ is bounded. Suppose that $t \geq n$ The maximum hamming distance between two datasets is at most $n$. Thus $\{y| d(x,y) \leq n\} = D^n$. So we have:

	\begin{align*}
		LS^u(x, t, r) & = \max_{y| d(x,y) \leq t } LS^u(y, 0, r) = \max_{y \in D^n } LS^u(y, 0, r) = \Delta u \\
	\end{align*}                                
					
	\sloppy Therefore, we have that $\delta^u(x, t, r) \geq LS^u(x, t, r)$ since $\delta^u$ is admissible. Thus it implies that $\delta^u(x, t, r) = \Delta u$. Finally, $\min(\delta^u(x, t, r), \Delta u) = \Delta u$ for any $t>n$.
\end{proof}

%%%%%%%%%%%%%%%%%%%%%%%%%%%%%%%%%%%
\subsection{Proof of Lemma \ref{lemma:principal}}

\lemmaprincipal*

\begin{proof}
	Fix a database $x \in D^{n}$ and let $y \in D^{n}$ be a neighbor of $x$ such that $d(x,y) \leq 1$. Assume $u(x,r)$ lies in $[b(x,i,r), b(x,i+1,r))$ for some $i \in \mathbb{Z}$. We first show that $D_{u,\delta^u}(x,r) - D_{u,\delta^u}(y,r) \leq 1$. We analyse it in two cases: (1) $u(x,r) \geq 0$ and (2) $u(x,r)<0$.
			
	\sloppy \textbf{Case (1)}. Assume $u(x,r) \geq 0$. By construction of the dampening function $D_{u,\delta}$, $i\geq 0$ holds. Thus, one can find bounds for $u(y,r)$ using the definition of $LS^u$ and the admissibility of $\delta^u$.

	\begin{align*}
		u(y,r) & \geq u(x,r) - LS^u(x, 0, r) \\
				& \geq b(x,i,r) - \delta^u(x, 0, r) \\
				& = \sum^{i-1}_{j=1} \delta^u(x, j, r) \geq \sum^{i-2}_{j=0} \delta^u(y, j, r) \\
				& = b(y,i-1,r)         
	\end{align*}
	
	and 
	
	\begin{align*}
		u(y,r) & \leq u(x,r) + LS^u(x, 0, r) \\
				& \leq b(x,i+1,r) + \delta^u(x, 0, r) \\
				& = \sum^{i}_{j=0} \delta^u(x, j, r) + \delta^u(x, 0, r) \\
				& \leq \sum^{i+1}_{j=1} \delta^u(y, j, r) + \delta^u(x, 0, r) = b(y,i+2,r)         
	\end{align*}
	
	\sloppy Thus $u(y,r) \in [b(y,i-1,r), b(y,i+2,r))$. We split the argument in three subcases: (1.1) $u(y,r) \in [b(y,i-1,r), b(y,i,r))$. (1.2) $u(y,r) \in [b(y,i,r), b(y,i+1,r))$ and (1.3) $u(y,r) \in [b(y,i+1,r), b(y,i+2,r))$.
	
		\textit{Case (1.1)}. Assume that $u(y,r) \in [b(y,i-1,r), b(y,i,r))$. Thus we get the following:		
		\begin{align}
			& D_{u,\delta^u}(x,r) - D_{u,\delta^u}(y,r) \label{line:proof_dampeninga}\\
			& = \frac{u(x,r)-b(x,i,r)}{b(x,i+1,r) - b(x,i,r)} +i  - \nonumber\\ 
			& \quad - \frac{u(y,r)-b(y,i-1,r)}{b(y,i,r) - b(y,i-1,r)} -i+1  \label{line:proof_dampeningb}\\
			& \leq \frac{u(x,r)-b(x,i,r)- u(y,r)+b(y,i-1,r)}{b(y,i,r) - b(y,i-1,r)}+1 \label{line:proof_dampeningc}\\
			& \leq \frac{u(x,r)-b(x,i,r) - u(x,r)+\delta(x,0,r)+b(y,i-1,r)}{b(y,i,r) - b(y,i-1,r)}+1 \label{line:proof_dampeningd} \\
			& \leq \frac{-b(x,i,r)+b(x,i,r)}{b(y,i,r) - b(y,i-1,r)} + 1 \leq 1
	\end{align}
	\sloppy The rationale for the equations above is the following: Equation (\ref{line:proof_dampeningb}) follows from the application of the definition of $D_{u,\delta^u}$; Equation (\ref{line:proof_dampeningc}), as $b(x,i+1,r) - b(x,i,r) = \delta_u(x,i,r) \geq \delta_u(y,i-1,r) = b(y,i,r) - b(y,i-1,r)$, we have that:
	\begin{equation*}
		\frac{u(x,r)-b(x,i,r)}{b(x,i+1,r) - b(x,i,r)}  \leq \frac{u(x,r)-b(x,i,r)}{b(y,i,r) - b(y,i-1,r)}
	\end{equation*}
	
	Equation (\ref{line:proof_dampeningd}) holds since $u(y,r) \geq u(x,r) - LS^u(x,0,r) \geq u(x,r) - \delta^u(x,0,r)$. That $D_{u,\delta^u}(y,r) - D_{u,\delta^u}(x,r) \leq 1$ follows by symmetry.
	
	\textit{Case (1.2)}. Assume that $u(y,r) \in [b(y,i,r), b(y,i+1,r))$ which entails that  $D_{u,\delta^u}(y,r) \in [i, i+1)$. Likewise, as $u(x,r) \in [b(x,i,r), b(x,i+1,r))$ by assumption, it holds that $D_{u,\delta^u}(x,r) \in [i, i+1)$. In what follows, we get that:
	\begin{equation*}
		|D_{u,\delta}(x,r) - D_{u,\delta}(y,r)| \leq 1
	\end{equation*}
	
	\textit{Case (1.3)}. Assume that $u(y,r) \in [b(y,i+1,r), b(y,i+2,r))$. This case follows similar reasoning to the case (1.1), so we omit this part of the proof.
				
	\textbf{Case (2)}. Assume $u(x,r) < 0$. This case is symmetric to the case (1) as $D_{u,\delta}$ is symmetric.
	
	Given that, $|D_{u,\delta}(x,r)-D_{u,\delta}(y,r)|\leq 1$ holds for all pairs of neighboring databases $x,y \in D^n$ where $d(x,y) \leq 1$ and for all $r \in R$.
\end{proof}

%%%%%%%%%%%%%%%%%%%%%%%%%%%%%%%%%%%%%%%%%%%%%

\subsection{Proof of Lemma \ref{lemma:convergence}}

\lemmaconvergence*

\begin{proof}
	Let $r \in \mathcal{R}$ be an output element. By definition of $u^s$, observe that 
	
	\begin{align*}
		u^s(x,r) = u(x,r) - s \leq u(x,r) - n  \Delta u - \max_{r' \in R} u(x,r') = u^{s_0}(x,r)
	\end{align*}
			
	since $s \geq s_0$. And also, as $n \geq 0$, we get that
	
	\begin{align*}
		u^{s_0} = u(x,r) - n \Delta u - \max_{r' \in R} u(x,r') \leq -n \Delta u \leq 0  
	\end{align*}

	This means that $u^{s_0}(x,r)$ is non-positive and, by consequence, $u^{s}(x,r)$ is also non-positive for all $s>s_0$ and all $r \in R$. Therefore, by the construction of $D_{\deltau, u}$, we have that $u^{s_0} \in [b(x,i,r), b(x, i+1, r))$ for some $i \leq 0 $ since $u^{s_{0}}(x,r)\leq 0$. As $\deltau$ is bounded:
	
	\begin{align*}
		& b(x,i,r) \leq u^{s_0}(x,r) \\
		& \Rightarrow -\sum_{j=0}^{-i-1} \deltau(x,j,r) \leq u(x,r) -n \Delta u - \max_{r' \in R} u(x,r') \\
		& \Rightarrow (i+1) \Delta u \leq -n \Delta u  \\
		& \Rightarrow  i+1 \leq -n \\
		& \Rightarrow n \leq -i-1
	\end{align*}

	This last fact, the admissibility and convergence ($\deltau$ is bounded) of $\deltau$ lead us to show that the difference $b(x,k,r) - b(x,i,r)$ is equal to $(k-i)\Delta u$ for all $k \leq i < 0$. We will use this fact posteriorly.

	\begin{align}
		b(x,k,i)& - b(x,i,r) = \\
		& = -\sum_{j=0}^{-k-1}\deltau(x,-j,r) + \sum_{j=0}^{-i-1}\deltau(x,-j,r)  \\
		& = -\sum_{j=-i}^{-k-1}\deltau(x,-j,r) \\ 
		& = -\sum_{j=-i}^{-k-1}\deltau(x,n,r)\\
		& = (k-i) \Delta u \label{line:theo5_b}
	\end{align}
	
	Note that $u^{s} \in [b(x,k,r), b(x,k+1,r))$ for some $k \leq i \leq0 $ since $u^{s}(x,r) \leq u^{s_{0}}(x,r) \leq 0 $. Given that, we calculate the difference:   
	 
	\begin{align}
		& D_{u^{s_0},\deltau}(x,r) - D_{u^s,\deltau}(x,r) \\
		& = \frac{u^{s_0}(x,r) - b(x,i,r)}{b(x,i+1,r) - b(x,i,r)} + i \nonumber \\
		& \quad - \frac{u^{s}(x,r) - b(x,k,r)}{b(x,k+1,r) - b(x,k,r)} - k  \\
		& = \frac{u^{s_0}(x,r) - b(x,i,r) - u^{s}(x,r) + b(x,k,r)}{\Delta u} \quad - k + i \\
		& = \frac{u^{s_0}(x,r) - b(x,i,r) - u^{s_0}(x,r) - s_{0} + s + b(x,k,r)}{\Delta u} \nonumber \\
		& \quad  - k + i  \label{line:theo5_a}\\
		& = \frac{s - s_{0}}{\Delta u} + \frac{b(x,k,r) - b(x,i,r) }{\Delta u} - k + i  \label{line:theo5_c}\\
		& = \frac{s - s_{0}}{\Delta u} + \frac{(k-i)\Delta u }{\Delta u} - k + i  = \frac{s - s_{0}}{\Delta u}
	\end{align}
	Equation (\ref{line:theo5_a}) holds since $u^s(x,r) = u(x,r) - s = u^{s_0}(x,r) + s_o - s$ and equation \ref{line:theo5_c} follows from equation \ref{line:theo5_b}.
	
	Finally,
	\begin{align*}
		& \frac{ 
			\exp(\epsilon \; D_{u^{s}, \deltau}(x,r)/2)
		}{ 
			\sum_{r' \in R} \exp( \epsilon \; D_{u^s, \deltau}(x,r')/2)
		} \\ 
		& = \frac{ 
			\exp(\epsilon \; (D_{u^{s_0}, \deltau}(x,r)-(s - s_{0})/\Delta u)/2)
		}{ 
			\sum_{r' \in R} \exp( \epsilon \; (D_{u^{s_0}, \deltau}(x,r')-(s - s_{0})/\Delta u)/2)
		} \\
		& = \frac{
			\exp(-\epsilon(s - s_{0})/2 \Delta u).\exp(\epsilon \; D_{u^{s_0}, \deltau}(x,r)/2)
		}{ 
			\exp(-\epsilon(s - s_{0})/2 \Delta u). \sum_{r' \in R} \exp( \epsilon \; D_{u^{s_0}, \deltau}(x,r')/2)
		} \\
		& = \frac{
			\exp(\epsilon \; D_{u^{s_0}, \deltau}(x,r)/2)
		}{ 
			\sum_{r' \in R} \exp( \epsilon \; D_{u^{s_0}, \deltau}(x,r')/2)
		}
	\end{align*}
\end{proof}

%%%%%%%%%%%%%%%%%%%%%%%%%%%%%
\subsection{Proof of Lemma \ref{lemma:local_dampening_accuracy}}

\lemmaLocalDampeningAccuracy*

\begin{proof}
    We first prove point 1. Let $t$ be a real number larger than $0$. we need to show that the following expression if non-positive. 
    
    \begin{align}
        & Pr[\error(\mecsld,x) \geq \theta] - Pr[\error(\mecsldbar,x) \geq \theta] \\ 
        & = \sum_{r \in \range | u^* - u(x,r) \geq \theta} 
        Pr[\mecsld(x) = r] - Pr[\mecsldbar(x) = r]\\
        & = \sum_{r \in \range | u^* - u(x,r) \geq \theta} 
        \lim_{s \to \infty} \left( 
            \frac{ 
                \exp \left( \frac{\epsilon \, D_{u^{s}, \delta}(x,r)}{2} \right) 
            }
            { 
                \sum_{r' \in \mathcal{R}} \exp \left( \frac{\epsilon \, D_{u^{s}, \delta}(x,r')}{2} \right)
            } 
        \right)  \nonumber \\
        & \quad - \lim_{s \to \infty} \left( 
            \frac{ 
                \exp \left( \frac{\epsilon \, D_{u^{s}, \deltabar}(x,r)}{2} \right) 
            }{ 
                \sum_{r' \in \mathcal{R}} \exp \left( \frac{\epsilon \, D_{u^{s}, \deltabar}(x,r')}{2} \right)
            }
        \right)  \\	
        & = \sum_{r \in \range | u^* - u(x,r) \geq \theta} 
        \vast( 
            \frac{ 
                \exp \left( \frac{\epsilon \, D_{u^{s_0}, \delta}(x,r)}{2} \right) 
            }{ 
                \sum_{r' \in \mathcal{R}} \exp \left( \frac{\epsilon \, D_{u^{s_0}, \delta}(x,r')}{2} \right)
            } \nonumber \\
        & \quad 
            - \frac{ 
                \exp \left( \frac{\epsilon \, D_{u^{s_0}, \deltabar}(x,r)}{2} \right) 
            }{ 
                \sum_{r' \in \mathcal{R}} \exp \left( \frac{\epsilon \, D_{u^{s_0}, \deltabar}(x,r')}{2} \right)
            } 
        \vast) \label{line:accuracy_s0}\\	
        % 	
        % & = \sum_{r \in \range | u^* - u(x,r) \geq \theta} 
        % \vvast( \frac{
        %     \exp \left( 
        %         \frac{\epsilon \, D_{u^{s_0}, \delta}(x,r)}{2}  
        %     \right) 
        %     \sum_{r' \in \mathcal{R}} 
        %     \exp \left( 
        %         \frac{\epsilon \, D_{u^{s_0}, \deltabar}(x,r')}{2}  
        %     \right)
        % }{ 
        %     \sum_{r', r'' \in \mathcal{R}} 
        %     \exp \left( 
        %         \frac{\epsilon \, D_{u^{s_0}, \delta}(x,r')}{2}
        %     \right) 
        %     \exp \left( 
        %         \frac{\epsilon \, D_{u^{s_0}, \deltabar}(x,r'')}{2} 
        %     \right)
        % } \\
        % 
        % & \quad \quad \quad \quad \quad \quad \quad \quad 
        % - \frac{
        %     \exp \left( 
        %         \frac{\epsilon \, D_{u^{s_0}, \deltabar}(x,r)}{2}  
        %     \right) 
        %     \sum_{r' \in \mathcal{R}} 
        %     \exp \left( 
        %         \frac{\epsilon \, D_{u^{s_0}, \delta}(x,r')}{2}  
        %     \right) 
        % }{ 
        %     \sum_{r', r'' \in \mathcal{R}} 
        %     \exp \left( 
        %         \frac{\epsilon \, D_{u^{s_0}, \delta}(x,r')}{2} 
        %     \right) 
        %     \exp \left( 
        %         \frac{\epsilon \, D_{u^{s_0}, \deltabar}(x,r'')}{2} 
        %     \right)
        % } \vvast)\\	
        % % 
        & = \frac{
            \sum_{r | u^* - u(x,r) \geq \theta} 
            \sum_{r' \in \mathcal{R}}
            \exp \left( 
                \frac{\epsilon \, (D_{u^{s_0}, \delta}(x,r) + D_{u^{s_0}, \deltabar}(x,r'))}{2}  
            \right)
        }{ 
            \sum_{r', r'' \in \mathcal{R}} 
            \exp \left( 
                \frac{\epsilon \, D_{u^{s_0}, \delta}(x,r')}{2} 
            \right) 
            \exp \left( 
                \frac{\epsilon \, D_{u^{s_0}, \deltabar}(x,r'')}{2}
            \right)
        } \\
        & \quad - \frac{
            \sum_{r | u^* - u(x,r) \geq \theta} 
            \sum_{r' \in \mathcal{R}}
            \exp \left( \frac{\epsilon \, (D_{u^{s}, \deltabar}(x,r)+D_{u^{s}, \delta}(x,r'))}{2}  \right) 		
        }{ 
            \sum_{r', r'' \in \mathcal{R}} \exp \left( \frac{\epsilon \, D_{u^{s}, \delta}(x,r')}{2} \right) \exp \left( \frac{\epsilon \, D_{u^{s}, \deltabar}(x,r'')}{2} \right)
        }\\	
        % 
        % & = \frac{ 
        %     \sum_{
        %         \begin{subarray}{l}r | u^* - u(x,r) \geq \theta \\ r'| u^* - u(x,r) < \theta \end{subarray}
        %     } 
        %     \exp \left( \frac{\epsilon \, (D_{u^{s_0}, \delta}(x,r) + D_{u^{s_0}, \deltabar}(x,r'))}{2}  \right) 		
        %     +
        %     \sum_{\begin{subarray}{l}r | u^* - u(x,r) \geq \theta \\ r'| u^* - u(x,r) \geq \theta \end{subarray}} 
        %     \exp \left( \frac{\epsilon \, (D_{u^{s_0}, \delta}(x,r) + D_{u^{s_0}, \deltabar}(x,r'))}{2}  \right) 		
        % }{
        %      \sum_{r', r'' \in \mathcal{R}} \exp \left( \frac{\epsilon \, D_{u^{s_0}, \delta}(x,r')}{2} \right) \exp \left( \frac{\epsilon \, D_{u^{s_0}, \deltabar}(x,r'')}{2} \right)
        % } \nonumber \\
        % % 
        % & \quad - \frac{ 
        %     \sum_{
        %         \begin{subarray}{l}r | u^* - u(x,r) \geq \theta \\ r'| u^* - u(x,r) < \theta \end{subarray}
        %     } 
        %     \exp \left( 
        %         \frac{\epsilon \, (D_{u^{s_0}, \deltabar}(x,r)+D_{u^{s_0}, \delta}(x,r'))}{2}  
        %     \right)
        %     + 
        %     \sum_{
        %         \begin{subarray}{l}r | u^* - u(x,r) \geq \theta \\ r'| u^* - u(x,r) \geq \theta \end{subarray}
        %     } 
        %     \exp \left( 
        %         \frac{\epsilon \, (D_{u^{s_0}, \deltabar}(x,r)+D_{u^{s_0}, \delta}(x,r'))}{2}  
        %     \right)
        % }{
        %     \sum_{r', r'' \in \mathcal{R}} 
        %     \exp \left( 
        %          \frac{\epsilon \, D_{u^{s_0}, \delta}(x,r')}{2} 
        %     \right) 
        %     \exp \left( 
        %         \frac{\epsilon \, D_{u^{s_0}, \deltabar}(x,r'')}{2} 
        %     \right)
        % } \\        
        & = \frac{ 
            \sum_{
                \begin{subarray}{l}r | u^* - u(x,r) \geq \theta \\ r'| u^* - u(x,r') < \theta \end{subarray}
            } 
            \left(
                \exp \left( 
                    \frac{\epsilon \, (D_{u^{s_0}, \delta}(x,r)+D_{u^{s}, \deltabar}(x,r'))}{2}  
                \right)
            \right)
        }{ 
            \sum_{r', r'' \in \mathcal{R}} 
            \exp \left( 
                \frac{\epsilon \, D_{u^{s_0}, \delta}(x,r')}{2} 
            \right) 
            \exp \left( 
                \frac{\epsilon \, D_{u^{s_0}, \deltabar}(x,r'')}{2} 
            \right)
        } \nonumber \\
        & \quad - \frac{ 
            \sum_{
                \begin{subarray}{l}r | u^* - u(x,r) \geq \theta \\ r'| u^* - u(x,r') < \theta \end{subarray}
            } 
            \left(
                \exp \left( 
                    \frac{\epsilon \, (D_{u^{s_0}, \deltabar}(x,r)+D_{u^{s}, \delta}(x,r'))}{2}  
                \right)
            \right)
        }{ 
            \sum_{r', r'' \in \mathcal{R}} 
            \exp \left( 
                \frac{\epsilon \, D_{u^{s_0}, \delta}(x,r')}{2} 
            \right) 
            \exp \left( 
                \frac{\epsilon \, D_{u^{s_0}, \deltabar}(x,r'')}{2} 
            \right)
        } \\
        & \leq 0 \label{line:accuracy_last_line}
    \end{align}

    The Line \ref{line:accuracy_s0} follows from the Lemma \ref{lemma:convergence} where $s_0 = \Delta u + u*$. The last inequality (Line \ref{line:accuracy_last_line}) follows from the following:
    
    \begin{align}
        & D_{u^{s}, \delta}(x,r') - D _{u^{s}, \delta}(x,r) \\
        & = \frac{
            u^{s_0}(x,r')-\sum_{t=0}^{n-1} \delta(x,t,r')
        }{
            \Delta u
        } \nonumber \\
        & \quad + n 
        - \frac{
            u^{s_0}(x,r)-\sum_{t=0}^{n-1} \delta(x,t,r)
        }{
            \Delta u
        } 
        - n \label{line:definition_so} \\	
        & = \frac{
                u^{s_0}(x,r')+\sum_{t=0}^{n-1} (\deltabar(x,t,r')+\alpha_{x,t,r'})
            }{
                \Delta u
            } \nonumber \\
        & \quad - \frac{
                u^{s_0}(x,r)+\sum_{t=0}^{n-1}(\deltabar(x,t,r)+\alpha_{x,t,r})
            }{
                    \Delta u
            } \\
        & = \frac{
            u^{s_0}(x,r')
            +\sum_{t=0}^{n-1} \deltabar(x,t,r')
            +\sum_{t=0}^{n-1} \alpha_{x,t,r'}
        }{
            \Delta u
        } \nonumber \\
        & \quad - \frac{
            u^{s_0}(x,r)
            +\sum_{t=0}^{n-1} \deltabar(x,t,r)
            +\sum_{t=0}^{n-1} \alpha_{x,t,r}
        }{\Delta u} \\
        & = D_{u^{s}, \deltabar}(x,r') +
         \frac{
            \sum_{t=0}^{n-1} \alpha_{x,t,r'}
        }{
            \Delta u
        } \quad \\ 
        & \quad - D_{u^{s}, \deltabar}(x,r)
        - \frac{          
            \sum_{t=0}^{n-1} \alpha_{x,t,r}
        }{\Delta u} \\
        & \geq D_{u^{s}, \deltabar}(x,r')- D_{u^{s}, \deltabar}(x,r) \label{line:last_line_acc1}
    \end{align}

% TODO: definition for s_0?
Line \ref{line:definition_so} follows from the definition of the shifted local dampening when using shifting by $s_0$. The last inequality (Line \ref{line:last_line_acc1}) is due to the dominance of $\delta^u$ over $\deltabar^u$ and as $u(x,r') > u(x,r)$.
\end{proof}

%%%%%%%%%%%%%%%%%%%%%%%%%%%%%%%%%%%%%%%%%%
\subsection{Proof of Lemma \ref{lemma:local_dampening_exponential}}

\lemmalocaldampeningexponential*

\begin{proof}
    First, we show that $Pr[\mecexp(x, \epsilon, u, \mathcal{R})=r] =Pr[\mecld(x, \epsilon, u, \delta^{u}, \mathcal{R})= r]$ where $\delta^{u}(x,t,r) = \Delta u $. For that, it suffices to demonstrate that $D_{u,\delta^u}(x, r) = \frac{u(x,r)}{\Delta u}$. We prove this claim for $u(x,r) \geq 0$. The case where $u(x,r) < 0$ is symmetric and it is omitted here. Let $i$ be the smallest integer such that $u(x,r) \in \left[ b(x,i,r), b(x,i+1,r) \right)$. We have that:

    \begin{align}
        b(x,i,r)  & = \begin{cases}
			\sum^{i-1}_{j=0} \delta^u(x, j, r) & \text{if} \ i > 0 \\
			0 & \text{if} \ i = 0 \\
			- b(x, -i, r)                    & \text{otherwise} \\
		\end{cases} \\
        & = i.\Delta u
    \end{align}.

    The last equality follows from the fact that $\delta^{u}(x,t,r) = \Delta u $ and $i \geq 0$. Therefore:

    \begin{align}
        D_{u,\delta^u}(x, r) & = \frac{u(x, r)-b(x,i,r)}{b(x, i+1, r)- b(x, i, r)} + i\\
        & = \frac{u(x,r) - i.\Delta u}{\Delta u} + i \\    
        & = \frac{u(x,r)}{\Delta u}
    \end{align}

    Then we proceed to show that $Pr[\mecld(x, \epsilon, u, \delta^{u}, \mathcal{R})= r] = Pr[\mecsld(x, \epsilon, u, \delta^{u}, \mathcal{R})= r]$. 

    \begin{align}
        Pr&[\mecsld(x, \epsilon, u, \delta^{u}, \mathcal{R})= r] = \\
        & =  \lim_{s \to \infty} \left( 
			\frac{ 
				\exp \left( \frac{\epsilon \, D_{u^{s}, \deltau}(x,r)}{2} \right) 
			}{ 
				\sum_{r' \in \mathcal{R}} \exp \left( \frac{\epsilon \, D_{u^{s}, \deltau}(x,r')}{2} \right)
			} 
		\right) \\
        & = \lim_{s \to \infty} \left( 
			\frac{ 
				\exp \left( \frac{\epsilon \, (u(x,r) - s)}{2 \Delta u} \right) 
			}{ 
				\sum_{r' \in \mathcal{R}} \exp \left( \frac{\epsilon \, (u(x,r) - s)}{2 \Delta u} \right)
			}
		\right)\\
        & = \lim_{s \to \infty} \left( 
			\frac{ 
				\exp \left( 
                    \frac{
                        \epsilon \, u(x,r)
                    }{
                        2 \Delta u
                    } 
                \right) 
                \exp \left( 
                    \frac{
                        \Delta u
                    }{
                        s \epsilon
                    } 
                \right) 
			}{ 
				\sum_{
                    r' \in \mathcal{R}
                } 
                \exp \left( 
                    \frac{
                        \epsilon \, u(x,r) 
                    }{
                        2 \Delta u
                    } 
                \right)
                \exp \left( 
                    \frac{
                        \Delta u
                    }{
                        s \epsilon
                    } 
                \right)
			}
		\right) \\
        & = \frac{ 
				\exp \left( 
                    \frac{
                        \epsilon \, u(x,r)
                    }{
                        2 \Delta u
                    } 
                \right) 
			}{ 
				\sum_{
                    r' \in \mathcal{R}
                } 
                \exp \left( 
                    \frac{
                        \epsilon \, u(x,r) 
                    }{
                        2 \Delta u
                    } 
                \right)
			}
    \end{align}

\end{proof}

%%%%%%%%%%%%%%%%%%%%%%%%%%%%%%%%%%%%%%%%%%%
\subsection{Proof of Lemma \ref{lemma:minimumls}}

\lemmaminimumls*

\begin{proof}
	We show that $LS^u(x, t, r)$ is less than or equal any admissible sensitivity function $\delta^u(x, t, r)$ by induction on $t$.
			
	\textbf{Basis}: for $t=0$, $LS^u(x, 0, r) \leq  \delta^u(x, 0, r) $ holds since $\delta^u$ is admissible for all $x \in D^{n},r \in R$
			
	\textbf{Inductive step}: suppose that $LS^u(x, t, r) \leq \delta^u (x, t, r)$ is true for all $x \in D^{n},r \in R$. We must show that $LS^u(x, t+1, r) \leq \delta^u (x, t+1, r)$ is true for all $x \in D^{n},r \in R$. By the definition of element local sensitivity:
			
	\begin{align*}
		LS^u(x, t+1, r) & = \max_{y| d(x,y) \leq t+1} LS^u(y,0,r)                               \\
					  & = \max_{y| d(x,y) \leq 1} \max_{y| d(y,z) \leq t} LS^u(z,0,r)         \\
					  & \leq \max_{y| d(x,y) \leq 1} \delta^u(y, t, r) \leq \delta^u(x, t+1, r) \\
	\end{align*}
			
	First inequality holds by hypothesis and the second inequality follows by the admissibility of $\delta^u$. Thus $LS^u(x, t+1, r) \leq \delta^u(x, t+1, r)$ for all $x \in D^{n},r \in \range, t \geq 0$.
\end{proof}

%%%%%%%%%%%%%%%%%%%%%%%%%%%%%%%%%%%%%%%%%
\subsection{Proof of Lemma \ref{lemma:percentile_local_sensitivity_distance_0}}

\lemmalocalsensitivitydistancezero*

\begin{proof}
    For this proof, we restate the definitions. The utility function is given by 

    $$u_p(x,r) = |v_x(p_x) - v_x(r)|, $$
    where $v_x(r)$ is the value of the element $r$ in $x$, $p_x \in \range$ is the $p$-th percentile element, i.e., the element with $k-th$ largest value in $x$ with $k = \ceil{\frac{p (n+1)}{100}}$.

    Thus, the element local sensitivity $LS^{u_p}(x,0,r)$ for $u_p$ is rewritten as:
    \begin{align*}
        LS^{u_p}(x,0,r) & = 
        \max(\underbrace{|v_x(p_x)-v_x(r)|}_{(1)},\\
        & \underbrace{||v_x(p_x) - v_x(r)|-|v_x(p_x^+)-v_x(r)||}_{(2)}, \\
        & \underbrace{||v_x(p_x) - v_x(r)|-|v_x(p_x^-)-v_x(r)||}_{(3)}, \\
        & \underbrace{p(x,r)}_{(4)},\underbrace{q(x,r)}_{(5)})
    \end{align*}
    where
    \begin{equation*}
        p(x,r) = \max
		\begin{cases}
            \underbrace{\Lambda - v_x(p_x^+)}_{(4.1)} & \text{if} \ r = p  \\
			\underbrace{\Lambda - v_x(r)}_{(4.2)} & \text{else if} \ v_x(r) >= v_x(p)  \\
			\underbrace{|r(x) -v_x(r) - \Lambda  |}_{(4.3)} & \text{otherwise}
        \end{cases}, 
    \end{equation*}
    \begin{equation*}
        q(x,r) = \max
        \begin{cases}
            \underbrace{v^x(p_x^-)}_{(5.1)} & \text{if} \ r = p \\
            \underbrace{|v_x(r) - s(x)|}_{(5.2)} & \text{if} \ v_x(r) >= v_x(p) \\ 
            \underbrace{v^x(r)}_{(5.3)} & \text{otherwise}
        \end{cases},        
    \end{equation*}
    where $p_x^+ \in \range$ and $p_x^- \in \range$ are the elements with the $(k+1)$-th and $(k-1)$-th largest value, respectively, $r(x) = v_x(p_x)+ v_x(p_x^+)$ and $s(x) = v_x(p_x) - v_x(p_x^-)$.

    We first prove that for every case above there is a neighboring database $y$ of $x$ which $|u(x,r)-u(y,r)|$ is larger than it:

    (1) Let $y$ be a database obtained from $x$ by changing the value of $r$ to $v_x(p_x)$, $v_y(r)=v_x(p_x)$. Thus $|u(x,r)-u(y,r)|=||v_x(p_x)-v_x(r)|-|v_x(p_x)-v_y(r)||=|v_x(p_x)-v_x(r)|$  as $v_y(p_y)=v_x(p_x)$ and $v^y(r)=v_x(p_x)$.

    (2) Let $y$ be a database obtained from $x$ by changing the value of $p_x$ to $\Lambda$, $v_y(p_x)=\Lambda$. This way, $v_y(p_y)=v_x(p_x^+)$ and $v_y(r)=v_x(r)$. Then $|u(x,r)-u(y,r)|=||v_x(p_x)-v_x(r)|-|v_y(p_y)-v_y(r)|| = ||v_x(p_x) - v_x(r)|-|v_x(p_x^+)-v_x(r)||$.

    (3) Let $y$ be a database obtained from $x$ by changing the value of $p_x$ to $0$, $v_y(p_x)=0$. This way, $v_y(p_y)=v_x(p_x^-)$ and $v_y(r)=v_x(r)$. Then $|u(x,r)-u(y,r)|=||v_x(p_x)-v_x(r)|-|v_y(p_y)-v_y(r)|| = ||v_x(p_x) - v_x(r)|-|v_x(p_x^-)-v_x(r)||$.

    (4.1) Let $y$ be a database obtained from $x$ by changing the value of $r$ to $\Lambda$, $v_y(r)=\Lambda$. In this case, $r=p_x$, so $v_y(p_y)=v_x(p_x^+)$. Then, we have that $|u(x,r)-u(y,r)|=||v_x(p_x)-v_x(r)|-|v_y(p_y)-v_y(r)|| = ||v_x(p_x) - v_x(p_x)|-|v_x(p_x^+)-\Lambda|| = |v_x(p_x^+) - \Lambda| = \Lambda = v_x(p_x^+)$.

    (4.2) Let $y$ be a database obtained from $x$ by changing the value of $r$ to $\Lambda$, $v_y(r)=\Lambda$. In this case, $v_x(r) >= v_x(p)$, so $v_y(p_y)=v_x(p_x)$. Then, we have that $|u(x,r)-u(y,r)|=||v_x(p_x)-v_x(r)|-|v_y(p_y)-v_y(r)|| = |v_x(r)-v_x(p_x) - \Lambda + v_x(p_x)| = \Lambda - v_x(r)$.

    (4.3) Let $y$ be a database obtained from $x$ by changing the value of $r$ to $\Lambda$, $v_y(r)=\Lambda$. In this case, $v_x(r) < v_x(p)$, so $v_y(p_y)=v_x(p_x^+)$. Then, we have that $|u(x,r)-u(y,r)|=||v_x(p_x)-v_x(r)|-|v_y(p_y)-v_y(r)|| = |v_x(p_x)-v_x(r) - \Lambda + v_x(p_x^+) |$.

    (5.1) Let $y$ be a database obtained from $x$ by changing the value of $r$ to $0$, $v_y(r)=0$. In this case, $r=p_x$, so $v_y(p_y)=v_x(p_x^-)$. Then, we have that $|u(x,r)-u(y,r)|=||v_x(p_x)-v_x(r)|-|v_y(p_y)-v_y(r)|| = |v_x(p_x^-)-0| = v_x(p_x^-)$.

    (5.2) Let $y$ be a database obtained from $x$ by changing the value of $r$ to $0$, $v_y(r)=0$. In this case, $v_x(r) >= v_x(p)$, so $v_y(p_y)=v_x(p_x^-)$. Then, we have that $|u(x,r)-u(y,r)|=||v_x(p_x)-v_x(r)|-|v_y(p_y)-v_y(r)|| = |v_x(r) - v_x(p_x) - v_x(p_x^-)|$.

    (5.3) Let $y$ be a database obtained from $x$ by changing the value of $r$ to $0$, $v_y(r)=0$. In this case, $v_x(r) < v_x(p)$, so $v_y(p_y)=v_x(p_x)$. Then, we have that $|u(x,r)-u(y,r)|=||v_x(p_x)-v_x(r)|-|v_y(p_y)-v_y(r)|| = |v_x(p_x)-v_x(r) - v_x(p_x)| = v_x(r)$.

    Now we prove that for every neighboring database $y$ of $x$, the exists a case where $|u(x,r)-u(y,r)|$ is smaller or equal than it. The database $y$ can be obtained from $x$ by choosing a element $r' \in \range$ and then choosing a new value $v_y(r')$ for it. We divide the set of neighboring databases in some cases:

    (i) The element $r'$ is such that $r' \neq r$. This operation cannot change the value of $r$, $v_y(r)=v_x(r)$. However, this operation can set the value of $p_y$. The element $r'$ can move freely along the domain $[0, \Lambda]$. However, there are only three possibilities for $p_y$: it can keep the value of $p_x$, it can change to the value of $p_x^+$ or it can change to $p_x^-$. 
    
    (i.i) if $v_y(p_y)=v_x(p_x)$ then $|u(x,r)-u(y,r)|=0$; 
    
    (i.ii) if $v_y(p_y)=v_x(p_x^+)$ then $|u(x,r)-u(y,r)|=||v_x(p_x)-v_x(r)|-|v_y(p_y)-v_y(r)|| = ||v_x(p_x)-v_x(r)|-|v_x(p_x^+)-v_x(r)||$ (case 2); and 
    
    (i.iii) if $v_y(p_y)=v_x(p_x^-)$ then $|u(x,r)-u(y,r)|=||v_x(p_x)-v_x(r)|-|v_y(p_y)-v_y(r)|| = ||v_x(p_x)-v_x(r)|-|v_x(p_x^-)-v_x(r)||$ (case 3).

    (ii) The element $r'$ is such that $r' = r$. We split this argument in two subcases:

    (ii.i) $r'= r = p_x$. Here we divide again in 3 subcases to set the value of $r'$ on $y$: 
    
    (ii.i.i) $v_y(r') = p_x$. It means that $x=y$ and $|u(x,r)-u(y,r)|=||v_x(p_x)-v_x(r)|-|v_y(p_y)-v_y(r)|| = ||v_x(p_x)-v_x(r)|-|v_x(p_x)-v_x(r)|| = 0$. 

    (ii.i.ii) $v_y(r')=v_y(r') \geq p_x$. So we have that the pian element of $y$ is at maximum $v_x(p_x^+)$ and that $v_y(r') \geq v_y(p_y)$. Thus $|u(x,r)-u(y,r)|=||v_x(p_x)-v_x(r)|-|v_y(p_y)-v_y(r)|| = |0-|v_y(p_y)-v_y(r)|| = v_y(r) - v_y(p_y) \leq \Lambda - v_x(p_x^+)$ (case 4.1). 

    (ii.i.iii) $v_y(r')=v_y(r') \leq p_x$. So we have that the pian element of $y$ is at minimum $v_x(p_x^-)$ and that $v_y(r') \leq v_y(p_y)$. Thus $|u(x,r)-u(y,r)|=||v_x(p_x)-v_x(r)|-|v_y(p_y)-v_y(r)|| = |0-|v_y(p_y)-v_y(r)|| = v_y(p_y) - v_y(r) \leq v_x(p_x^-)$ (case 5.1). 

    (ii.ii) $r' = r \neq p_x$. We divide in two subcases:

    (ii.ii.i) $v_x(r) \geq v_x(p_x)$. We divide again in three cases:

    (ii.ii.i.i) $v_y(r) = v_y(r') \geq v_x(r) = v_x(r')$. Thus $|u(x,r)-u(y,r)|=||v_x(p_x)-v_x(r)|-|v_y(p_y)-v_y(r)|| = ||v_x(p_x)-v_x(r)|-|v_y(p_y)-v_y(r)|| = v_x(p_x) - v_x(r) + v_y(r) - v_y(p_y)  \leq \Lambda - v_x(r)$ since $v_x(p_x) = v_y(p_y)$ (case 4.2). 

    (ii.ii.i.ii) $v_x(p_x) \leq v_y(r) = v_y(r') < v_x(r) = v_x(r')$. Thus $|u(x,r)-u(y,r)|=||v_x(p_x)-v_x(r)|-|v_y(p_y)-v_y(r)|| = |v_x(r)-v_x(p_x)-|v_x(p_x)-v_y(r)|| = |v_x(p_x) - v_x(r) + v_y(r) - v_x(p_x)|  = v_x(r) - v_y(r) \leq v_x(r) - v_x(p_x) \leq |v_x(p_x) - v_x(r)|$ (case 1). 

    (ii.ii.i.iii) $0 \leq v_y(r) < v_x(p_x)$. So we have that $v_y(p_y) = v_x(p_x^-)$. Thus $|u(x,r)-u(y,r)|=||v_x(p_x)-v_x(r)|-|v_y(p_y)-v_y(r)|| = |v_x(r)-v_x(p_x)-v_x(p_x^-)+v_y(r)| \leq |v_x(r)-v_x(p_x)-v_x(p_x^-)|$ (case 5.2). The last inequality follows by setting $v_y(r)=0$.

    (ii.ii.ii) $v_x(r) \leq v_x(p_x)$. We divide in three cases:

    (ii.ii.ii.i) $v_y(r) = v_y(r') > v_x(p_x)$. So we have that $v_y(p_y) = v_x(p_x^+)$. Thus $|u(x,r)-u(y,r)|=||v_x(p_x)-v_x(r)|-|v_y(p_y)-v_y(r)|| = |v_x(p_x) -v_x(r) -v_y(r)+v_x(p_x^+)|  \leq |v_x(p_x) -v_x(r) -\Lambda+v_x(p_x^+)$ (case 4.3). 

    (ii.ii.ii.ii) $v_x(r) = v_x(r') < v_y(r) = v_y(r') \leq v_x(p_x)$. So we have that $v_y(p_y) = v_x(p_x)$ Thus $|u(x,r)-u(y,r)|=||v_x(p_x)-v_x(r)|-|v_y(p_y)-v_y(r)|| = |v_x(p_x) -v_x(r)-v_x(p_x)+v_y(r)| = |v_y(r)-v_x(r)| \leq |v_x(p_x) - v_x(r)|$ (case 1).

    (ii.ii.ii.i) $0 \leq v_y(r) = v_y(r') \leq v_x(r) = v_x(r')$. So we have that $v_y(p_y) = v_x(p_x)$. Thus $|u(x,r)-u(y,r)|=||v_x(p_x)-v_x(r)|-|v_y(p_y)-v_y(r)|| = |v_x(p_x) -v_x(r) - v_x(p_x) + v_y(r)| = |v_y(r) - v_x(r)| \leq v_x(r)$ (case 5.3). 

\end{proof}

%%%%%%%%%%%%%%%%%%%%%%%%%%%%%%%%%%%%%%%%%%%    
\subsection{Proof of Lemma \ref{lemma:global_ebc}}
We first need to show the Lemma below.

\begin{restatable}{lemma}{lemmaebc}
    Let $G$ and $G'$ be two neighboring graphs and $v$ a node belonging to $V(G)$ and $V(G')$, we have that:
    \begin{align}
        \max_{G,G' | d(G,G') \leq 1} & |EBC^{G}(v) - EBC^{G'}(v)| \\ 
        & = \max{\left( d^{G}(v)(d^{G}(v)-1)/4, d^{G}(v) \right)},  
    \end{align}
    
    where $d^{G}(v)$ denotes the degree of $v$ in $G$, i.e., $d^{G}(v)=|N^{G}_v|$.
    \label{lemma:ebc}
\end{restatable}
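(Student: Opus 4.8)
The plan is to fix $G$ and $v$, set $d := d^{G}(v)$, and prove the statement in both directions: \textbf{(i)} $|EBC^{G}(v) - EBC^{G'}(v)| \le \max(\tfrac{d(d-1)}{4}, d)$ for every $G'$ with $d(G,G') \le 1$, and \textbf{(ii)} equality is realized by a suitable graph with $d^{G}(v)=d$ and one of its neighbours. Everything hinges on one elementary fact about the egonet $H := G[N_{v} \cup \{v\}]$: any two distinct vertices $x,y \in N_{v}$ are at distance $1$ or $2$ in $H$; if they are adjacent the pair contributes $0$ to $EBC^{G}(v)$, and if not, the length-$2$ path $\langle x, v, y\rangle$ is the unique geodesic through $v$, so $p_{xy}(v)=1$ and $q_{xy}(v)$ equals the number of common neighbours of $x$ and $y$ in $H$. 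Hence $EBC^{G}(v) = \sum \tfrac{1}{q_{xy}^{H}(v)}$ over the non-adjacent pairs of $N_{v}$, a sum of at most $\binom{d}{2}$ terms each lying in $(0,1]$.

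For \textbf{(i)} I would split on where the modified edge $e$ lies. If $e$ is neither incident to $v$ nor internal to $N_{v}^{G}$, then $H$ is unchanged and the difference is $0$. If $e = \{u,w\}$ with $u,w \in N_{v}^{G}$, then $N_{v}$ is unchanged; inserting $e$ turns the pair $\{u,w\}$ adjacent (a decrease of at most $1$) and can only enlarge the counts $q_{uy}^{H}(v), q_{wy}^{H}(v)$ of the $2(d-2)$ pairs meeting $\{u,w\}$ (each decrease at most $\tfrac12$, since the counts are $\ge 1$), affecting no other pair, so $|EBC^{G}(v) - EBC^{G'}(v)| \le 1 + (d-2) = d-1$; deletion is symmetric. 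If $e = \{v,w\}$ with $G' = G + \{v,w\}$ and $d^{G'}(v) = d+1$: the egonet gains exactly the $d$ new pairs $\{w,x\}$, $x \in N_{v}^{G}$, each contributing at most $1$, whereas every old pair's term can only decrease (adding $w$ can create new common neighbours but never shorten a distance among old vertices) and there are $\binom{d}{2}$ old pairs, so the total decrease is at most $\tfrac12\binom{d}{2} = \tfrac{d(d-1)}{4}$; thus $EBC^{G'}(v) - EBC^{G}(v) \in [-\tfrac{d(d-1)}{4},\, d]$. The case $G' = G - \{v,w\}$ follows by applying this computation to the pair $(G',G)$, whose relevant degree is $d-1 \le d$. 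Combining the cases yields $|EBC^{G}(v)-EBC^{G'}(v)| \le \max(\tfrac{d(d-1)}{4}, d)$.

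For \textbf{(ii)} I would exhibit two explicit gadgets, one tight for each branch of the maximum. If $\tfrac{d(d-1)}{4} \ge d$ (that is, $d \ge 5$), let $G$ be the star with centre $v$ and leaves $v_{0},\dots,v_{d-1}$ together with one extra vertex $b$ joined to every $v_{i}$ but not to $v$; then $EBC^{G}(v) = \binom{d}{2}$, while in $G' := G + \{v,b\}$ each pair $\{v_{i},v_{j}\}$ gains $b$ as a second common neighbour, so $EBC^{G'}(v) = \tfrac12\binom{d}{2}$ and the difference is exactly $\tfrac{d(d-1)}{4}$. If $\tfrac{d(d-1)}{4} < d$ (that is, $d \le 4$), let $G$ be the clique on $\{v,w_{1},\dots,w_{d}\}$ together with an isolated vertex $z$, so $EBC^{G}(v)=0$ since the egonet is complete; in $G' := G + \{v,z\}$ each of the $d$ pairs $\{z,w_{i}\}$ has $v$ as its only common neighbour and contributes $1$, so $EBC^{G'}(v) = d$. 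In each case $G$ and $G'$ differ in a single edge and $d^{G}(v)=d$, which proves the bound is tight.

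I expect the main obstacle to lie in the bookkeeping of the two ``interior'' subcases of (i): one must pin down precisely which pairs change their geodesic count $q_{xy}(v)$ when an edge is inserted, and verify that no term can ever increase, so that the perturbation is one-signed and controlled by half the number of pairs it touches. The edge-incident-to-$v$ subcase is delicate for a different reason — the vertex set of the egonet itself changes — so newly created pairs must be accounted for separately from the perturbation of the surviving ones; isolating the clean two-sided estimate $EBC^{G'}(v) - EBC^{G}(v) \in [-\tfrac{d(d-1)}{4}, d]$ is the real content of that step.
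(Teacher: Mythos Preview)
Your proposal is correct and follows essentially the same route as the paper: split on whether the modified edge is incident to $v$, use that every non-adjacent pair in the egonet is at distance~$2$ with the unique geodesic through $v$ so that the contribution is $1/q_{xy}$, and bound each perturbed term via $\tfrac{1}{q}-\tfrac{1}{q+1}\le\tfrac12$. The paper's Case~1 is exactly your edge-incident-to-$v$ analysis, arriving at the same two-sided estimate $[-\tfrac{d(d-1)}{4},\,d]$; its Case~2 (edge not incident to $v$) is dismissed in one line, whereas you work it out and get the sharper $d-1$ bound. Your proposal is in fact more complete than the paper's proof on two counts: you treat the deletion $G'=G-\{v,w\}$ explicitly (the paper fixes $E(G')=E(G)\cup\{e\}$ and never returns to the other direction), and you supply the tightness gadgets for both branches of the $\max$, which the paper's proof omits entirely.
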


\begin{proof}
    Let $\Delta (v)$ be defined as
    \begin{align*}
        \Delta (v) &= \max_{G,G' | d(G,G')} |EBC^{G}(v) - EBC^{G'}(v)| \\
        &= \max_{G,G' | d(G,G')} \left| \sum_{x,y \in N_v^{G} | x \neq y} b_{xy}^{G}(v) - \sum_{x,y \in N_v^{G'} | x \neq y} b_{xy}^{G'}(v) \right| 
    \end{align*}

    where $b_{uy}^{G'}(v) = g_{uy}^{G'}(c) / g_{uy}^{G'}$. Without loss of generality, let $e \in V(G)$ the edge that belongs to $G'$ and not to $G$, i.e., $E(G') = E(G) \cup \{ e \}$. We analyse two cases for $e$:        

    \textbf{Case (1).} One end of $e$ is $v$, $e=(v u)$, i.e., $N(v)^{G'} = N(v)^{G} \cup \{ u \}$. Since $e$ does not belong to $G$ then $u$ is not a neighbor of $v$ in $G$ which means that the terms $b_{uy}$ for all $y \in N_v^{G}$ are the only terms that do not exist on the expression for $EBC^{G}(v)$. So we rewrite $EBC^{G'}(v)$ as $\sum_{x,y \in N_v^{G'} | x \neq y} b_{xy}^{G}(v) + \sum_{y \in N_v^{G}} b_{uy}^{G'}(v)$ and $\Delta (v)$ as

    \begin{align*}            
        & \max_{G,G' | d(G,G')} \vast| \sum_{x,y \in N_v^{G} | x \neq y} b_{xy}^{G}(v) - \sum_{x,y \in N_v^{G} | x \neq y} b_{xy}^{G'}(v) \\
        & \quad \quad \quad \quad \quad - \sum_{y \in N_v^{G}} b_{uy}^{G'}(v) \vast| \\ 
        & = \max_{G,G' | d(G,G')} \left| \sum_{x,y \in N_v^{G} | x \neq y} (b_{xy}^{G}(v) - b_{xy}^{G'}(v)) - \sum_{y \in N_v^{G}} b_{uy}^{G'}(v) \right|
    \end{align*}
    
    We find bounds for $\sum_{y \in N_v^{G}} b_{uy}^{G'}(v)$. $b_{uy}^{G'}(v)$ is non-negative as $g_{uy}^{G'}$ is positive and $g_{uy}^{G'}(c)$ is non-negative. $b_{uy}^{G'}(v) \leq 1$ since $g_{uy}^{G'} \geq g_{uy}^{G'}(c)$. Moreover there are $|N_v^{G}| = d$ pairs $u,y$ since $y \in N(v)^G$. Thus 

    $$0 \leq \sum_{y \in N_v^{G}} b_{uy}^{G'}(v) \leq d$$

    Now we find bounds for $\sum_{x,y \in N_v^{G} | x \neq y} (b_{xy}^{G}(v) - b_{xy}^{G'}(v))$. If a geodesic path from $x$ to $y$ ($x,y \in N_v^{G}$) has size $1$ in $G'$, i.e., the edge $(xy)$ belongs to $E(G')$, then there is only one geodesic path and it does not contain the central node $v$ which implies that $b_{xy}^{G'}(v) = 0$. That also holds for $G$ since the edge $(xy)$ also exists in $G$ so $b_{xy}^{G}(v) = 0$. Therefore $b_{xy}^{G}(v) - b_{xy}^{G'}(v) = 0$ for a pair $x,y \in N_v^{G}$ at distance $1$. Also, there is no pairs of nodes $x,y$ at distance $3$ or more since it exist the path $<x \, v \, y>$ in both $G$ and $G'$.
    
    Consider a pair of nodes $x,y \in N_v^{G}$ where $x$ is at distance $2$ from $y$ in $G'$. If none of the geodesic paths from $x$ to $y$ contains $u$ in $G'$, then the number of geodesic paths from $x$ to $y$ (containing $v$ or not) does not change from $G$ to $G'$. So we have that $b_{xy}^{G}(v) - b_{xy}^{G'}(v) = 0$. Thus we are interested in the case that a given pair $x,y \in N_v^{G}$ is at distance $2$ where $u$ belongs to a geodesic path from $x$ to $y$ in $G'$ ( consequently, also in $G'$). All geodesic paths from $x$ to $y$ from $G$ are preserved in $G'$ as no edges were removed. But there is a new path $< x u y>$ in $G'$ so $g_{xy}^{G'} = g_{xy}^{G} + 1$ and as there is only one path $< x v z >$ that contains the central node $V$ in $G$ and $G'$, $g_{xy}^{G}(c) = g_{xy}^{G'}(c) = 1$. Then

    \begin{align}
        b_{xy}^{G}(v) - b_{xy}^{G'}(v) = \frac{1}{g_{xy}^{G}} -\frac{1}{g_{xy}^{G}+1}
    \end{align}
    
    Note that $(1 / g_{xy}^{G}) -(1 / (g_{xy}^{G}+1))$ is monotonically decreasing on $g_{xy}^{G}$ since 
    
        $$\frac{d}{d g_{xy}^{G}} \left[ \frac{1}{g_{xy}^{G}} - \frac{1}{g_{xy}^{G}+1} \right] = - \frac{1}{(g_{xy}^{G}+1)^2} - \frac{1}{(g_{xy}^{G})^2} < 0$$

    so since $b_{xy}^{G}(v) \leq 1$, we have $b_{xy}^{G}(v) - b_{xy}^{G'}(v) \leq  1/2$. Besides that, we count how many possible path of the form $<x u y>$ for $x,y \in N_c^{G}$ where $x \neq y$. Since there are $d = |N_c^{G}|$ there exists at most $\binom{d}{2} = d(d-1)/2$ of those paths. Thus we have:

    $$ 0 \leq \sum_{x,y \in N_v^{G} | x \neq y} (b_{xy}^{G}(v) - b_{xy}^{G'}(v)) \leq \frac{d(d-1)}{2} . \frac{1}{2} = \frac{d(d-1)}{4}$$

    Thus

    \begin{align*}
        & d \leq \sum_{x,y \in N_v^{G} | x \neq y} (b_{xy}^{G}(v) - b_{xy}^{G'}(v)) - \sum_{y \in N_v^{G}} b_{uy}^{G'}(v) \leq \frac{d(d-1)}{4} \\
        & \implies \left| \sum_{x,y \in N_v^{G} | x \neq y} (b_{xy}^{G}(v) - b_{xy}^{G'}(v)) - \sum_{y \in N_v^{G}} b_{uy}^{G'}(v) \right| \\
        & \quad \quad \leq \max{\left( \frac{d(d-1)}{4}, d \right)} \\
        & \implies \Delta (x) = \max{\left( \frac{d(d-1)}{4}, d \right)}
    \end{align*}

    \textbf{Case (2):} None of the ends of $e$ is $v$. We omit this part of the proof since it has similar reasoning to case 1 and it yields a lower sensitivity than case 1.
\end{proof}

Lemma \ref{lemma:global_ebc} is a a direct consequence of Lemma \ref{lemma:ebc} given that $\Delta G$ is the degree of the node with largest degree.

\subsection{Proof of Lemma \ref{lemma:admissible_delta_ebc}}

\lemmaebclocalsensitivity*

\begin{proof}
    We show that function $\delta^{EBC}(G,t,v)$ is admissible. 
    First, we show that $\delta^{EBC}(G,0,v) = \left( d(d-1)/4, d \right) \geq LS^{EBC}(G,0,v)$ where $d$ is the degree of $v$. Lemma \ref{lemma:ebc} proves that for every pair of neighboring graphs $G',G^*$, $|EBC^{G'}(v) - EBC^{G^*}(v)| = \max{\left( d(d-1)/4, d \right)}$. By fixing $G$, we obtain that 
    
    \begin{align*}
        LS^{EBC}(G,0,v) & = \max_{G', d(G, G') \leq 1} |EBC^{G}(v) - EBC^{G'}(v)| \\
        & \leq \max{\left( d(d-1)/4, d \right)}
        = \delta^{EBC}(G,0,v)
    \end{align*}

    It remains to demonstrate that $\delta^{EBC}(G,t,v) \leq \delta^{EBC}(G',t+1,v)$ for all neighboring graphs $G'$. We first show that $|d^{G} - d^{G'}| \leq 1$ where $d^{G}$ and $d^{G'}$ are the degree of $v$ in $G$ and $G'$ resp. $G$ and $G'$ differ in just on edge, say $e$. Suppose $e$ belongs to $G$ and not to $G'$, if $e$ is not incident on $v$ in $G$ then $|d^G - d^{G'}|=0$. Otherwise, if $e$ is incident on $v$ in $G$ then $d^{G} - d^{G'}=1$. So $d^{G} - d^{G'} \leq 1$. By symmetry,  $d^{G'} - d^{G} \leq 0$ holds. Then $|d^{G} - d^{G'}| \leq 1$            
    
    Applying this last fact to $\delta^{EBC}(G,t,v)$ we have:

    \begin{align*}
        \delta^{EBC}(G,t,v) & = \max{\left( \frac{(d^{G}+t)(d^{G}+t-1)}{4}, d^{G}+t \right)}\\
        & \leq \max{\left( \frac{(d^{G'}+t+1)(d^{G}+t)}{4}, d^{G}+t+1 \right)} \\
        & = \delta^{EBC}(G',t+1,v)
    \end{align*}
\end{proof}

%%%%%%%%%%%%%%%%%%%%%%%%%%%%%
\subsection{Proof of Lemma \ref{lemma:local_Infogain0}}

\lemmalocalinfogain*

\begin{proof}

    First, we provide some properties for function $f$:

    \begin{enumerate}
        \item The domain of $f$ is composed by integers greater or than $0$;
        \item $f$ is non-negative over its domain since $f(0)=0$ and, for $x \geq 1$, both terms $x.log((x+1)/x)$ and $log(x+1)$ are positive;
        \item $f$ is non-decreasing over its domain as, for $x \geq 1$, 
        \begin{equation*}
            f'(x) = \log \left( \frac{x+1}{x} \right) > 0
        \end{equation*}
        and $f(0)=0 \leq f(x)$, for $x \geq 1$.
    \end{enumerate}

    Also, we list those properties for function $g$.
    \begin{enumerate}
        \item The domain of $g$ is composed by integers greater or than $0$;
        \item $g$ is non-positive as $g(0)=0$ and $g(1)=0$ and, for $x \geq 2$, both terms $x.log((x-1)/x)$ and $log(x-1)$ are negative;
        \item $g$ is non-increasing over its domain as, for $x \geq 2$
        \begin{equation*}
            g'(x) = \log \left( \frac{x-1}{x}  \right)
        \end{equation*}
        and $g(0)=g(1)=0 \leq g(x)$, for $x \geq 2$.
    \end{enumerate}

    Let $\TT$ be some dataset. Neighboring datasets to $\TT$ differ in at most one tuple $r$. Note that the only terms in the sum of $IG(\TT, A)$ (Equation \ref{line:utility_ig}) that change are those related to the value that $r$ take on $A$, say $j$. So we can focus only on:

    \begin{align*}
        IG_j(\TT, A) & = \sum_{c \in C} \tjc . \log \frac{\tjc}{\tj} \\
        & = \sum_{c \in C} \tjc \log \tjc - \tj log \tj \\
    \end{align*}

    Let $\TT'$ be some dataset where $d(\TT, \TT') \leq 1$. Thus the datasets $\TT$ and $\TT$ differ in one tuple $r$ in one of the two cases: (1) $\TT'$ is obtained from $\TT$ by removing a tuple $r$, i.e., $\TT' = \TT \setminus \{r\}$  or (2) $\TT'$ is obtained from $\TT$ by adding a tuple $r$, i.e., $\TT' = \TT \cup \{r\}$.
    
    Consider case (1), assume $\TT' = \TT \setminus \{r\}$. The removal of the a tuple $r$ with $r_A = j$ and $r_C = c$, for some $j \in A$ and $c \in C$, decreases $\tjc$ by one which modify one term in the left hand sum and also $\tj$ decreases by one. So we have:

    \begin{align*}
        |u(\TT, r) - u(\TT', r)| & = |(\tjc -1).\log(\tjc - 1) - \tjc.\log(\tjc) \\
        & \quad + \tj.\log{\tj} - (\tj - 1).\log(\tj-  1)| \\
        & = |\tjc.\log \left( \frac{\tjc - 1}{\tjc} \right) -\log(\tjc - 1) \\
        & \quad - \tj.\log\left( \frac{\tj - 1}{\tj} \right) - \log(\tj-  1)| \\
        & = |g(\tjc) - g(\tj)| = g(\tjc) - g(\tj) \\
    \end{align*}

    The last equality comes from the fact that $\tj \geq \tjc \geq 0 $ and that $g(x)$ is non-increasing over its domain.

    For case (2), assume that $\TT' = \TT \cup \{r\}$. With analogous reasoning we get that:

    \begin{align*}
        |u(\TT, r) - u(\TT', r)| = |f(\tjc) - f(\tj)|  = f(\tj) - f(\tjc) \\
    \end{align*}

    The last equality holds since $\tj \geq \tjc \geq 0$ and $f(x)$ is non-decreasing over its domain.

    Therefore, our local sensitivity expression is given as:

    \begin{align*}
        LS^{IG}(\TT,0,A) & = \max_{c \in C, j \in A} \max(g(\tjc) - g(\tj), &
        \\ & \quad \quad \quad \quad \quad \quad \quad f(\tj) - f(\tjc)) \\
        & = h(\tj, \tjc)
    \end{align*}
\end{proof}

%%%%%%%%%%%%%%%%%%%%%%%%

\subsection{Proof of Lemma \ref{lemma:local_Infogain}}    
    \lemmalocainfogaint*
    \begin{proof}

        By the definition of element local sensitivity (Definition \ref{def:item_local}) and lemma \ref{lemma:local_Infogain0}, we get that:

        \begin{align*}
            LS^{IG}(\TT,t,A) & = \max_{\TT'| d(\TT,\TT') \leq t} LS^{u}(\TT',0,A) \\
            & =  \max_{\TT'| d(\TT,\TT') \leq t} \max_{c \in C, j \in A} h(\tjx{\TT'}, \tjcx{\TT'}) \\             
            & =  \max_{c \in C, j \in A} \max_{\TT'| d(\TT,\TT') \leq t} h(\tjx{\TT'}, \tjcx{\TT'}) \\             
        \end{align*}

        where $\tjx{\TT} = \{ r \in \TT : r_A = j \}$ and $\tjcx{\TT} = \{ r \in \TT : r_A = j \wedge r_c = c \}$. So we need to choose $\TT'$ ($d(\TT,\TT') \leq t$) where the counts $\tjx{\TT'}$ and $\tjcx{\TT'}$ maximizes $h$ for a given $c \in C$ and $j \in A$.
        
        To obtain $\TT'$ from $\TT$, one can make up to $t$ operations of insertion or removal of tuples $r = <r_1, r_2, \dots, r_d>$. Note that $h(\tjx{\TT'}, \tjcx{\TT'})$ depends only on the values that $r$ takes on attribute $A$ and $C$. Thus we can group the operations by their influence on $\tjx{\TT'}$ and $\tjcx{\TT'}$ in $4$ types:

        \begin{enumerate}
            \item Insertion of a tuple $r$ where $r_A = j$ and $r_C = c$, e.g., inserting $r$ in $\TT$ increases the count $\tjx{\TT}$ and $\tjcx{\TT}$ by one;
            \item Insertion of a tuple $r$ where $r_A = j$ and $r_C \neq c$, e.g., inserting $r$ in $\TT$ increases the count $\tjx{\TT}$ by one and $\tjcx{\TT}$ remains the same;
            \item Removal of a tuple $r$ where $r_A = j$ and $r_C = c$, e.g., removing $r$ from $\TT$ decreases the count $\tjx{\TT}$ and $\tjcx{\TT}$ by one;
            \item Removal of a tuple $r$ where $r_A = j$ and $r_C \neq c$, e.g., removing $r$ from $\TT$ decreases the count $\tjx{\TT}$ by one and $\tjcx{\TT}$ remains the same.
        \end{enumerate}        

        Therefore, given $\TT$, $c \in C$ and $j \in A$ and $t$, we show that Candidates$(\TT, t, j, c)$  returns a set of pairs of counts $(\tjx{\TT'}, \tjcx{\TT'})$ ($d(\TT, \TT') = t$) such that:

        \begin{align*}
            & \max_{\TT'| d(\TT,\TT') = t} h(\tjx{\TT'}, \tjcx{\TT'})  = \max_{(a,b) \in Candidates(\TT, t, j, c)} h(a,b) \\
        \end{align*}

        It suffices to show the following properties: 1) $Candidates(\TT, t, j, c) \subseteq \{(\tjx{\TT'}, \tjcx{\TT'}) | d(\TT,\TT') = t \} $ and; 2) for all pairs $(a,b)$ that belongs to $\{(\tjx{\TT'}, \tjcx{\TT'}) | d(\TT,\TT') = t \}$ and does not belong to $Candidates(\TT, t, j, c)$, $h(a,b) \leq h(\tjx{\TT^*}, \tjcx{\TT^*})$ for some $\TT^*$ such that $d(\TT,\TT^*) = t$ holds. The latter means that we can safely discard this pair $(a,b)$ as it is not maximum because the value $h(a,b)$ is smaller than some other pair. 
        
        We demonstrate those property $1$ by induction on $t$.

        \textbf{Basis:} for $t=0$, 
        \begin{align*}
            Candidates(\TT, 0, j, c) & = \{ (\tjx{\TT}, \tjcx{\TT}) \} \\            
            & = \{(\tjx{\TT'}, \tjcx{\TT'}) | d(\TT,\TT') = 0 \} \\
        \end{align*}

        \textbf{Inductive step:} Suppose that $Candidates(\TT, t, j, c) \subseteq \{(\tjx{\TT'}, \tjcx{\TT'}) | d(\TT,\TT') = t \}$ (property 1). 
        
        Let $\TT^*$ be a dataset ($d(\TT,\TT^*) = t+1$) such that it is obtained by applying and operation of type $2$ on a dataset $\TT'$ ($d(TT,TT')=t$). It implies that $(\tjx{\TT'} +1, \tjcx{\TT'})  \in  \{(\tjx{\TT^*}, \tjcx{\TT^*}) | d(\TT,\TT^*) = t+1 \}$. Now, suppose that the dataset $\TT^*$ ($d(\TT,\TT^*) = t+1$) is obtained by applying and operation of type $3$ on a dataset $\TT'$ ($d(\TT,\TT')=t$). If it exists a tuple $r$ where $r_A=j$ and $r_C=c$, i.e. $\tjx{\TT'} >0$ and $\tjcx{\TT'} > 0$, we get that $(\tjx{\TT'} -1, \tjcx{\TT'} -1)  \in  \{(\tjx{\TT'}, \tjcx{\TT^*}) | d(\TT,\TT^*) = t+1 \}$

        % Let $\TT'$ be a dataset ($d(\TT,\TT') \leq t$). By applying operations of type $2$ on $\TT'$, we obtain a dataset $\TT^*$ where $d(\TT', \TT^*) = 1$ and consequently $d(\TT, \TT^*) \leq t+1$. It implies that $(\tjx{\TT'} +1, \tjcx{\TT'})  \in  \{(\tjx{\TT'}, \tjcx{\TT^*}) | d(\TT,\TT^*) \leq t+1 \}$. By applying operations of type $3$, if exists a tuple $r$ where $r_A=j$ and $r_C=c$, i.e. $\tjx{\TT'} >0$ and $\tjcx{\TT'}$, we get that $(\tjx{\TT'} -1, \tjcx{\TT'} -1)  \in  \{(\tjx{\TT'}, \tjcx{\TT^*}) | d(\TT,\TT^*) \leq t+1 \}$
        
        By hypothesis, $(\tjx{\TT'}, \tjcx{\TT'}) \in Candidates(\TT, t, j, c)$. Thus, from lines $6$ to $11$ in the call for $Candidates(\TT, t+1, j, c)$, the algorithm produces and returns the pairs $(\tjx{\TT'}+1, \tjcx{\TT'})$ (if $\tjx{\TT'}<\ttt^{\TT}$) and $(\tjx{\TT'}-1, \tjcx{\TT'}-1)$ (if $\tjx{\TT'}>0$ and $\tjcx{\TT'}>0$). Thus $Candidates(\TT, t+1, j, c) \subseteq \{(\tjx{\TT'}, \tjcx{\TT'}) | d(\TT,\TT') = t+1 \}$. 
        
        Now we show property $2$. Let $\TT^*$ be a dataset ($d(T,T^*) = t$) such that $(\tjx{\TT^*},\tjcx{\TT^*}) \in \{(\tjx{\TT'}, \tjcx{\TT'}) | d(\TT,\TT') = t \}$ and $(\tjx{\TT^*},\tjcx{\TT^*}) \notin Candidates(\TT, t, j, c)$. Note that lines $8$ and $10$ of the Candidates algorithm mimic operations of type $2$ and $3$. All counts $(\tjx{\TT'}, \tjcx{\TT'})$ returned by $Candidates(\TT, t, j, c)$ comes from datasets $\TT'$ obtained from $\TT$ by the application of $t$ operations of type $2$ and $3$. Thus, if $(\tjx{\TT^*},\tjcx{\TT^*}) \notin Candidates(\TT, t, j, c)$, $\TT^*$ was obtained from $\TT$ by applying at least one operation of type $1$ or $4$. 
        
        Suppose that $\TT^*$ was obtained from $\TT$ by applying at least one operation of type $1$. Let $\TT^\dagger$ be a dataset obtained from $\TT$ from the same operations used to obtain $\TT^*$ except that we replace one operation of type $1$ to an operation of type $2$. So we have that $\tjx{\TT^*} = \tjx{\TT^\dagger}$ and $\tjcx{\TT^*} = \tjcx{\TT^\dagger} + 1$. Since $f$ is non-decreasing (see proof for Lemma \ref{lemma:local_Infogain0}):

        \begin{align*}
            f(\tjx{\TT^\dagger}) - f(\tjcx{\TT^\dagger}) & = f(\tjx{\TT^*}) - f(\tjcx{\TT^*}-1) \\
            & \geq f(\tjx{\TT^*}) - f(\tjcx{\TT^*})
        \end{align*}

        and as $g$ is non-increasing (see proof for Lemma \ref{lemma:local_Infogain0}):

        \begin{align*}
            g(\tjcx{\TT^\dagger}) - g(\tjx{\TT^\dagger}) & = g(\tjcx{\TT^*}-1) - g(\tjx{\TT^*}) \\
            & \geq g(\tjcx{\TT^*}) - g(\tjx{\TT^*})
        \end{align*}

        So $h(\tjx{\TT^*},\tjcx{\TT^*}) \leq h(\tjx{\TT^\dagger},\tjcx{\TT^\dagger})$ holds.

        Now suppose that $\TT^*$ was obtained from $\TT$ by applying at least one operation of type $4$. Similarly, Let $\TT^\dagger$ be a dataset obtained from $\TT$ from the same operations used to obtain $\TT^*$ except that we replace one operation of type $4$ to an operation of type $2$. we have that $\tjx{\TT^*} = \tjx{\TT^\dagger} - 2$ and $\tjcx{\TT^*} = \tjcx{\TT^\dagger}$. Since $f$ is non-decreasing (see proof for Lemma \ref{lemma:local_Infogain0}):

        \begin{align*}
            f(\tjx{\TT^\dagger}) - f(\tjcx{\TT^\dagger}) & = f(\tjx{\TT^*}+2) - f(\tjcx{\TT^*}) \\
            & \geq f(\tjx{\TT^*}) - f(\tjcx{\TT^*})
        \end{align*}

        and as $g$ is non-increasing (see proof for Lemma \ref{lemma:local_Infogain0}):

        \begin{align*}
            g(\tjcx{\TT^\dagger}) - g(\tjx{\TT^\dagger}) & = g(\tjcx{\TT^*}) - g(\tjx{\TT^*}+2) \\
            & \geq g(\tjcx{\TT^*}) - g(\tjx{\TT^*})
        \end{align*}

        Finally, $h(\tjx{\TT^*},\tjcx{\TT^*}) \leq h(\tjx{\TT^\dagger},\tjcx{\TT^\dagger})$ holds for some $\TT^\dagger$ such that $d(\TT,\TT^\dagger) = t$.
    \end{proof}

%%%%%%%%%%%%%%%%%%%%%%
\section{PrivateSQL's SQL query}
\label{ap:privateSQL}
The SQL query $\mathcal{Q}(u,v,c)$ over the table of nodes node(id), the table of all pairs of nodes node\_pair(id1, id2) and private table edge(a,b) is given as:

\definecolor{dkgreen}{rgb}{0,0.6,0}
\definecolor{ltgray}{rgb}{0.5,0.5,0.5}

\lstset{%
  	backgroundcolor=\color{white},
  	basicstyle=\footnotesize,
  	breakatwhitespace=false,
  	breaklines=true,
  	captionpos=b,
  	commentstyle=\color{dkgreen},
  	deletekeywords={...},
  	escapeinside={\%*}{*)},
  	extendedchars=true,
  	frame=single,
  	keepspaces=true,
  	keywordstyle=\color{blue},
  	language=SQL,
  	morekeywords={*,modify,MODIFY,...},
  	% numbers=left,
  	numbersep=15pt,
  	numberstyle=\tiny,
  	rulecolor=\color{ltgray},
  	showspaces=false,
  	showstringspaces=false, 
  	showtabs=false,
  	stepnumber=1,
  	tabsize=2,
  	title=\lstname
}

\begin{lstlisting}[language=sql]
SELECT COUNT(*)
FROM edge e1, edge e2,
	(SELECT np1.id1 AS idd1, np1.id2 as idd2, COALESCE (CNT_2,0) AS CNT_1
     FROM node_pair AS np1 LEFT OUTER JOIN 
        (SELECT np2.id1 AS id1, np2.id2 as 
                id2, COUNT (*) AS CNT_2
        FROM node_pair AS np2, edge e3
        WHERE e3.a = np2.id1 AND
              e3.b = np2.id2
        GROUP BY np2.id1, np2.id2) AS pair2
        ON np1.id1 = pair2.id1
        AND np1.id2 = pair2.id2) AS magic1,
    (SELECT e4.a AS m2id, 
            count(*) AS CNT_4
    FROM edge e4
    WHERE e4.b = c
    GROUP BY e4.a) AS magic2,
    (SELECT e5.a AS m3id, 
            count(*) AS CNT_5
    FROM edge e5
    WHERE e5.b = c
    GROUP BY e5.a) AS magic3,
    (SELECT node.id as m4id, 
            COALESCE(CNT_7, 0) as CNT_6
    FROM node left outer join 
        (SELECT e6.a AS m5id, 
                count(*) AS CNT_7
        FROM edge e6
        WHERE e6.b = c
        GROUP BY e6.a)
        AS magic5 on node.id=magic5.m5id
    WHERE magic5.CNT_7>0 OR
          node.id=c) AS magic4
WHERE e1.b = e2.a AND
      e1.a = u AND
      e2.b = v AND
      e1.a = idd1 AND
      e2.b = idd2 AND
      CNT_1 = 0 AND
      m2id = e1.a AND
      m3id = e2.b AND
      m4id = e2.a;
\end{lstlisting}
}

% Non-BibTeX users please use
% \begin{thebibliography}{}
% %
% % and use \bibitem to create references. Consult the Instructions
% % for authors for reference list style.
% %
% \bibitem{RefJ}
% % Format for Journal Reference
% Author, Article title, Journal, Volume, page numbers (year)
% % Format for books
% \bibitem{RefB}
% Author, Book title, page numbers. Publisher, place (year)
% % etc
% \end{thebibliography}

\end{document}